%%%%%%%%%%%%%%%%%%%%%%%%%%%%%%%%%%%%%%%%%%%%%%%%%%%%%%%%%%%%
%aps,preprint,preprintnumbers,amsmath,amssymb,floatfix
%article
%BCOR=10mm
\documentclass[a4paper,11pt,DIV=calc]{scrartcl}
\usepackage{hyperref}
\usepackage[utf8]{inputenc}
\usepackage{amsmath,amsfonts,amssymb,amsthm,mathtools}
\usepackage{amsmath}
\usepackage{amsfonts}
\usepackage{amssymb}
\usepackage{dsfont}
\usepackage{mathrsfs}
\usepackage[T1]{fontenc}
\usepackage[ngerman,english]{babel}
\usepackage[inline]{enumitem}
\usepackage{latexsym,graphicx}
\usepackage[all]{xy}
\usepackage{mathtools}
\usepackage{color}
\usepackage{authblk}
\usepackage{amscd}
\usepackage{microtype}
\usepackage{mathrsfs}
\numberwithin{equation}{section}

\usepackage{txfonts}

\setlength{\topmargin}{-0.5in}
\setlength{\textheight}{9.1in}
\setlength{\oddsidemargin}{-.1in}
\setlength{\textwidth}{6.6in}

\DeclareMathOperator{\tr}{Tr}
\DeclareMathOperator{\trs}{tr}
\DeclareMathOperator{\var}{\mathbf{Var}}

\newtheorem{theorem}{Theorem}[section]

\newtheorem{lemma}{Lemma}[section]

\theoremstyle{definition}
\newtheorem{remark}{Remark}[section]

\newcommand{\dda}{\mathrm{d}}
\newcommand{\de}{\,\dda}

%\newcommand{\trs}{\tr}

%%%%%%%%%%%%%%%%%%%%%%%%%

%\newtheorem{cor}{Corollary}
%\newtheorem{lem}{Lemma}

\theoremstyle{definition}

\newcommand{\beq}{\begin{equation}}
\newcommand{\eeq}{\end{equation}}
%\newcommand{\eps}{\varepsilon}
%%%%%%%%%%%%%%%%%%%%%%%%%
\usepackage{pdfsync}

\parindent 0pt
\parskip 10pt

\setcounter{tocdepth}{2}

\interfootnotelinepenalty=10000

\begin{document}
%\sffamily

\title{Gross-Pitaevskii Limit of a Homogeneous Bose Gas at Positive Temperature}

\author{Andreas Deuchert, Robert Seiringer}

\date{}

\maketitle

\begin{abstract} 
We consider a dilute, homogeneous Bose gas at positive temperature. The system is investigated in the Gross-Pitaevskii limit, where the scattering length $a$ is so small that the interaction energy is of the same order of magnitude as the spectral gap of the Laplacian, and for temperatures that are comparable to the critical temperature of the ideal gas. We show that the difference between the specific free energy of the interacting system and the one of the ideal gas is to leading order given by $4 \pi a \left( 2 \varrho^2 - \varrho_0^2 \right)$. Here $\varrho$ denotes the density of the system and $\varrho_0$ is the expected condensate density of the ideal gas. Additionally, we show that the one-particle density matrix of any approximate minimizer of the Gibbs free energy functional is to leading order given by the one of the ideal gas. This in particular proves Bose-Einstein condensation with critical temperature given by the one of the ideal gas to leading order. One key ingredient of our proof is a novel use of the Gibbs variational principle that goes hand in hand with the c-number substitution.
\end{abstract}

\setcounter{tocdepth}{2}
\tableofcontents

%%%%%%%%%%%%%%%%%%%%%%%%%%%%%%%%%%%%%%%%%%%%%%%%%%%%%%%%%%%
\section{Introduction and main results}
\label{sec:introductionandmainresults}
%%%%%%%%%%%%%%%%%%%%%%%%%%%%%%%%%%%%%%%%%%%%%%%%%%%%%%%%%%%
\subsection{Background and summary}
\label{sec:backgroundandsummary}
%%%%%%%%%%%%%%%%%%%%%%%%%%%%%%%%%%%%%%%%%%%%%%%%%%%%%%%%%%%
The experimental realization of the first Bose-Einstein condensate (BEC) in an alkali gas in 1995 \cite{WieCor1995, Kett1995} triggered numerous mathematical investigations on the properties of dilute Bose gases. The starting point was a work by Lieb and Yngvason \cite{LiYng1998} who proved a lower bound for the ground state energy of a dilute Bose gas in the thermodynamic limit. Together with the upper bound given in \cite{RobertGPderivation}, it rigorously establishes its leading order behavior. In the case of hard-core bosons, the correct upper bound had already been proven in 1957 by Dyson \cite{Dyson}. Also the next-to-leading order correction to the ground state energy predicted by Lee, Huang and Yang in 1957 \cite{LeeHuangYang1957} could recently be proven, see \cite{YauYin2009} for the upper bound and \cite{FournSol2019} for the lower bound.

Bose gases in experiments are usually prepared in a trapping potential and such a set-up is well-described by the Gross-Pitaevskii (GP) limit. As shown in \cite{RobertGPderivation,LiSei2002,LiSei2006,NamRouSei2016}, the ground state energy of a Bose gas in this limit is to leading order given by the minimum of the GP energy functional. Additionally, a convex combination of projections onto the minimizers of this functional approximates the one-particle density matrix of the gas to leading order. Also in the GP limit the next to leading order correction to the ground state energy predicted by Bogoliubov in 1947 could be justified \cite{BogGP}. The accuracy reached in this work allows for an approximate computation of the ground state wave function and for a characterization of the low lying excitation spectrum. The dynamics of a system in the GP limit, on the other hand, can be described by the time-dependent GP equation, which was established in \cite{ErdSchlYau2009,ErdSchlYau2010,BenOlivSchl2015,Pickl2015}. For a more extensive list of references we refer to \cite{Themathematicsofthebosegas,Rou2015,BenPorSchl2015}.

While ground states provide a good description of quantum gases at very low temperatures, positive temperature effects are crucial for a complete understanding of modern experiments. In this case one is interested in the free energy and the Gibbs state of the system rather than in its ground state energy and in the ground state wave function. For the dilute Bose gas in the thermodynamic limit, the leading order behavior of its free energy per unit volume has been established, see \cite{Yin2010} for the upper bound and \cite{Sei2008} for the lower bound. The techniques developed in \cite{LiYng1998,RobertGPderivation} have also been extended to treat fermions, both for the ground state energy \cite{LiSeiSol2005} and for the free energy at positive temperature \cite{RobertFermigas}. We mention also the papers \cite{LewinNamRougerie2015,LewinNamRougerie2017,LewinNamRougerie2018,FKSS2017} and \cite{FKSS2020} where Gibbs states of Bose gases with mean-field interactions are studied.

In a  recent work \cite{me}, the trapped Bose gas at positive temperature is studied in a combination of thermodynamic limit in the trap and GP limit. It was shown that the difference between the  free energy of the interacting system and the one of the ideal gas is to leading order given by the minimum of the GP energy functional. Additionally, the one-particle density matrix of any approximate minimizer of the Gibbs free energy functional is to leading order given by the one of the ideal Bose gas, but with the condensate wave function replaced by the minimizer of the GP functional. This in particular proves the existence of a BEC phase transition in the system. The proof of these statements relies heavily on the fact that particles in the thermal cloud have a much larger energy per particle, and therefore live on a much larger length scale than particles in the condensate. As a consequence, the interaction can be seen to leading order only in the condensate. The case of the homogeneous gas in a box, where the condensate and the thermal cloud live on the same length scale, was left as an open problem. 

In the present work we consider this case, that is, we consider a homogeneous Bose gas (a gas in a box) at positive temperature in the GP limit. In this system the condensate and the thermal cloud necessarily live on the same length scale and interactions between them are relevant. We prove similar statements as in the case of the trapped gas in \cite{me}, in particular, we show the existence of a BEC phase transition with critical temperature given by the one of the ideal gas to leading order.
%%%%%%%%%%%%%%%%%%%%%%%%%%%%%%%%%%%%%%%%%%%%%%%%%%%%%%%%%%%
\subsection{Notation}
\label{sec:notation}
%%%%%%%%%%%%%%%%%%%%%%%%%%%%%%%%%%%%%%%%%%%%%%%%%%%%%%%%%%%
For functions $a$ and $b$ of the particle number and other parameters of the system, we use the notation $a \lesssim b$ to say that there exists a constant $C>0$ independent of the parameters such that $a \leq C b$. If $a \lesssim b$ and $b \lesssim a$ we write $a \sim b$, and $a \simeq b$ means that $a$ and $b$ are equal to leading order in the limit considered. 
%%%%%%%%%%%%%%%%%%%%%%%%%%%%%%%%%%%%%%%%%%%%%%%%%%%%%%%%%%%
\subsection{The model}
\label{sec:model}
%%%%%%%%%%%%%%%%%%%%%%%%%%%%%%%%%%%%%%%%%%%%%%%%%%%%%%%%%%%
We consider a system of $N$ bosons confined to a three-dimensional flat torus $\Lambda$ of side length $L$ (we could set $L=1$ but we prefer to keep a length scale to explicitly display units in formulas). The one-particle Hilbert space is thus $\mathcal{H} = L^2(\Lambda,\de x)$, with $\de x$ denoting Lebesgue measure, and the Hilbert space of the $N$-particle system is the $N$-fold symmetric tensor product $\mathcal{H}_N = L^2_{\mathrm{sym}}(\Lambda^N, \de x)$. That is, $\mathcal{H}_N$ is the space of square integrable functions of $N$ variables that are invariant under exchange of any pair of variables. On $\mathcal{H}_N$ we define the Hamiltonian of the system by\footnote{In our units the mass is given by $m = 1/2$ and $\hbar = 1 = k_{\mathrm{B}}$, where $k_{\mathrm{B}}$ is Boltzmann's constant.}
\begin{equation}
	H_N = \sum_{i=1}^N -\Delta_i + \sum_{1 \leq i < j \leq N} v_N(d(x_i,x_j)).
	\label{eq:HN}
\end{equation}
Here $\Delta$ denotes the Laplacian on the torus and $d(x,y)$ is the distance between two points $x,y \in \Lambda$. The interaction potential is of the form
\begin{equation}
	v_N(d(x,y)) = \left( N/L \right)^2 v(N d(x,y) /L)
	\label{eq:vN}
\end{equation}
with a nonnegative, measurable function $v: [0,\infty) \mapsto [0,\infty]$, independent of $N$. A simple scaling argument shows that if $a_v$ is the scattering length of $v$, then the scattering length $a_N$ of $v_N$ is given by
\begin{equation}
	a_N = a_v L/ N.
	\label{eq:aN}
\end{equation}
The scattering length is a combined measure for the range and the strength of a potential and its definition is recalled in \cite[Appendix~C]{Themathematicsofthebosegas}. We are interested in the choice $a_{v} \sim 1$, i.e. $a_N/L \sim N^{-1}$. By definition, $v$ is allowed to take the value $+\infty$ on a set of positive measure which corresponds to hard core interactions. We will assume that $v$ vanishes outside the ball with radius $R_0$, that is, it is of finite range. 

In the concrete realization of $\Lambda$ as the set $[0,L]^3 \subset \mathbb{R}^3$, $\Delta$ is the usual Laplacian with periodic boundary conditions and the distance function $d(x,y)$ is given by $d(x,y) = \min_{k \in \mathbb{Z}^3} | x - y - k L |$. We also note that $v_N(d(x,y)) = \sum_{k \in \mathbb{Z}^3} v_N(| x - y - k L |)$ if $R_0 < N/2$.

The canonical free energy related to the Hamiltonian $H_N$ at inverse temperature $\beta$ is defined by
\begin{equation}
	F(\beta,N,L) = -\tfrac{1}{\beta} \ln\left( \tr_{\mathcal{H}_N} e^{-\beta H_N} \right).
	\label{eq:F}
\end{equation}
The trace in Eq.~\eqref{eq:F} is taken over $\mathcal{H}_N$. In the following we will drop the subscript $\mathcal{H}_N$ and write $\tr$ for this trace. By $F_0(\beta,N,L)$ we will denote the free energy of the ideal Bose gas, that is, the free energy for the Hamiltonian $H_N$ with $v = 0$. A useful characterization of the free energy \eqref{eq:F} is via the Gibbs variational principle. Let us denote by $\mathcal{S}_N$ the set of $N$-particle states on $\mathcal{H}_N$ with finite energy, that is, the set of trace-class operators $\Gamma_N$ on $\mathcal{H}_N$ with $0 \leq \Gamma_N \leq 1$, $\tr \Gamma_N = 1$ and $\tr [H_N \Gamma_N] < + \infty$\footnote{Here and in the following, we interpret $\tr[H \Gamma]$ for positive operators $H$ and $\Gamma$ as $\tr[H^{1/2} \Gamma H^{1/2}]$. This expression is always well-defined if one allows the value $+\infty$. In particular, finiteness of $\tr[H \Gamma]$ does not require the operator $H \Gamma$ to be trace-class, only that $H^{1/2} \Gamma H^{1/2}$ is.}. We also denote by $S(\Gamma_N) = - \tr[\Gamma_N \ln(\Gamma_N)]$ the von-Neumann entropy of a state $\Gamma_N \in \mathcal{S}_N$. The Gibbs free energy functional is given by
\begin{equation}
	\mathcal{F}(\Gamma_N) = \tr\left[ H_N \Gamma_N \right] - \tfrac{1}{\beta} S\left( \Gamma_N \right).
	\label{eq:Gibbsfreeenergyfunctional}
\end{equation}
Minimization of $\mathcal{F}$ over all states yields the free energy \eqref{eq:F}, i.e.,
\begin{equation}
	F(\beta,N,L) = \inf_{\Gamma_N \in \mathcal{S}_N} \mathcal{F}(\Gamma_N).
	\label{eq:Gibbsvariationalprinciple}
\end{equation}
 The unique minimizer of the Gibbs free energy functional is given by the canonical Gibbs state
\begin{equation}
	\Gamma_{N}^{\mathrm{G}} = \frac{e^{-\beta H_N}}{\tr\left[ e^{-\beta H_N} \right]}.
	\label{eq:Gibbsstate}
\end{equation}
Apart from the particle number $N$ and the scattering length $a_{v}$ of the unscaled potential $v$, our system depends on the density $\varrho=N/|\Lambda|$ and on the inverse temperature $\beta$. We are interested in the free energy of the system as $N$ tends to infinity and for temperatures that are comparable to or smaller than the critical temperature of the ideal Bose gas, or equivalently, such that $\beta \gtrsim \beta_{\mathrm{c}}$. Here $\beta_{\mathrm{c}} = (4 \pi)^{-1} (\zeta(3/2)/\varrho)^{2/3}$ denotes the inverse critical temperature of the ideal Bose gas in a box of side length $L$ and $\zeta$ denotes the Riemann zeta function. Since $a_N \sim L/N$ this limit can  be interpreted as a combined thermodynamic and GP limit, see also Remark~11 in Section~\ref{sec:maintheorem} below.

For a given state $\Gamma_N$ we denote by $\gamma_N$ its one-particle density matrix (1-pdm) which we define via its integral kernel 
\begin{equation}
	\gamma_N(x,y) = \tr \left[ a_y^* a_x \Gamma_N \right].
	\label{eq:1pdm}
\end{equation}
In the above equation $a_x^*$ and $a_x$ denote the usual creation and annihilation operators (actually operator-valued distributions) of a particle at point $x$, fulfilling the canonical commutation relations $\left[ a_x, a_y^* \right] = \delta(x-y)$. Here $\delta$ denotes the Dirac delta distribution. Equivalently, the integral kernel of $\gamma_N$ can be defined via the integral kernel $\Gamma_N(x_1,...,x_N;y_1,...,y_N)$ of the state $\Gamma_N$ by integrating out all but one coordinates and multiplying the result with $N$: 
\begin{equation}
	\gamma_N(x,y) = N \int_{\mathbb{R}^{3(N-1)}} \Gamma_N(x,q_1,...,q_{N-1};y,q_1,...,q_{N-1}) \de (q_1,...,q_{N-1}).
\end{equation}
By definition, a sequence of states $\Gamma_N$ shows BEC if the related 1-pdms $\gamma_N$ have at least one eigenvalue of order $N$, i.e.,
\begin{equation}
	\liminf_{N \to \infty} \frac{ \sup_{\Vert \phi \Vert = 1} \left\langle \phi, \gamma_N \phi \right\rangle }{N} > 0.
	\label{eq:DefBEC}
\end{equation}
%%%%%%%%%%%%%%%%%%%%%%%%%%%%%%%%%%%%%%%%%%%%%%%%%%%%%%%%%%%
\subsection{The ideal Bose gas on the torus}
\label{sec:supplementary}
%%%%%%%%%%%%%%%%%%%%%%%%%%%%%%%%%%%%%%%%%%%%%%%%%%%%%%%%%%%
In this section we recall some basic facts and formulas concerning the ideal Bose gas on a flat torus $\Lambda$.  Since no explicit formulas are available for the canonical ensemble we state all results for the grand canonical ensemble. This is justified because the discussion in the Appendix shows that the free energy and the expected number of particles in the condensate, when computed in the two ensembles, agree with a precision that is sufficient for our purposes. 

The expected number of particles in the condensate and in the thermal cloud (all particles outside the condensate) are given by
\begin{equation}
N_0^{\mathrm{gc}} = \frac{1}{e^{-\beta \mu_0} - 1} \quad \text{ and } \quad N_{\mathrm{th}}^{\mathrm{gc}} = \sum_{p \in \frac{2 \pi}{L} \mathbb{Z}^3 \backslash \{ 0 \} } \frac{1}{e^{\beta \left( p^2 - \mu_0 \right)} - 1},
\label{eq:gcparticlenumbers}
\end{equation}
respectively.  The chemical potential $\mu_0$ and the expected number of particles $\overline{N}$ in the system are related via the equation $\overline{N} = N_0^{\mathrm{gc}} + N_{\mathrm{th}}^{\mathrm{gc}}$. The relevant densities are denoted by $\overline{\varrho} = |\Lambda|^{-1} \overline{N}$, $\varrho_0^{\mathrm{gc}} = | \Lambda |^{-1} N_0^{\mathrm{gc}}$ and $\varrho_{\mathrm{th}}^{\mathrm{gc}} = | \Lambda |^{-1} N_{\mathrm{th}}^{\mathrm{gc}}$. The ideal Bose gas shows a BEC phase transition in the limit of large particle number. More precisely, for $\overline{N} \to \infty$ one has
\begin{equation}
\frac{N_0^{\mathrm{gc}}}{N} \simeq \left[ 1- \left( \frac{\beta_{\mathrm{c}}}{\beta} \right)^{3/2} \right]_+ \quad \text{ with } \quad \beta_{\mathrm{c}} = \frac{1}{4 \pi} \left( \frac{\overline{\varrho}}{\zeta(3/2)} \right)^{-2/3}
\label{eq:criticaltemp}
\end{equation}
and $\zeta$ the Riemann zeta function. Here $[x]_+ = \max\{ x,0 \}$ denotes the positive part. For inverse temperatures such that $\beta > \beta_{\mathrm{c}} (1 + \epsilon)$ with $\epsilon > 0$, the chemical potential is to leading order given by $\mu_0 \simeq - (\beta N_0^{\mathrm{gc}})^{-1} \sim - (\beta N)^{-1}$, while for $\beta < \beta_{\mathrm{c}} (1 - \epsilon)$ it scales as $\mu_0 \sim - \beta^{-1}$. Finally, the free energy of the system is given by
\begin{equation}
F^{\mathrm{gc}}_0(\beta,\mu_0,L) = \mu_0 \overline{ N } + \tfrac{1}{\beta} \sum_{p \in \frac{2 \pi}{L} \mathbb{Z}^3} \ln\left( 1 - e^{-\beta\left( p^2 - \mu_0 \right)} \right).
\label{eq:gcfreeenergy}
\end{equation}
%%%%%%%%%%%%%%%%%%%%%%%%%%%%%%%%%%%%%%%%%%%%%%%%%%%%%%%%%%%
\subsection{The main theorem}
\label{sec:maintheorem}
%%%%%%%%%%%%%%%%%%%%%%%%%%%%%%%%%%%%%%%%%%%%%%%%%%%%%%%%%%%
Our main result is the following statement:
\begin{theorem}
\label{thm:periodic}
	Assume that $v : [0,\infty) \to [0,\infty]$ is a nonnegative and measurable function with compact support. Denote by $\varrho_0(\beta,N,L)$ the expected condensate density in the canonical Gibbs state of the ideal Bose. In the combined limit $N \to \infty$, $\beta \varrho^{2/3} \sim 1$ and $a_N$ given by \eqref{eq:aN} with $a_{v} > 0$ fixed, we have
	\begin{equation}
		F(\beta,N,L) = F_0(\beta,N,L) + 4 \pi a_N | \Lambda | \left( 2 \varrho^2 - \varrho_0^2\left( \beta, N, L \right) \right) + O\left(L^{-2} N^{1-\alpha}\right) 
		\label{eq:main1}
	\end{equation}
	for some $\alpha>0$. Moreover, for any sequence of states $\Gamma_N \in \mathcal{S}_N$ with 1-pdms $\gamma_N$ and 
	\begin{equation}
		\mathcal{F}\left( \Gamma_N \right) = F_0(\beta,N,L) + 4 \pi a_N | \Lambda | \left( 2 \varrho^2 - \varrho_0^2\left( \beta, N, L \right) \right) \left( 1 + \delta(N) \right),
		\label{eq:main2}
	\end{equation}
	we have for some $\sigma > 0$
	\begin{equation}
		\left\| \gamma_N - \gamma_{N,0} \right\|_1 \leq O\left( N \left( N^{-\sigma} + \delta(N)^{1/8} \right) \right).
		\label{eq:main3}
	\end{equation}
	Here $\gamma_{N,0}$ denotes the 1-pdm of the canonical Gibbs state of the ideal gas in $\Lambda$ and $\Vert \cdot \Vert_1$ is the trace norm.
\end{theorem}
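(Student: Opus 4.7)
The plan is to establish \eqref{eq:main1} by proving matching upper and lower bounds for $F(\beta,N,L)$, and to derive \eqref{eq:main3} from \eqref{eq:main1} through a stability argument for approximate minimizers of the Gibbs free energy functional. I would carry out the main estimates in the grand canonical ensemble, where the ideal-gas formulas of Section~\ref{sec:supplementary} are explicit, and pass back to the canonical ensemble using the equivalence of ensembles sketched in the appendix. The leading interaction correction in \eqref{eq:main1} is of order $4\pi a_N|\Lambda|\varrho^2 \sim N/L^2$, so the target error $L^{-2} N^{1-\alpha}$ is small relative to the main term by a factor $N^{-\alpha}$.

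For the upper bound I would employ a Jastrow modification of the ideal Gibbs state. Let $f_N$ denote a truncated zero-energy scattering solution of $-\Delta + \tfrac{1}{2} v_N$ and set $J_N = \prod_{i<j} f_N(d(x_i,x_j))$. Inserting a suitably normalized version of $J_N\, \Gamma_{N,0}^{\mathrm{G}}\, J_N$ into \eqref{eq:Gibbsvariationalprinciple}, and using the scattering identity $\int v_N f_N^2 \simeq 8\pi a_N$ together with the direct plus exchange decomposition of the two-point density of the ideal Gibbs state, one computes that the interaction energy equals $4\pi a_N|\Lambda|(2\varrho^2 - \varrho_0^2) + o(N/L^2)$, once one subtracts the spurious self-interaction of the condensate (which cannot interact with itself). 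The additional kinetic and entropy costs generated by $J_N$ are subleading since $a_N \ll L$.

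For the lower bound I would follow the hint in the abstract and combine a Dyson-Lieb-Yngvason replacement of $v_N$ by a softer effective potential $U_N$ of range $R$ with $a_N \ll R \ll L$ and $\int U_N \simeq 8\pi a_N$, with a Lieb-Seiringer-Yngvason c-number substitution for the zero-momentum mode $a_0$. This reduces the problem to
\begin{equation*}
F \geq \inf_{z \in \mathbb{C}} \Bigl[ -\tfrac{1}{\beta}\ln \tr_{\mathcal{F}_+} e^{-\beta \mathbb{H}(z)}\Bigr] - \text{errors},
\end{equation*}
where $\mathbb{H}(z)$ is an effective Hamiltonian on the Fock space of non-zero momentum modes, containing the free kinetic part, a mean-field coupling of strength $\propto |z|^2$ to the thermal cloud, and a residual quartic interaction. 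The novel Gibbs-variational step alluded to in the abstract presumably enters here: one applies \eqref{eq:Gibbsvariationalprinciple} a second time to $\mathbb{H}(z)$ in order to separate the approximately quadratic Bogoliubov-type part (which can be diagonalized explicitly) from the quartic interaction among thermal particles, whose contribution can then be extracted by hand. Optimization in $|z|^2 \simeq \varrho_0$ reassembles the three contributions $\varrho_0^2 + 4\varrho_0\varrho_{\mathrm{th}} + 2\varrho_{\mathrm{th}}^2 = 2\varrho^2 - \varrho_0^2$. The main obstacle, in contrast with the trapped case of \cite{me} where the thermal cloud lives on a larger length scale than the condensate and thermal-thermal interactions are subleading, is the rigorous extraction of the term $2\varrho_{\mathrm{th}}^2$ with the correct Bose-exchange factor: since this is a leading-order effect here, it requires sharp a priori control on the number of excited particles and careful bookkeeping of the scattering-length renormalization for pairs of non-condensate particles.

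Finally, for \eqref{eq:main3} I would use the identity $\mathcal{F}(\Gamma_N) - F(\beta,N,L) = \tfrac{1}{\beta} S(\Gamma_N \,\Vert\, \Gamma_N^{\mathrm{G}})$, so that \eqref{eq:main2} combined with \eqref{eq:main1} bounds the relative entropy of $\Gamma_N$ with respect to the interacting Gibbs state. Combined with the analogous estimate for the Gibbs state itself (which yields $\Vert \gamma_N^{\mathrm{G}} - \gamma_{N,0}\Vert_1 = o(N)$ as a byproduct of the lower-bound proof), a chain of Pinsker's inequality and Cauchy-Schwarz estimates, applied after perturbing $H_N$ by a bounded one-body operator and using the duality $\Vert \gamma_N - \gamma_{N,0}\Vert_1 = \sup_{\Vert A\Vert_\infty \leq 1} |\tr[A(\gamma_N - \gamma_{N,0})]|$ to lift from expectation values to trace norm, yields \eqref{eq:main3} with the $N$-prefactor and the exponent $1/8$ arising from the successive square roots in this chain.
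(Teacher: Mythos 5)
Your upper bound coincides with the paper's: a Jastrow-type correlation factor applied to the eigenfunctions of the canonical ideal Gibbs state, with the scattering identity producing $4\pi a_N|\Lambda|(2\varrho^2-\varrho_0^2)$ from the direct-plus-exchange structure of the two-point density. Your lower bound also follows the same skeleton (Dyson lemma plus c-number substitution in the spirit of \cite{Sei2008}), with two caveats. First, the paper substitutes \emph{all} modes with $|p|<p_{\mathrm{c}}$, not just $p=0$; this is not cosmetic, because the subsequent rigorous first-order perturbation argument (the correlation inequality of \cite{RobertCorrelationinequ}) requires the reference state to factorize approximately under spatial localization, which fails for a quasi-free thermal state unless the slowly decaying low-momentum modes have been turned into coherent states. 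Second, there is no Bogoliubov diagonalization anywhere in the proof: the interaction energy is extracted by replacing $\Gamma_z$ by the ideal Gibbs state $\Gamma^0$ and controlling the error through a priori bounds on $\int S(\Gamma_z,\Gamma^0)\,\zeta_\Gamma(z)\,\mathrm{d}z$. The ``novel use of the Gibbs variational principle'' is also not a second minimization over an effective Hamiltonian; it is the entropy subadditivity estimate $S(\Gamma)\le \int S(\Gamma_z)\zeta_\Gamma\,\mathrm{d}z + S(\zeta_\Gamma)$, which lets one run the whole lower bound for an \emph{arbitrary} state $\Gamma$ with exactly $N$ particles rather than for the grand canonical Gibbs state.

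The genuine gap is in your derivation of \eqref{eq:main3}. The identity $\mathcal{F}(\Gamma_N)-F(\beta,N,L)=\tfrac{1}{\beta}S(\Gamma_N\Vert\Gamma_N^{\mathrm{G}})$ combined with \eqref{eq:main2} bounds the $N$-body relative entropy only by $\beta\, |\Lambda| a_N\varrho^2\,\delta(N)\sim \beta\varrho^{2/3}N^{1/3}\delta(N)$, which is not $o(1)$ for general $\delta(N)=o(1)$; Pinsker therefore gives no control on $\Vert\Gamma_N-\Gamma_N^{\mathrm{G}}\Vert_1$, and even if it did, passing from the $N$-body trace norm to the one-body trace norm loses a further factor of $N$. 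Likewise, the duality $\Vert\gamma_N-\gamma_{N,0}\Vert_1=\sup_{\Vert A\Vert\le 1}|\trs[A(\gamma_N-\gamma_{N,0})]|$ would require a Griffith argument uniformly over all bounded one-body perturbations $A$, i.e.\ a free-energy expansion for $H_N+\epsilon\sum_i A_i$ with errors uniform in $A$, which is not available. The paper's mechanism is different and two-pronged: (i) the lower bound is arranged to retain the positive term $\tfrac{1}{\beta}\int S(\Gamma_z,\Gamma^{0}_{\mathrm{c}})\,\zeta_\Gamma(z)\,\mathrm{d}z$, and a new coercivity inequality for the \emph{bosonic} (one-body) relative entropy, $\mathcal{S}(a,b)\gtrsim \Vert a-b\Vert_1^2/(\Vert 1+b\Vert\,\trs[a+b])$, converts the resulting bound directly into trace-norm control of $Q(\gamma_N-\gamma_{N,0})Q$ on the high momentum modes --- note that this inequality degenerates when $b$ has a macroscopic eigenvalue, so it cannot see the condensate; (ii) the condensate occupation $\langle\Phi|\gamma_N|\Phi\rangle$ is controlled by a Griffith argument in the single parameter $\lambda$ of the rank-one perturbation $\lambda\sum_i|\Phi\rangle\langle\Phi|_i$, for which the lower bound is proved with explicit $\lambda$-dependence and the second derivative of $F_0$ in $\lambda$ is the variance of $n_0$ in the canonical ideal gas, bounded via S\"ut\H{o}'s correlation inequality. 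Without some substitute for both of these ingredients --- in particular for the separate treatment of the condensate mode --- your argument for \eqref{eq:main3} does not close, and it is precisely this part that forces the lower bound to be proved for the generalized Hamiltonian $H_N^\lambda$ and for arbitrary states rather than for the Gibbs state alone.
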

The fact that the difference between the specific free energy of the interacting system and the one of the ideal gas is given by $4 \pi a ( 2 \varrho^2 - \varrho_0^2\left( \beta, N, L \right) )$ also holds for the dilute Bose gas in the thermodynamic limit, see \cite{Yin2010,Sei2008}. The formulas look the same because, as already mentioned above, our limit can be interpreted as a combined thermodynamic and dilute limit. We emphasize that \eqref{eq:main3} holds for any approximate minimizer of the Gibbs free energy functional in the sense of \eqref{eq:main2}, and not only for the interacting Gibbs state \eqref{eq:Gibbsstate} of the system. This in particular proves BEC for this class of states (see also Remark~9 below).

\subsubsection*{Remarks:}
\begin{enumerate}
	\item The constants in the error terms in \eqref{eq:main1} and \eqref{eq:main3} are uniform in the inverse temperature as long as $\beta \gtrsim \beta_{\mathrm{c}}$. The exponents $\alpha$ and $\sigma$ can be chosen as $\alpha = 4/6885 - \eta$ and $\sigma = 1/6885 - \eta$  for any  $\eta > 0$. (For $\eta \to 0$ the constants in the error terms in \eqref{eq:main1} and \eqref{eq:main3} blow up.) Our rate in Eq.~\eqref{eq:main1} is the same as the one for the lower bound for the dilute Bose gas in the thermodynamic limit \cite{Sei2008}. The known result for the ground state is implied by our result in the limit $\beta \to \infty$. The error term is worse, however. 
	\item The result in \eqref{eq:main3} does not assume translation invariance of the states $\Gamma_N$. If one assumes that $\Gamma_N$ is translation invariant the rate of convergence can be improved. In particular, one finds that the error term is bounded in terms of $\delta(N)^{1/2}$ instead of on $\delta(N)^{1/8}$ in this case.
	\item Our result is uniform in the unscaled scattering length $a_v$ as long as $a_{v} \in (0,d]$ with $0<d<\infty$. 
	\item For $\beta \sim \beta_{\mathrm{c}}$ and $a_N \sim L/N$, we have $F_0 \sim | \Lambda | \varrho^{5/3} = L^{-2} N^{5/3}$ for the free energy of the ideal gas, whereas the interaction energy is given by $| \Lambda | a_N \varrho^2 \sim L^{-2} N$. Up to this scale we control the free energy of the interacting gas.
	\item The interaction energy is for $\beta \leq \beta_{\mathrm{c}}$ given by $8 \pi | \Lambda | a_N \varrho^2$ to leading order, which has to be compared to $4 \pi | \Lambda | a_N \varrho^2$, its value at zero temperature. The additional factor of two is an exchange effect due to the symmetrization of the wave function, which only plays a role if the particles occupy two different one-particle orbitals. Above the critical temperature this is essentially always the case but particles inside a condensate do not experience this effect. This leads to the dependence of the interaction energy in Eq.~\eqref{eq:main1} on $\varrho_0(\beta,N,L)$.
	\item The free energy $F_0(\beta,N,L)$, the condensate density $\varrho_0(\beta,N,L)$ and the 1-pdm $\gamma_{N,0}$ are the ones of the ideal gas in the canonical ensemble for which no explicit formulas are available. Our results are still valid if these three quantities are replaced by their corresponding grand canonical versions, as can be seen from the discussion in the Appendix.
	\item Our bounds depend, apart from the scattering length $a_{v}$, only on the range $R_0$ of the interaction potential. This dependence could be displayed explicitly. By cutting $v$ in a suitable way one can extend the result to infinite range potentials which are integrable outside some ball with finite radius, that is, to all nonnegative potentials with a finite scattering length.
	\item Our proof allows for the incorporation of internal degrees of freedom such as spin. For simplicity we only treat the case of spinless particles here.
	\item Eq.~\eqref{eq:main3} implies BEC into the constant function $| \Lambda |^{-1/2}$ on the torus with condensate fraction given by the one of the ideal Bose gas to leading order. The statement follows from the fact that the trace norm $\Vert \cdot \Vert_1$ bounds the operator norm $\Vert \cdot \Vert$, and hence \eqref{eq:main3} implies
		\begin{equation}
		\left\| \gamma_N - \gamma_{N,0} \right\| \leq O\left( N \left( N^{-\sigma} + \delta(N)^{1/8} \right) \right).
		\label{eq:main4}
		\end{equation}
	%By $\Vert \cdot \Vert$ we denote the operator norm. 
	The critical temperature does not depend on the interaction in the dilute GP limit considered here. Deviations from this limiting value can be observed in experiments \cite{Tamm_etal 2011}, however. 
	\item In the initial experiments BECs could  only be prepared in harmonic traps. More recent set-ups also allow for the preparation of such systems in a box type potential with approximate hard wall boundary conditions, see \cite{Gauntetal2013}. 
	The inclusion of these boundary conditions into our setting will be discussed in the next section. 
	\item Let us compare our setting to the one of the dilute Bose gas in the thermodynamic limit, which was considered in \cite{Sei2008,Yin2010}. In the latter case one first takes the thermodynamic limit $N,L \to \infty$ with $\varrho = N/L^3$ and $\beta$ fixed, and afterwards considers the dilute limit where $a \varrho^{1/3} \ll 1$ and $\beta \varrho^{2/3} \gtrsim 1$. In our case % $L$ is a free parameter and 
	we take the limit $N \to \infty$ with $a_N \sim L/N$ and $\beta \varrho^{2/3} \gtrsim 1$. That is, we take a combined thermodynamic and dilute limit, where $a_N \varrho^{1/3} \sim N^{-2/3}$. %A system in this regime is thus more dilute than the dilute gas in the thermodynamic limit and energy bounds are  easier to obtain. 
	In this limit the spectral gap of the Laplacian is of the same order of magnitude as the interaction energy per particle, $\sim L^{-2}$. This allows for a proof of BEC based on the coercivity of the relevant (free) energy functional, see \eqref{eq:main3}, \cite{LiSei2002} and \cite{me}. In contrast, proving BEC for the dilute Bose gas in the thermodynamic limit has been a major open problem in mathematical physics for almost a century. The spectral gap of the Laplacian closes in the thermodynamic limit and the system is expected to have Goldstone modes (sound waves in the case of the dilute Bose gas), that is, excitations with arbitrarily small energy. Accordingly, the relevant coercivity of the (free) energy is lost. We refer to \cite{BECprogram} for an overview of an ambitious long-term project aimed at proving BEC in the thermodynamic limit with renormalization group techniques. Although it has so far not been  possible to prove BEC for the dilute Bose gas in the thermodynamic limit, it is possible with our methods to obtain information on the 1-pdm when it is projected to suitably chosen high momentum modes. This should be compared to the case of the dilute Fermi gas in the thermodynamic limit, where comparable bounds have been proven in \cite{RobertFermigas}.
\end{enumerate}
%%%%%%%%%%%%%%%%%%%%%%%%%%%%%%%%%%%%%%%%%%%%%%%%%%%%%%%%%%%
\subsection{Extension to the case of Dirichlet boundary conditions}
\label{sec:dirichletboundary}
%%%%%%%%%%%%%%%%%%%%%%%%%%%%%%%%%%%%%%%%%%%%%%%%%%%%%%%%%%%
The methods  developed for the proof of Theorem~\ref{thm:periodic} also allow for the proof of a similar statement when the periodic boundary conditions are replaced by Dirichlet boundary conditions. In this case the system does not condense into the constant function but into the minimizer of the GP energy functional. To state this result, we first need to introduce some notation.

For functions $\phi \in H_0^1([0,L]^3)$, we introduce the GP energy functional
\begin{equation}
	\mathcal{E}^{\mathrm{GP}}\left( \phi \right) = \int_{[0,L]^3} \left( \left| \nabla \phi(x) \right|^2 + 4 \pi a \left| \phi(x) \right|^4 \right) \text{d}x.
	\label{eq:GPfunctional}
\end{equation}  
Here $H_0^1([0,L]^3)$ denotes the usual Sobolev space of functions with zero boundary conditions. We denote by
\begin{equation}
	E^{\mathrm{GP}}(N,a,L) = \inf_{\phi \in H_0^1([0,L]^3) \text{ with } \Vert \phi \Vert = \sqrt{N}} \mathcal{E}^{\mathrm{GP}}\left( \phi \right)
	\label{eq:GPenergy}
\end{equation}
its ground state energy and by $\phi^{\mathrm{GP}}_{N,a}$ its minimizer, which is unique up to a phase. One readily checks the  scaling relations $E^{\mathrm{GP}}(N,a,L) = L^{-2} N E^{\mathrm{GP}}(1,Na/L,1)$ and $\phi^{\mathrm{GP}}_{N,a} = N^{1/2} \phi^{\mathrm{GP}}_{1,Na}$. Additionally, let $H_N^{\mathrm{D}}$ be the Hamiltonian \eqref{eq:HN}, where the Laplacian on $\Lambda$ is replaced by $\Delta^{\mathrm{D}}$, the Dirichlet Laplacian on $[0,L]^3$. Similarly let $F^{\mathrm{D}}(\beta,N,L)$ be the canonical free energy for  $H_N^{\mathrm{D}}$ and let $F_0^{\mathrm{D}}(\beta,N,L)$ be the same quantity in the case $v=0$. By $\varrho^{\mathrm{D}}(x)$ we denote the density of the canonical Gibbs state of the ideal gas, and $\varphi_0$ is the ground state of $-\Delta^{\mathrm{D}}$. Finally, we introduce with $N^{\mathrm{D}}_0$ the expected number of particles in the condensate of the Gibbs state of the ideal gas and denote by $\varrho_{\mathrm{th}}^{\mathrm{D}}(x) = \varrho^{\mathrm{D}}(x) - N_0^{\mathrm{D}} | \varphi_0(x) |^2$ the density of its thermal cloud. For simplicity we suppress the dependence of the densities on $\beta$, $N$ and $L$.

The analogue of Theorem~\ref{thm:periodic} in the case of Dirichlet boundary conditions is the following statement:
\begin{theorem}
	\label{thm:Dirichlet}
	Assume that $v : [0,\infty) \to [0,\infty]$ is a nonnegative and measurable function with finite scattering length $a_v$. In the combined limit $N \to \infty$, $\beta \varrho^{2/3} \sim 1$ and $a_N$ given by \eqref{eq:aN} with $a_{v} > 0$ fixed, we have
	\begin{equation}
	F^{\mathrm{D}}(\beta,N,L) - F^{\mathrm{D}}_0(\beta,N,L)\simeq  \left[ E^{\mathrm{GP}}\left(N_0^{\mathrm{D}},a_N,L \right) + 8 \pi a_N \int_{[0,L]^3} \left( \varrho_{\mathrm{th}}^{\mathrm{D}}(x)^2 + 2 \varrho_{\mathrm{th}}^{\mathrm{D}}(x) \left| \phi^{\mathrm{GP}}_{N_0^{\mathrm{D}},a_N}(x) \right|^2 \right) \de x \right].
	\label{eq:main1b}
	\end{equation}
	Moreover, for any sequence of states $\Gamma_N \in \mathcal{S}_N$ with 1-pdms $\gamma_N$ and 
	\begin{equation}
	\mathcal{F}\left( \Gamma_N \right) - F^{\mathrm{D}}_0(\beta,N,L) \simeq \left[ E^{\mathrm{GP}}\left( N_0^{\mathrm{D}},a_N,L \right) + 8 \pi a_N \int_{[0,L]^3} \left( \varrho_{\mathrm{th}}^{\mathrm{D}}(x)^2 + 2 \varrho_{\mathrm{th}}^{\mathrm{D}}(x) \left| \phi^{\mathrm{GP}}_{N_0^{\mathrm{D}},a_N}(x) \right|^2 \right) \de x \right],
	\label{eq:main2b}
	\end{equation}
	we have
	\begin{equation}
	\lim \frac{1}{N} \left\| \gamma_N - \left( \gamma_{N,0} - N_0^{\mathrm{D}} | \varphi_0 \rangle\langle \varphi_0 | + \left| \phi^{\mathrm{GP}}_{N_0^{\mathrm{D}},a_N} \right\rangle \left\langle \phi^{\mathrm{GP}}_{N_0^{\mathrm{D}},a_N} \right| \right) \right\|_1 = 0,
	\label{eq:main3b}
	\end{equation}
	where $\gamma_{N,0}$ denotes the 1-pdm of the canonical Gibbs state of the ideal gas and $\lim$ is our combined limit. Finally,
	\begin{equation}
	\lim \frac{1}{N} \left\| \gamma_N - \left| \phi^{\mathrm{GP}}_{N_0^{\mathrm{D}},a_N} \right\rangle \left\langle \phi^{\mathrm{GP}}_{N_0^{\mathrm{D}},a_N} \right| \ \right\| = 0,
	\label{eq:main3c}
	\end{equation}
	where $\Vert \cdot \Vert$ denotes the operator norm. In particular, we have BEC with the same condensate fraction and the same critical temperature as in the case of the ideal Bose gas to leading order.
\end{theorem}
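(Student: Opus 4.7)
The plan is to adapt the $c$-number substitution plus Dyson-lemma strategy of Theorem~\ref{thm:periodic} to the Dirichlet setting; the essential structural difference is that the condensate mode is no longer the constant function fixed by translation symmetry, but rather the GP minimizer $\phi^{\mathrm{GP}}_{N_0^{\mathrm{D}},a_N}$, which itself depends nontrivially on $a_N$. I would first establish the free energy asymptotics \eqref{eq:main1b} by matching upper and lower bounds, and then extract the 1-pdm convergence \eqref{eq:main3b}--\eqref{eq:main3c} from coercivity of the Gibbs free energy functional near its minimizing ansatz.

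For the upper bound I would construct a trial state by taking the canonical Gibbs state of the ideal Dirichlet gas, replacing the $N_0^{\mathrm{D}}$ particles occupying $\varphi_0$ by particles in $\phi^{\mathrm{GP}}_{N_0^{\mathrm{D}},a_N}$ (keeping the thermal cloud orthogonal to this new direction), and multiplying the wavefunction by a Dyson--Jastrow correlation factor $\prod_{i<j} f(d(x_i,x_j))$ built from the zero-energy scattering solution truncated at an intermediate scale. The Jastrow factor renormalizes the bare $v_N$ into an effective interaction of strength $8\pi a_N$; applying Wick's rule to evaluate this effective contact interaction on the quasi-free reference state, and using the decomposition $\varrho^{\mathrm{D}}(x) = |\phi^{\mathrm{GP}}_{N_0^{\mathrm{D}},a_N}(x)|^2 + \varrho_{\mathrm{th}}^{\mathrm{D}}(x)$, reproduces \eqref{eq:main1b}: the condensate self-interaction combines with the modified condensate kinetic energy to yield $E^{\mathrm{GP}}(N_0^{\mathrm{D}},a_N,L)$, while the condensate-thermal and thermal-thermal contractions (each of which has both a direct and an exchange contribution since distinct modes are involved) produce the remaining two terms with the coefficients displayed in \eqref{eq:main1b}.

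For the lower bound I would perform a $c$-number substitution on the $\phi^{\mathrm{GP}}_{N_0^{\mathrm{D}},a_N}$ mode using the variational-principle-in-disguise technique introduced for the torus case, and then apply a Dyson-lemma-type bound to replace the singular $v_N$ by an effective soft interaction of strength $8\pi a_N$ on the complement of the condensate mode. Minimization of the resulting effective functional jointly over the $c$-number amplitude $z$ and the thermal state should produce the right-hand side of \eqref{eq:main1b}: the $c$-number piece minimizes to $\mathcal{E}^{\mathrm{GP}}$ evaluated on a condensate of mass $|z|^2 = N_0^{\mathrm{D}}$, while the thermal piece reduces to the ideal-gas free energy functional augmented by mean-field coupling to the condensate and to itself, yielding the thermal-cloud integrals in the asymptotics upon matching with the ideal-gas thermal asymptotics.

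Given \eqref{eq:main1b}, the 1-pdm statements follow from coercivity. For any approximate minimizer $\Gamma_N$ satisfying \eqref{eq:main2b}, a relative-entropy estimate forces the thermal part of $\gamma_N$ to be close to $\gamma_{N,0} - N_0^{\mathrm{D}}|\varphi_0\rangle\langle\varphi_0|$, while strict positivity of the Hessian of $\mathcal{E}^{\mathrm{GP}}$ at $\phi^{\mathrm{GP}}_{N_0^{\mathrm{D}},a_N}$ (modulo the $U(1)$ phase, which is irrelevant for the rank-one projection) forces the condensate eigenvector of $\gamma_N$ to be close to $\phi^{\mathrm{GP}}_{N_0^{\mathrm{D}},a_N}$; combining these gives \eqref{eq:main3b}, and \eqref{eq:main3c} then follows since all non-condensate eigenvalues of $\gamma_{N,0}$ are $o(N)$. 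The main obstacle I expect is the lower bound: because the condensate wavefunction is not fixed by symmetry, the $c$-number direction has to be chosen self-consistently with the mean field felt by the condensate, and one must show that the GP profile emerges variationally from the lower bound rather than being inserted by hand. This couples the condensate minimization to the thermal-cloud contribution more tightly than in the torus case.
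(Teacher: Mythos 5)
First, a point of context: the paper does not actually prove Theorem~\ref{thm:Dirichlet}; it only indicates in Remark~2 following the theorem how the proof of Theorem~\ref{thm:periodic} is to be adjusted. Your proposal is therefore being compared against a sketch, and at that level your overall architecture (Jastrow-corrected trial state for the upper bound; c-number substitution plus Dyson lemma plus coercivity for the lower bound and the 1-pdm) is the intended one. There is, however, one genuine gap in your lower bound. You propose to perform the c-number substitution \emph{on the single mode} $\phi^{\mathrm{GP}}_{N_0^{\mathrm{D}},a_N}$. The paper's method substitutes \emph{all} $M$ modes with $|p|<p_{\mathrm{c}}$, and this is not a cosmetic difference. (a) The rigorous first-order perturbation theory rests on a localization of the relative entropy that requires the residual quantum state $\Gamma_z$ to be compared with a quasi-free reference whose 1-pdm has rapid off-diagonal decay; if the low-momentum modes other than the condensate direction are left quantum, their correlation length $\gtrsim p_{\mathrm{c}}^{-1}$ destroys this, so the correlation-inequality step fails. (b) Substituting many low modes is exactly what resolves the self-consistency obstacle you flag at the end: the GP minimizer varies on the scale $L$ and is therefore well approximated in the span of the modes $|p|<p_{\mathrm{c}}$, so the GP profile \emph{emerges} from the minimization over $z\in\mathbb{C}^M$ rather than having to be inserted by hand as the substitution direction. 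Your single-mode substitution presupposes the answer and leaves the obstacle you correctly identify unresolved.

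Two further, more minor discrepancies. For the upper bound, replacing $\varphi_0$ by $\phi^{\mathrm{GP}}_{N_0^{\mathrm{D}},a_N}$ inside the full ideal Gibbs state creates an orthogonality problem: the GP minimizer has nonzero overlap with the thermally occupied low Dirichlet eigenmodes, so the resulting ansatz is not a legitimate symmetrized product and ``keeping the thermal cloud orthogonal'' changes its free energy in a way you would have to control. The paper's sketch avoids this by placing the $N_0^{\mathrm{D}}$ condensate particles in an approximation of the GP minimizer built from low-momentum modes only and tensoring with a Gibbs state supported entirely on the high-momentum sector. For the 1-pdm, the paper's route for the condensate block is a Griffith argument (perturbing the Hamiltonian by $\lambda$ times the projection onto the approximate GP minimizer), which directly yields a lower bound on the condensate occupation from the free-energy lower bound; your appeal to positivity of the GP Hessian does not by itself control $\langle\phi^{\mathrm{GP}},\gamma_N\phi^{\mathrm{GP}}\rangle$ without such an occupation bound, so as stated this step is incomplete.
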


\subsubsection*{Remarks:}
\begin{enumerate}
	\item In the case of periodic boundary conditions, the condensate wave function is given by a constant and therefore minimizes the GP energy functional on the torus, that is, \eqref{eq:GPfunctional} for functions $\phi \in H_{\mathrm{per}}^1(\Lambda)$ (the usual Sobolev space of functions with periodic boundary conditions). For Dirichlet boundary conditions this picture changes because they force the minimizer of the GP functional to vary on the length scale of the box, that is, on $L$. This results in a macroscopic change of the energy of the condensate compared to the case with periodic boundary conditions. Although the Dirichlet boundary conditions do also change the free energy of the ideal gas compared to the case of periodic boundary conditions (not to leading order but on the scale we are interested in), they do not affect the density of the thermal cloud to leading order. This is because the energy per particle inside the thermal cloud is for $\beta \sim \beta_{\mathrm{c}}$ given by $\varrho^{2/3}$, where $\varrho=N/L^3$. Its density therefore varies on a length scale of order $\varrho^{-1/3}$ which is much smaller than the length scale of the box: $\varrho^{-1/3}/ L \sim N^{-1/3}$. Hence, the density of the thermal cloud is essentially a constant until close to the boundary. Since the expected number of particles in the condensate does not depend on the boundary conditions to leading order this, in particular, implies that the second term in the bracket on the right-hand side of \eqref{eq:main1b} satisfies
	\begin{equation}
		8 \pi a_N \int_{[0,L]^3} \left( \varrho_{\mathrm{th}}^{\mathrm{D}}(x)^2 + 2 \varrho_{\mathrm{th}}^{\mathrm{D}}(x) \left| \phi^{\mathrm{GP}}_{N_0^{\mathrm{D}},a_N}(x) \right|^2 \right) \de x \simeq 8 \pi a_N L^3 \left( \varrho_{\mathrm{th}}^2 + 2 \varrho_{\mathrm{th}} \varrho_0 \right).
	\end{equation}
	The term on the right-hand side depends on the expected condensate density of the Gibbs state of the ideal gas in the case of periodic boundary conditions $\varrho_0$ and on $\varrho_{\mathrm{th}} = \varrho - \varrho_0$. This should be compared to \eqref{eq:main1}, where the same terms appear.
	\item In the remaining part of the paper we will prove Theorem~\ref{thm:periodic} but we will not prove Theorem~\ref{thm:Dirichlet}. The methods developed to prove Theorem~\ref{thm:periodic} can, however, be adjusted to also obtain a proof for Theorem~\ref{thm:Dirichlet}. Let us mention the main points to consider. Concerning the lower bound, the main point is that the technique from \cite{Sei2008}, that we use for the proof of the lower bound, naturally translates to the case of Dirichlet boundary conditions. This is because the c-number substitution is done with a sufficient number of momentum modes such that the GP minimizer, which varies on the length scale of the box, can be efficiently approximated with them. Additionally, as explained in the previous remark, the density of the thermal cloud of the ideal gas is constant to leading order. This allows one to use essentially the same technique to compute the free energy related to the modes that are not affected by the c-number substitution as in the case of periodic boundary conditions. To extend the proof of the upper bound, one has to cut the Fock space into high and low momentum modes, as it has been done in the proof of the lower bound. In the Fock space related to the low momentum modes one chooses the trial state to be a product wave function with $N_0(\beta,N,L)$ particles sitting in an approximate version of the GP minimizer. As above, $N_0(\beta,N,L)$ denotes the expected number of particles in the condensate of the ideal Bose gas. The overall trial state is then given by the symmetric tensor product of this function and a non-interacting canonical Gibbs state acting on the Fock space related to the high momentum modes (at the correct temperature and with $N-N_0(\beta,N,L)$ particles). In order to obtain the leading order behavior of the interaction energy, which depends on the scattering length, one has to, as in the case of periodic boundary conditions, add a correlation structure. The proof of the asymptotics of the 1-pdm remains up to minor adjustments unchanged. Here the main point is that the Griffith argument has to be done with an approximate version of the GP minimizer, which depends only on the low momentum modes of the c-number substitution, instead of with the constant function. Since the concrete implementation of the above strategy would considerably increase the length of the proof compared to the case of periodic boundary conditions, without adding substantial new difficulties, we only give the proof of Theorem~\ref{thm:periodic} here.
	\item We expect that the error bounds one obtains by following the strategy indicated by Remark~2 to prove Theorem~\ref{thm:Dirichlet} are not worse than those appearing in Theorem~\eqref{thm:periodic}. In particular, we expect the same uniformity of the remainder in the inverse temperature as long as $\beta \gtrsim \beta_{\mathrm{c}}$. 
	\item Apart from periodic and Dirichlet boundary conditions we could also treat Neumann boundary conditions. Since the condensate function is a constant in this case one obtains the same statement as for periodic boundary conditions.
\end{enumerate}
%%%%%%%%%%%%%%%%%%%%%%%%%%%%%%%%%%%%%%%%%%%%%%%%%%%%%%%%%%%
\subsection{The proof strategy}
\label{sec:proofstrategy}
%%%%%%%%%%%%%%%%%%%%%%%%%%%%%%%%%%%%%%%%%%%%%%%%%%%%%%%%%%%
Before we come to the proof of Theorem~\ref{thm:periodic}, we first briefly present the main steps to guide the reader.

Sec.~\ref{sec:proofupperbound} contains a proof of the upper bound for the  free energy of the interacting gas. It is based on the Gibbs variational principle and the construction of a trial state whose free energy can be bounded from above by the desired expression. As a trial state we use the canonical Gibbs state of the ideal Bose gas. In order to obtain the scattering length in the interaction energy, we have to add a correlation structure which decreases the probability of finding two particles close together. Our ansatz yields a much simpler and shorter proof of the upper bound than the related proof in case of the dilute Bose gas in the thermodynamic limit \cite{Yin2010}. This is only possible, however, because the scattering length scales as $a_N \sim L/N$,  and hence the system is extremely dilute.

For the proof of the lower bound for $F(\beta,N,L)$ in Sec.~\ref{sec:prooflowerbound}, we adjust the techniques developed for the related proof for the dilute Bose gas in the thermodynamic limit \cite{Sei2008}. One key ingredient of our approach is a novel use of the Gibbs variational principle that goes hand in hand with the c-number substitution, which is a central ingredient of the proof in \cite{Sei2008}. In comparison to \cite{Sei2008}, this allows us to work with a general state $\Gamma$ instead of with a version of the grand canonical Gibbs state. In particular, we can keep the information that $\Gamma$ has exactly $N$ particles. This adjustment is essential for the proof of the asymptotics of the 1-pdm of approximate minimizers of the Gibbs free energy functional that we give in Sec.~\ref{sec:asymptotics1pdm}. Also in view of Sec.~\ref{sec:asymptotics1pdm}, we have to prove the lower bound for a slightly generalized Hamiltonian in which the energy of the lowest eigenfunction of the Laplacian is shifted by $\lambda \geq 0$. We remark that, if one is only interested in the lower bound for the free energy, the technique from \cite{Sei2008} can be applied essentially without modifications. More precisely, one would only need to consider all those terms in \cite{Sei2008} that do not grow proportionally to the volume in the thermodynamic limit, and check that they are also of subleading order in the GP limit considered here (which is true). %The other part of the analysis, and therewith the related error terms, would not need to be adjusted.

The proof of the asymptotics of the 1-pdm $\gamma_N$ of an approximate minimizer of the Gibbs free energy functional is based on the novel use of the Gibbs variational principle mentioned above and has two main ingredients. The first ingredient is an estimate showing that $\gamma_N$ is, when projected to high momentum modes, given by the 1-pdm of the ideal gas to leading order. This part of the proof is motivated by a similar proof in \cite{me} and is based on certain lower bounds for the bosonic relative entropy (the difference between two free energies) quantifying its coercivity. One main novelty in this part of our proof is a new lower bound for the bosonic relative entropy that allows us to simplify this part substantially w.r.t. the related part in \cite{me}. In particular, it allows one to obtain better rates for the trace norm convergence of the relevant 1-pdm for given bounds on the relative entropy. In order to show the same statement for $\gamma_N$ projected to the low momentum modes, which is the second main ingredient of our proof, we apply a Griffith argument. Such arguments are based on the fact that differentiation of the free energy w.r.t. a parameter in the Hamiltonian yields the quantity one is interested in. In our case the parameter is the shift of the lowest eigenvalue of the Laplacian and the quantity of interest is the expected number of particles in the constant function, that is, in the condensate. 
%%%%%%%%%%%%%%%%%%%%%%%%%%%%%%%%%%%%%%%%%%%%%%%%%%%%%%%%%%%
\section{Proof of the upper bound}
\label{sec:proofupperbound}
%%%%%%%%%%%%%%%%%%%%%%%%%%%%%%%%%%%%%%%%%%%%%%%%%%%%%%%%%%%
\subsection{The variational ansatz}
%%%%%%%%%%%%%%%%%%%%%%%%%%%%%%%%%%%%%%%%%%%%%%%%%%%%%
As trial state for the upper bound we choose the canonical Gibbs state of the ideal Bose gas on the torus and add a correlation structure. This is motivated by the following three observations: Firstly, the condensate wave function of the ideal gas on $\Lambda$ is given by $| \Lambda |^{-1/2}$. If we turn on a repulsive interaction this is not going to change. Secondly, the free energy of the ideal gas is for $\beta \sim \beta_{\mathrm{c}}$ much larger than the interaction energy given the second term on the right-hand side of \eqref{eq:main1}. This tells us that an approach based on first order perturbation theory should lead to the correct interaction energy. Finally, since $v$ may contain a hard core repulsion and because the scaled pair interaction $v_N$ becomes very singular for large $N$, we need to assure that the probability of finding two particle close together is reduced compared to the ideal gas. This is achieved with the correlation structure in the spirit of \cite{Jastrow}. In particular, it allows us to obtain the correct leading order of the interaction energy which is proportional to the scattering length. The idea to use a correlation structure in order to obtain the dependence of the energy of a dilute Bose gas on the scattering length has for the first time been used in \cite{Dyson} in the homogeneous case and in \cite{RobertGPderivation} in the inhomogeneous case.

Let $H_N^0$ denote the Hamiltonian $H_N$ \eqref{eq:HN} for $v = 0$. The canonical Gibbs state of the ideal Bose gas on the torus $\Lambda$ is given by
\begin{equation}
	\Gamma_{N,0}^{\mathrm{G}} = \frac{e^{-\beta H_{N,0}}}{\tr e^{-\beta H_{N,0}}}.
	\label{eq:idealGibbs}
\end{equation}
As correlation structure we choose the Jastrow-like function \cite{Jastrow,Dyson,ErdSchlYau2010}
\begin{equation}
	F(x_1,\ldots,x_{N}) = \prod_{1 \leq i < j \leq N} f_b( d(x_i, x_j ) ) \quad \text{ with } \quad f_b(r) = \begin{cases} f_0(r)/f_0(b) & \text{ for } r < b \\ 1 & \text{ for } r \geq b,  \end{cases}
	\label{eq:corrstructure}
\end{equation}
where $b > 0$ is a parameter to be determined and $f_0(|x|)$ is the unique solution of the zero-energy scattering equation
\begin{equation}
	- \Delta f(x) + \tfrac{1}{2} v_N(x) f(x) = 0 \quad \text{ with }  \lim_{|x| \to \infty} f(x) = 1,
	\label{eq:scatteringequation}
\end{equation}
see also \cite[Appendix~C]{Themathematicsofthebosegas}. We expand the canonical Gibbs state as $\Gamma_{N,0}^{\mathrm{G}} = \sum_{\alpha=1}^{\infty} \lambda_{\alpha} | \Psi_{\alpha} \rangle\langle \Psi_{\alpha} |$, where the functions $\Psi_{\alpha}$ are chosen as symmetrized products of real eigenfunctions of the Laplacian on $\Lambda$. The final trial state with correlation structure is defined by
\begin{equation}
	\widetilde{\Gamma}_{N,0}^{\mathrm{G}} = \sum_{\alpha=1}^{\infty} \lambda_{\alpha} | \Phi_{\alpha} \rangle\langle \Phi_{\alpha} | \quad \text{ with } \quad  \Phi_{\alpha}  = \frac{F \Psi_{\alpha}}{\| F \Psi_{\alpha} \| }
	\label{eq:trialstate}
\end{equation}
and its free energy equals
\begin{equation}
	\mathcal{F}\left(\widetilde{\Gamma}_{N,0}^{\mathrm{G}} \right) = \tr \left[ H_N \widetilde{\Gamma}_{N,0}^{\mathrm{G}} \right] - T S \left( \widetilde{\Gamma}_{N,0}^{\mathrm{G}} \right).
	\label{eq:freeenergytrialstate}
\end{equation}
The remainder of this section is devoted to finding an appropriate upper bound for $\mathcal{F}(\widetilde{\Gamma}_{N,0}^{\mathrm{G}})$. We start with the computation of the energy.
%%%%%%%%%%%%%%%%%%%%%%%%%%%%%%%%%%%%%%%%%%%%%%%%%%%%%
\subsection{The energy}
%%%%%%%%%%%%%%%%%%%%%%%%%%%%%%%%%%%%%%%%%%%%%%%%%%%%%
We use the definition of our trial state \eqref{eq:trialstate} and write
\begin{equation}
\tr \left[ H_N \widetilde{\Gamma}_{N,0}^{\mathrm{G}} \right] = \sum_{\alpha=1}^{\infty} \lambda_{\alpha} \frac{\langle F \Psi_{\alpha}, H_N F \Psi_{\alpha} \rangle}{\langle F \Psi_{\alpha}, F \Psi_{\alpha} \rangle}.
\label{eq:upperbound9}
\end{equation}
Bearing in mind that all eigenfunctions $\Psi_{\alpha}$ of $H_{N,0}$ are chosen to be real-valued, we integrate by parts once to rewrite the kinetic energy of the $i$-th particle as
\begin{equation}
-\int_{\Lambda^N} \overline{ F \Psi_{\alpha} } \nabla_i^2 F \Psi_{\alpha} \text{d}X = -\int_{\Lambda^N} \left[ F^2 \left( \Psi_{\alpha} \nabla_i^2 \Psi_{\alpha} \right) - \Psi_{\alpha}^2 \left( \nabla_i F \right)^2 \right] \text{d}X\,,
\label{eq:upperbound10}
\end{equation}
where $\text{d}X$ is short for $\text{d}(x_1,\ldots,x_N)$. For the energy of a single function $\Psi_{\alpha}$ this implies
\begin{align}
&\left\langle F \Psi_{\alpha}, \left[ \sum_{i=1}^{N} -\Delta_i + \sum_{1 \leq i < j \leq N} v_N(d(x_i,x_j)) \right] F \Psi_{\alpha} \right\rangle =
\label{eq:upperbound11} \\
&\hspace{3cm} \int_{\Lambda^N} \left\lbrace F^2 \Psi_{\alpha} \underbrace{\left[ \sum_{i=1}^{N} -\Delta_i \right] \Psi_{\alpha}}_{=E_{\alpha} \Psi_{\alpha}} +  \left[ \sum_{i=1}^{N} (\nabla_i F)^2 + \sum_{1 \leq i < j \leq N} v_N(d(x_i, x_j)) F^2 \right] \Psi_{\alpha}^2 \right\rbrace \de X, \nonumber
\end{align}
where $E_{\alpha}$ denotes its energy w.r.t. $H_N^0$, and the whole energy can be written as
\begin{align}
&\tr \left[ H_N \tilde{\Gamma}_{N,0}^{\mathrm{G}} \right] = \tr \left[ H_{N,0} \Gamma_{N,0}^{\mathrm{G}} \right] + \sum_{\alpha=1}^{\infty} \lambda_{\alpha} \frac{\int_{\Lambda^N} \Psi_{\alpha}^2 \left[ \sum_{i=1}^{N} (\nabla_i F)^2 + \sum_{1 \leq i < j \leq N} v_N(d(x_i,x_j)) F^2 \right] \de X}{\left\Vert F \Psi_{\alpha} \right\Vert^2}. 
\label{eq:upperbound12}
\end{align}
The following Lemma provides a lower bound for the norm of $F \Psi_{\alpha}$, and thereby an upper bound on  the second term on the right-hand side of Eq.~\eqref{eq:upperbound12}, as long as $(4 \pi /3) | \Lambda | \varrho^2 a_N b^2 < 1$.
%%%%%%%%%%%%%%%%%%%%%%%%%%%%%%%%%%%%%%%%%%%%%%%%%%%%%
\begin{lemma}
	\label{lem:denominators}
	The $L^2(\Lambda^N)$-norm of $F \psi_{\alpha}$ can be bounded from below as 
	\begin{equation}
	\left\Vert F \Psi_{\alpha} \right\Vert^2 \geq 1 - \frac{4 \pi}{3} | \Lambda | \varrho^2 a_N b^2.
	\label{eq:upperbound13}
	\end{equation}
\end{lemma}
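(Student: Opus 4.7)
The approach is as follows. Since $0 \leq f_b \leq 1$, one has $F^2 = \prod_{i<j} f_b^2(d(x_i,x_j)) \in [0,1]$, and the elementary product inequality $\prod_i a_i \geq 1 - \sum_i (1-a_i)$ for $a_i \in [0,1]$ gives
\begin{equation*}
F^2 \geq 1 - \sum_{1 \leq i < j \leq N}\left(1 - f_b^2(d(x_i, x_j))\right).
\end{equation*}
Integrating against $|\Psi_\alpha|^2$ and using permutation symmetry to replace the sum by $\binom{N}{2}$ copies of the $(i,j) = (1,2)$ term yields
\begin{equation*}
\|F\Psi_\alpha\|^2 \geq 1 - \binom{N}{2} \int_{\Lambda^2}\rho^{(2)}_\alpha(x,y) \left(1 - f_b^2(d(x,y))\right) dx \, dy,
\end{equation*}
where $\rho^{(2)}_\alpha(x,y) = \int |\Psi_\alpha(x,y,x_3,\ldots,x_N)|^2 \, dx_3 \cdots dx_N$ is the two-particle density.

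Next I would compute the scattering-length integral. Outside the range $R_0$ of $v_N$, the zero-energy solution is $f_0(r) = 1 - a_N/r$, so $f_b(r) = (1 - a_N/r)/(1 - a_N/b)$ for $R_0 \leq r \leq b$. Inserting this and integrating yields
\begin{equation*}
\int_\Lambda \left(1 - f_b^2(|z|)\right) dz = \tfrac{4\pi}{3} a_N b^2 \,\bigl(1 + O(a_N/b + R_0/b)\bigr),
\end{equation*}
the leading constant coming from $8\pi a_N \int_0^b (r - r^2/b) \, dr = \tfrac{4\pi}{3} a_N b^2$.

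The remaining and main step is to estimate the double integral of $\rho^{(2)}_\alpha$ against $1 - f_b^2$. Because $\Psi_\alpha$ is a symmetrized product of real eigenfunctions of $-\Delta$ on the torus, it is a pure Fock state $|\{n_k\}\rangle$ in the occupation basis with $\sum_k n_k = N$, and a direct second-quantized calculation with the canonical commutation relations yields the closed formula
\begin{equation*}
N(N-1) \rho^{(2)}_\alpha(x,y) = \rho^{(1)}_\alpha(x) \rho^{(1)}_\alpha(y) + \gamma_\alpha(x,y)^2 - \sum_k n_k(n_k+1) \phi_k(x)^2 \phi_k(y)^2,
\end{equation*}
where $\gamma_\alpha(x,y) = \sum_k n_k \phi_k(x)\phi_k(y)$ is the 1-pdm kernel and the real torus eigenfunctions satisfy $|\phi_k|^2 \leq 2/|\Lambda|$. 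Combining this with the pointwise bound $\rho^{(1)}_\alpha \leq 2\varrho$, the normalization $\int \rho^{(1)}_\alpha = N$, and the fact that translation invariance on the torus makes $\int_\Lambda (1 - f_b^2(d(x,y))) \, dx$ independent of $y$, one bounds the double integral by $N(N-1) \cdot \tfrac{4\pi}{3 |\Lambda|} a_N b^2$, which is the claim.

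The main obstacle is obtaining the sharp constant $\tfrac{4\pi}{3}$: applying naive term-by-term estimates like $\rho^{(1)}_\alpha \leq 2\varrho$ and $|\gamma_\alpha(x,y)|^2 \leq \rho^{(1)}_\alpha(x)\rho^{(1)}_\alpha(y)$ to the first two pieces of the formula above already overshoots by a factor of two to four. One must therefore exploit the cancellation coming from the subtracted diagonal term $\sum_k n_k(n_k + 1) \phi_k^2 \phi_k^2$, together with the translation-invariance of the scattering kernel, to close the estimate with the optimal constant.
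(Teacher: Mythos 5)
Your first three steps --- the product inequality $F^2\ge 1-\sum_{i<j}(1-f_b^2)$, the reduction to $\binom{N}{2}\int\rho^{(2)}_\alpha(1-f_b^2)$, and the evaluation $\int_\Lambda(1-f_b^2(|z|))\,\mathrm{d}z\le\tfrac{4\pi}{3}a_Nb^2(1+o(1))$ --- coincide with the paper's argument, and your second-quantized formula for $N(N-1)\rho^{(2)}_\alpha$ is correct. The genuine gap is at the decisive final step: you assert that the double integral is bounded by $N(N-1)\cdot\tfrac{4\pi}{3|\Lambda|}a_Nb^2$, but then concede that every estimate you actually name overshoots by a factor of two to four and that one ``must exploit the cancellation coming from the subtracted diagonal term.'' That cancellation is never exhibited, so the proof is incomplete exactly where the constant is produced.

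The step closes with two elementary observations, neither of which requires the diagonal correction. First, the combinatorial factor $\binom{N}{2}=\tfrac12 N(N-1)$ already absorbs the exchange term: dropping the nonnegative subtracted term and using Cauchy--Schwarz in the form $\gamma_\alpha(x,y)^2\le\rho^{(1)}_\alpha(x)\rho^{(1)}_\alpha(y)$ gives
\begin{equation*}
\binom{N}{2}\rho^{(2)}_\alpha(x,y)\;\le\;\tfrac12\left(\rho^{(1)}_\alpha(x)\rho^{(1)}_\alpha(y)+\gamma_\alpha(x,y)^2\right)\;\le\;\rho^{(1)}_\alpha(x)\rho^{(1)}_\alpha(y),
\end{equation*}
with no factor of two lost. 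Second, the one-particle density of the $\Psi_\alpha$ is \emph{identically} equal to $\varrho$ --- this is the input the paper uses; replacing it by the pointwise bound $\rho^{(1)}_\alpha\le 2\varrho$ is precisely what manufactures your spurious factor of four. With $\rho^{(1)}_\alpha\equiv\varrho$ the double integral is at most $|\Lambda|\varrho^2\int\eta_b\le\tfrac{4\pi}{3}|\Lambda|\varrho^2a_Nb^2$ and the lemma follows. Note that if one insists on an eigenbasis whose individual orbital densities are not constant, your worry becomes real and your sketched rescue is not obviously viable: writing $\rho^{(1)}_\alpha=\varrho+\delta$, the term $\int\delta(x)\delta(y)\,\eta_b(d(x,y))\,\mathrm{d}x\,\mathrm{d}y$ has no definite sign, and the subtracted diagonal term need not dominate $\gamma_\alpha^2$ pointwise. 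So the constancy of the density is the fact you need to secure, not a cancellation against the diagonal term.
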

%%%%%%%%%%%%%%%%%%%%%%%%%%%%%%%%%%%%%%%%%%%%%%%%%%%%%
\begin{proof}
	Spelled out in more detail, the norm of $F \Psi_{\alpha}$ reads
	\begin{equation}
	\left\Vert F \Psi_{\alpha} \right\Vert^2 = \int_{\Lambda^N} | \Psi_{\alpha} |^2 \prod_{1 \leq i < j \leq N} f_b(d(x_i, x_j))^2 \text{d}X.
	\label{eq:upperbound14}
	\end{equation} 
	We define $\eta_b(r) = 1 - f_b(r)^2$ and estimate
	\begin{align}
	\left\Vert F \Psi_{\alpha} \right\Vert^2 &\geq \int_{\Lambda^N} | \Psi_{\alpha} |^2 \left( 1 -  \sum_{1 \leq i < j \leq N} \eta_b( d(x_i, x_j )) \right) \text{d}X \label{eq:upperbound15} \\
	&= 1 - \int_{\Lambda \times \Lambda} \eta_b(d(x,y)) \varrho^{(2)}_{\Psi_{\alpha}}(x,y) \text{d}(x,y), \nonumber
	\end{align}
	where $\varrho^{(2)}_{\Psi_{\alpha}}(x,y)$ denotes the two-particle density of $\Psi_{\alpha}$. Next, we use the fact that the $\Psi_{\alpha}$ are symmetrized products of one-particle orbitals to conclude that 
	\begin{equation}
	\varrho^{(2)}_{\Psi_{\alpha}}(x,y) \leq  \varrho_{\Psi_{\alpha}}(x) \varrho_{\Psi_{\alpha}}(y)
	\end{equation}
	holds. Here $\varrho_{\Psi_{\alpha}}$ is the one-particle density of $\Psi_{\alpha}$. Since the density of each $\Psi_{\alpha}$ is a constant, we have $\varrho_{\Psi_{\alpha}} = \varrho$. This allows us to bound the integral on the right-hand side of Eq.~\eqref{eq:upperbound15} in the following way:
	\begin{equation}
	\int_{\Lambda \times \Lambda} \eta_b(d(x,y)) \varrho^{(2)}_{\Psi_{\alpha}}(x,y) \text{d}(x,y) \leq | \Lambda | \varrho^2 \int_{\mathbb{R}^3} \eta_b(|x|) \text{d}x \leq \frac{4 \pi}{3} | \Lambda | \varrho^2 a_N b^2.
	\label{eq:upperbound16}
	\end{equation}
	To obtain the bound for the integral of $\eta_b$, we used its explicit form and the lower bound $f_0(|x|) \geq \left[ 1-a_N/|x| \right]_+$, see \cite[Appendix~C]{Themathematicsofthebosegas}. In combination with \eqref{eq:upperbound15}, this proves the claim.
\end{proof}
%%%%%%%%%%%%%%%%%%%%%%%%%%%%%%%%%%%%%%%%%%%%%%%%%%%%%
Next we analyze the numerator of the second term on the right-hand side of Eq.~\eqref{eq:upperbound12}. We compute
\begin{equation}
\nabla_i F(x_1,\ldots,x_{N}) = \sum_{\substack{l = 1 \\ l \neq i}}^{N} \frac{F(x_1,\ldots,x_{N})}{f_b(d(x_l,x_i))} \nabla f_b(d(x_l,x_i)).
\label{eq:upperbound18}
\end{equation} 
The square of this expression is given by
\begin{align}
\left( \nabla_i F \right)^2 &= \sum_{\substack{l=1 \\ l \neq i}}^N \frac{F^2}{f_b(d(x_l, x_i))^2} \left[ \nabla f_b(d(x_l, x_i)) \right]^2 \label{eq:upperbound19} \\
&\hspace{2cm} + \sum_{\substack{k,l=1 \\ l,k \neq i \\ k \neq l}}^N \frac{F^2}{f_b(d(x_l,x_i)) f_b(d(x_k,x_i))} \nabla f_b(d(x_l,x_i)) \nabla f_b(d(x_k,x_i)). \nonumber
\end{align}
These terms need to be inserted into the numerator of the second term on the right-hand side of Eq.~\eqref{eq:upperbound12} and we start with the first term on the right-hand side of the above equation. Introducing the function $\xi(|x|) = \left[ \nabla f_b(|x|) \right]^2 + \tfrac{1}{2} v_N(|x|) f_b(|x|)^2$ and noting that $0 \leq f_b \leq 1$ as well as $\sum_{\alpha=1}^{\infty} \lambda_{\alpha} \varrho^{(2)}_{\Psi_{\alpha}}(x,y) = \varrho^{(2)}_{\Gamma_{N,0}^{\mathrm{G}}}(x,y)$, we obtain
\begin{align}
&\sum_{\alpha=1}^{\infty} \lambda_{\alpha} \sum_{1 \leq i < j \leq N} \int_{\Lambda^{N}} \left\lbrace \frac{2F^2}{f_b(d(x_i,x_j))^2} \left[ \nabla f_b(d(x_i,x_j)) \right]^2 + v_N(d(x_i,x_j)) F^2 \right\rbrace \Psi_{\alpha}^2 \text{d}X \label{eq:upperbound20} \\
&\hspace{8cm} \leq 2 \int_{\Lambda^2} \xi(d(x,y)) \varrho^{(2)}_{\Gamma_{N,0}^{\mathrm{G}}}(x,y) \text{d}(x,y). \nonumber
\end{align}
The off-diagonal terms in Eq.~\eqref{eq:upperbound19} can be bounded from above in a similar way by
\begin{equation}
6 \int_{\Lambda^3} \varrho^{(3)}_{\Gamma_{N,0}^{\mathrm{G}}}(x,y,z) \vert \nabla f_b(d(x,y)) \nabla f_b(d(z,y)) \vert \text{d}(x,y,z). \label{eq:upperbound21}
\end{equation}
Combining these two bounds with \eqref{eq:upperbound12} and \eqref{eq:upperbound13}, we obtain
\begin{equation}
\tr\left[ H_N \widetilde{\Gamma}_{N,0}^{\mathrm{G}} \right] \leq \tr\left[ H_{N,0} \Gamma_{N,0}^{\mathrm{G}} \right] + \frac{A}{1 - \frac{4 \pi}{3} | \Lambda | \varrho^2 a_N b^2}  \label{eq:upperbound22} 
\end{equation}
as an upper bound for the energy, where
\begin{equation}
A = 2 \int_{\Lambda^2} \xi(d(x,y)) \varrho^{(2)}_{\Gamma_{N,0}^{\mathrm{G}}}(x,y)  \text{d}(x,y) + 6 \int_{\Lambda^3} \left\vert \nabla f_b(d(x,y)) \nabla f_b(d(z,y)) \right\vert \varrho^{(3)}_{\Gamma_{N,0}^{\mathrm{G}}}(x,y,z) \text{d}(x,y,z). \label{eq:upperbound22b} 
\end{equation}
To derive this bound we had to assume that $(4 \pi /3) | \Lambda | \varrho^2 a_N b^2 < 1$. 

Let us denote by $a_p^*$ and $a_p$ the usual creation and annihilation operators of a plane wave state $\varphi_p(x) = | \Lambda |^{-1/2} e^{ipx}$ in $\Lambda$ for $p \in \frac{2 \pi}{L} \mathbb{Z}^3$. Also let $n_p = a_p^* a_p$ be the related occupation number operator. To bound the first term on the right-hand side of \eqref{eq:upperbound22b}, we use that $\Gamma_{N,0}^{\mathrm{G}}$ has a fixed number of particles, and hence $\sum_{p \in \frac{2 \pi}{L} \mathbb{Z}^3} n_p$ can always be replaced by $N$ when acting on $\Gamma_{N,0}^{\mathrm{G}}$. This implies 
\begin{equation}
	\sum_{p,q \in \frac{2 \pi}{L} \mathbb{Z}^3} \langle n_p n_q  \rangle_{\Gamma_{N,0}^{\mathrm{G}}} = \sum_{p,q \in \frac{2 \pi}{L} \mathbb{Z}^3} \langle n_p  \rangle_{\Gamma_{N,0}^{\mathrm{G}}} \langle n_q  \rangle_{\Gamma_{N,0}^{\mathrm{G}}}.
	\label{eq:upperbound22bb} 
\end{equation}
When we use that all eigenfunctions $\varphi_p$ of $-\Delta$ are in absolute value equal to $| \Lambda |^{-1/2}$ (they are plane waves), we come to the second line in the following inequality:
\begin{align}
\varrho^{(2)}_{\Gamma_{N,0}^{\mathrm{G}}}(x,y) &= \sum_{p_1,p_2}\left| \frac{1}{2}  \sum_{\sigma \in S_2} \varphi_{\sigma(p_1)}(x) \varphi_{\sigma(p_2)}(y) \right|^2 \left\langle a_{p_1}^* a_{p_2}^* a_{p_2} a_{p_1}  \right\rangle_{\Gamma_{N,0}^{\mathrm{G}}} \label{eq:upperbound22c} \\
&\leq \frac{| \Lambda |^{-2}}{2} \left[ 2 \sum_{p \neq q} \langle n_p n_q  \rangle_{\Gamma_{N,0}^{\mathrm{G}}} + \sum_{p} \langle n_p (n_p - 1)  \rangle_{\Gamma_{N,0}^{\mathrm{G}}} \right] \nonumber \\
&\leq \frac{| \Lambda |^{-2}}{2} \left[ 2 \sum_{p,q} \langle n_p n_q  \rangle_{\Gamma_{N,0}^{\mathrm{G}}} - \sum_p \langle n_p^2 \rangle_{\Gamma_{N,0}^{\mathrm{G}}} \right] \nonumber \\
&\leq \frac{| \Lambda |^{-2}}{2} \left[ 2 N^2 - \langle n_0 \rangle_{\Gamma_{N,0}^{\mathrm{G}}}^2 \right].  \nonumber
\end{align}
By $S_2$ we denote the group of permutations of two elements. To arrive at the third line, we simply estimated $n_p (n_p -1) \leq n_p^2$ and to come to the last line we used \eqref{eq:upperbound22bb} and
\begin{equation}
\sum_p \langle n_p^2  \rangle_{\Gamma_{N,0}^{\mathrm{G}}} \geq \langle n_0^2 \rangle_{\Gamma_{N,0}^{\mathrm{G}}} \geq \langle n_0 \rangle_{\Gamma_{N,0}^{\mathrm{G}}}^2.
\end{equation}
Since $\Gamma_{N,0}^{\mathrm{G}}$ is the Gibbs state of the ideal gas we have $| \Lambda |^{-1} \langle n_0 \rangle_{\Gamma_{N,0}^{\mathrm{G}}} = \varrho_0(\beta,N,L)$ by definition.

The second term on the right-hand side of Eq.~\eqref{eq:upperbound22b} can be treated with a rough bound that we derive now. An application of the Cauchy-Schwarz inequality tells us that 
\begin{align} \nonumber
\varrho^{(3)}_{\Gamma_{N,0}^{\mathrm{G}}}(x,y,z) &=  \sum_{p_1,p_2,p_3}\left| \frac{1}{6}  \sum_{\sigma \in S_3} \varphi_{\sigma(p_1)}(x) \varphi_{\sigma(p_2)}(y) \varphi_{\sigma(p_3)}(z) \right|^2 \left\langle a_{p_1}^* a_{p_2}^* a_{p_3}^* a_{p_3} a_{p_2} a_{p_1}  \right\rangle_{\Gamma_{N,0}^{\mathrm{G}}}  \\  \label{eq:upperbound23a} 
&\leq | \Lambda |^{-3}  \sum_{p_1,p_2,p_3} \left\langle a_{p_1}^* a_{p_2}^* a_{p_3}^* a_{p_3} a_{p_2} a_{p_1}  \right\rangle_{\Gamma_{N,0}^{\mathrm{G}}} \\
&\leq \varrho^3, \nonumber
\end{align}
where $S_3$ denotes the group of permutations of three elements. We insert this bound into the second term on the right-hand side of Eq.~\eqref{eq:upperbound22b} and obtain 
\begin{equation}
\int_{ \Lambda^3} \left\vert \nabla f_b(d(x,y)) \nabla f_b(d(z,y)) \right\vert \varrho^{(3)}_{\Gamma_{N,0}^{\mathrm{G}}}(x,y,z) \text{d}(x,y,z) \leq \varrho^3 | \Lambda | \left( \int_{\mathbb{R}^3} \left| \nabla f_b(|x|) \right| \text{d}x \right)^2. \label{eq:upperbound23c}
\end{equation}
An explicit computation together with the bound $f_0(|x|) \geq \left[ 1-a_N/|x| \right]_+$, see \cite[Appendix~C]{Themathematicsofthebosegas} shows for $a_N < b \eta$ with $0 < \eta < 1$ that $\int_{\mathbb{R}^3} \left| \nabla f_b(|x|) \right| \text{d}x \lesssim a_N b$. We combine this with Eqs.~\eqref{eq:upperbound22}, \eqref{eq:upperbound22b} and \eqref{eq:upperbound22c}, and use $\int_{\Lambda} \xi(d(x,y)) \de x \leq 4 \pi a_N/(1-a_N/b)$, see again \cite[Appendix~C]{Themathematicsofthebosegas}, to finally obtain
\begin{equation}
\tr\left[ H_N \widetilde{\Gamma}_{N,0}^{\mathrm{G}} \right] \leq \tr\left[ H_{N,0} \Gamma_{N,0}^{\mathrm{G}} \right] + \frac{\widetilde{A}}{1 - \frac{4 \pi}{3} | \Lambda | \varrho^2 a_N b^2} \label{eq:upperbound23d}
\end{equation}
with
\begin{equation}
\widetilde{A} = \frac{ 4 \pi a_N | \Lambda | \left[ 2 \varrho^2 - \varrho_0^2(\beta,N,L) \right]}{1- \frac{a_N}{b}} + \text{const. } | \Lambda | (a_N b)^2 \varrho^3.
\label{eq:upperbound23e}
\end{equation}
For the derivation of this result, we assumed $(4 \pi /3) | \Lambda | \varrho^2 a_N b^2 < 1$ and $a_N < b \eta$ with $0 < \eta < 1$. In the next step we will estimate the entropy of the state $\widetilde{\Gamma}_{N,0}^{\mathrm{G}}$ in terms of the entropy of $\Gamma_{N,0}^{\mathrm{G}}$ and compute the final upper bound. 
%%%%%%%%%%%%%%%%%%%%%%%%%%%%%%%%%%%%%%%%%%%%%%%%%%%%%
\subsection{The entropy and the final upper bound}
%%%%%%%%%%%%%%%%%%%%%%%%%%%%%%%%%%%%%%%%%%%%%%%%%%%%%
To relate the entropy of the state $\widetilde{\Gamma}_{N,0}^{\mathrm{G}}$ to the one of $\Gamma_{N,0}^{\mathrm{G}}$, we use \cite[Lemma~2]{RobertFermigas} which we spell out here for the sake of completeness. 

\begin{lemma}
	\label{lem:upperbound4}
	Let $\Gamma$ be a density matrix on some Hilbert space, with eigenvalues $\lambda_{\alpha} \geq 0$. Additionally, let $\lbrace P_{\alpha} \rbrace_{\alpha=1}^{\infty}$ be a family of  one-dimensional orthogonal projections (for which $P_{\alpha} P_{\alpha'} = \delta_{\alpha,\alpha'} P_{\alpha}$ need not necessarily be true) and define $\hat{\Gamma} = \sum_{\alpha=1}^{\infty} \lambda_{\alpha} P_{\alpha}$. Then
	\begin{equation}
	S( \hat{\Gamma} ) \geq S(\Gamma) - \ln\left( \left\Vert \textstyle{ \sum_{\alpha=1}^{\infty} P_{\alpha} } \right\Vert \right).
	\label{eq:upperbound24}
	\end{equation} 
\end{lemma}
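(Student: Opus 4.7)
The plan is to diagonalize $\hat\Gamma$ explicitly, recast the entropy difference as a sum over a classical ``transition'' array built from inner products of eigenvectors, and then apply the log-sum inequality at fixed row index. Since each $P_\alpha$ is one-dimensional with $\tr P_\alpha = 1$, one first checks $\tr \hat\Gamma = \sum_\alpha \lambda_\alpha = 1$ and $\hat\Gamma \geq 0$, so $\hat\Gamma$ is itself a density matrix, with some spectral decomposition $\hat\Gamma = \sum_n \mu_n |\psi_n\rangle\langle\psi_n|$ in an orthonormal basis $\{\psi_n\}$. Writing $P_\alpha = |\phi_\alpha\rangle\langle\phi_\alpha|$ for unit vectors $\phi_\alpha$, I set $a_{n,\alpha} := |\langle\psi_n,\phi_\alpha\rangle|^2$; these nonnegative numbers are the bridge between the two spectra.

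The key identities to use are $\mu_n = \langle\psi_n,\hat\Gamma\psi_n\rangle = \sum_\alpha a_{n,\alpha}\lambda_\alpha$, together with the two marginal sums $\sum_n a_{n,\alpha} = \|\phi_\alpha\|^2 = 1$ and $\sum_\alpha a_{n,\alpha} = \langle\psi_n,K\psi_n\rangle$, where $K := \sum_\alpha P_\alpha$. The first marginal allows inserting $\sum_n a_{n,\alpha}$ inside $S(\Gamma) = -\sum_\alpha \lambda_\alpha \ln \lambda_\alpha$, and the definition of $\mu_n$ does the same job for $S(\hat\Gamma) = -\sum_n \mu_n \ln \mu_n$. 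After subtraction one arrives at the compact identity $S(\hat\Gamma) - S(\Gamma) = \sum_{n,\alpha} a_{n,\alpha}\lambda_\alpha \ln(\lambda_\alpha/\mu_n)$.

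The main step is then to apply, for each fixed $n$, the log-sum inequality to the nonnegative sequences $(a_{n,\alpha}\lambda_\alpha)_\alpha$ and $(a_{n,\alpha}\mu_n)_\alpha$. Since the row sums are $\sum_\alpha a_{n,\alpha}\lambda_\alpha = \mu_n$ and $\sum_\alpha a_{n,\alpha}\mu_n = \mu_n \langle\psi_n,K\psi_n\rangle$, this yields $\sum_\alpha a_{n,\alpha}\lambda_\alpha \ln(\lambda_\alpha/\mu_n) \geq -\mu_n \ln\langle\psi_n,K\psi_n\rangle \geq -\mu_n \ln\|K\|$. Summing over $n$ and using $\sum_n \mu_n = 1$ then produces the claimed bound $S(\hat\Gamma) \geq S(\Gamma) - \ln\|K\|$.

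The single delicate point, and the only place where a naive approach could go wrong, is the need for a genuinely logarithmic dependence on $\|K\|$: a bound such as Klein's inequality or the elementary $x\ln(x/y) \geq x-y$ would only produce $S(\hat\Gamma) - S(\Gamma) \geq 1 - \|K\|$, which is far too weak to be useful when bounding the entropy correction of $\widetilde{\Gamma}_{N,0}^{\mathrm{G}}$. It is essential to invoke the log-sum inequality \emph{before} summing over $n$, so that the logarithm ends up acting on the row sum $\langle\psi_n,K\psi_n\rangle$. The convention $0\ln 0 = 0$, together with the observation that $\mu_n = 0$ forces $a_{n,\alpha}\lambda_\alpha = 0$ for every $\alpha$, handles the degenerate terms and fully justifies the manipulations.
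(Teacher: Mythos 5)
Your argument is correct and complete. Note that the paper itself does not prove this lemma: it is quoted verbatim from \cite[Lemma~2]{RobertFermigas} ``for the sake of completeness,'' so there is no in-paper proof to compare against. Your route --- diagonalize $\hat\Gamma=\sum_n\mu_n|\psi_n\rangle\langle\psi_n|$, set $a_{n,\alpha}=|\langle\psi_n,\phi_\alpha\rangle|^2$, use the two marginals $\sum_n a_{n,\alpha}=1$ and $\sum_\alpha a_{n,\alpha}=\langle\psi_n,K\psi_n\rangle\le\|K\|$, and apply the log-sum inequality at fixed $n$ --- is a valid, self-contained proof; all the degenerate cases ($\mu_n=0$, $\lambda_\alpha=0$, $a_{n,\alpha}=0$) are handled correctly, and since $\mu_n\ge a_{n,\alpha}\lambda_\alpha$ no $\ln(\cdot/0)$ terms with positive weight occur. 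The standard proof of the cited lemma is the ``dual'' of yours: with $c=\|K\|$ one applies Jensen's inequality for the concave function $f(x)=-x\ln x$ (with $f(0)=0$) to the sub-probability weights $a_{n,\alpha}/c$ at fixed $n$, obtaining $f(\mu_n)\ge c^{-1}\sum_\alpha a_{n,\alpha}f(c\lambda_\alpha)$, and then sums over $n$; the two arguments are essentially equivalent, both resting on convexity of $x\ln x$. Two cosmetic remarks. First, to avoid an $\infty-\infty$ in the ``compact identity'' when $S(\Gamma)=\infty$, apply the log-sum inequality row by row and only then sum over $n$, which yields $-\mu_n\ln\mu_n\ge-\sum_\alpha a_{n,\alpha}\lambda_\alpha\ln\lambda_\alpha-\mu_n\ln\|K\|$ directly (the interchanges of summation are justified by Tonelli since $-\ln\lambda_\alpha\ge0$ and $-\ln\mu_n\ge0$). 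Second, your closing claim that the linear bound $1-\|K\|$ would be ``far too weak'' is overstated for this particular application: by \eqref{eq:upperbound25} and Lemma~\ref{lem:denominators} one has $\|K\|\le(1-\tfrac{4\pi}{3}|\Lambda|\varrho^2a_Nb^2)^{-1}=1+O(|\Lambda|\varrho^2a_Nb^2)$, so $\|K\|-1$ and $\ln\|K\|$ are of the same order here; the logarithmic form is nevertheless the correct general statement of the lemma.
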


Since $ 0 \leq F \leq 1$ we have the following operator inequality
\begin{equation}
\sum_{\alpha=1}^{\infty} \frac{\vert F \Psi_{\alpha} \rangle\langle F \Psi_{\alpha} \vert}{\Vert F \Psi_{\alpha} \Vert^2} \leq \left( \sup_{\alpha \geq 1} \Vert F \Psi_{\alpha} \Vert^{-2} \right) F^2 \leq \left( \sup_{\alpha \geq 1} \Vert F \Psi_{\alpha} \Vert^{-2} \right). \label{eq:upperbound25}
\end{equation} 
Eq.~\eqref{eq:upperbound25} together with Lemmas~\ref{lem:denominators} and~\ref{lem:upperbound4} shows that
\begin{equation}
- T S \left( \tilde{\Gamma}_{N,0}^{\mathrm{G}} \right) \leq -T S \left( \Gamma_{N,0}^{\mathrm{G}} \right) + O \left( \beta^{-1} | \Lambda | \varrho^2 a_N b^2 \right) 
\label{eq:upperbound25b}
\end{equation}
holds as long as $(4 \pi /3) | \Lambda | \varrho^2 a_N b^2 < 1$. 

Having the bound for the entropy at hand, we compute the free energy. With Eqs.~\eqref{eq:freeenergytrialstate}, \eqref{eq:upperbound23d}, \eqref{eq:upperbound23e} and \eqref{eq:upperbound25b} we find
\begin{align}
\tr \left[ H_N \tilde{\Gamma}_{N,0}^{\mathrm{G}} \right] - T S \left( \tilde{\Gamma}_{N,0}^{\mathrm{G}} \right) \leq& F_0(\beta,N,L) + | \Lambda | 4 \pi a_N \left[ 2 \varrho^2 - \varrho_0^2(\beta,N,L) \right] + O\left( | \Lambda | \varrho^2 a_N^2 b^{-1} \right) \label{eq:final1upperbound} \\
&+ O\left( | \Lambda |^2 \varrho^4 a_N^2 b^2 \right) + O\left( | \Lambda | (a_N b )^2 \varrho^3 \right) + O\left( | \Lambda | \beta^{-1} \varrho^2 a_N b^2 \right). \nonumber
\end{align}
To obtain the result we assumed $\frac{4 \pi}{3} | \Lambda | \varrho^2 a_N b^2 < 1$ and $a_N < b \eta$ with $0 < \eta < 1$. Optimization yields $b = (a_N/(N a_N \varrho + \beta^{-1}))^{1/3}$ and the bound
\begin{align}
	F(\beta,\varrho,L) \leq F_0(\beta,N,L) + | \Lambda | 4 \pi a_N \left( 2 \varrho^2 - \varrho_0^2(\beta,N,L) \right) \left( 1 + O\left( a_N (N \varrho)^{1/3} \right) + O \left( \frac{\left( a_N \varrho^{1/3} \right)^{2/3}}{\left( \beta \varrho^{2/3} \right)^{1/3}} \right) \right).
	\label{eq:final2upperbounda}
\end{align}
We note that this bound is uniform in the parameter space $\beta \varrho^{2/3} \gtrsim 1$. When we use $a_N \lesssim L N^{-1}$ and $\beta \varrho^{2/3} \gtrsim 1$, \eqref{eq:final2upperbounda} implies
\begin{align}
F(\beta,N,L) \leq F_0(\beta,N,L) + | \Lambda | 4 \pi a_N \left( 2 \varrho^2 - \varrho_0^2(\beta,N,L) \right) \left( 1 + O\left( N^{-1/3} \right) \right). 
\label{eq:final2upperbound}
\end{align}
This completes the proof of the upper bound.
%%%%%%%%%%%%%%%%%%%%%%%%%%%%%%%%%%%%%%%%%%%%%%%%%%%%%%%%%%%%%%%%%%%
\section{Proof of the lower bound}
\label{sec:prooflowerbound}
%%%%%%%%%%%%%%%%%%%%%%%%%%%%%%%%%%%%%%%%%%%%%%%%%%%%%%%%%%
The proof of the lower bound proceeds along similar lines as the proof of the lower bound for the free energy in the thermodynamic limit \cite{Sei2008}. One crucial ingredient in this work is a c-number substitution for momentum modes smaller than some cutoff which allows one to include a condensate. From a technical point of view, the proof in \cite{Sei2008} is written in terms of the interacting Gibbs state of the system and uses the Berezin-Lieb inequality. The main difference between our setting and the one in \cite{Sei2008} is that we also want to make a statement about the 1-pdm of approximate minimizers of the Gibbs free energy functional. To that end, we develop an alternative approach that is based on the Gibbs variational principle and goes hand in hand with the c-number substitution, and therefore also with the approach in \cite{Sei2008}. To prove the statement about the 1-pdm of approximate minimizers of the Gibbs free energy functional, it will be necessary to prove the lower bound for the free energy related to the more general Hamiltonian
\begin{equation}
H_N^{\lambda} = H_N + \lambda \sum_{i=1}^N | \Phi \rangle \langle \Phi |_i,
\label{eq:1pdmcondensate1a}
\end{equation}
where $\Phi(x) = | \Lambda |^{-1/2}$ and the index $i$ indicates that the projection acts on the $i$-th particle. By adding this term we shift the energy of the lowest eigenvalue of $-\Delta$ by $\lambda$. In the following, we will assume that $\lambda \in [0 , (2 \pi/L )^2 \eta ]$ for some $0 < \eta < 1$.

Before presenting the details of the lower bound, we give for the convenience of the reader a short summary of the main ideas of the proof in \cite{Sei2008} in the context of the present setting.
%%%%%%%%%%%%%%%%%%%%%%%%%%%%%%%%%%%%%%%%%%%%%%%%%%%%%
\subsubsection*{Strategy of the proof of the lower bound}
%%%%%%%%%%%%%%%%%%%%%%%%%%%%%%%%%%%%%%%%%%%%%%%%%%%%%
A key ingredient in the proof of the lower bound for the free energy of the interacting system in \cite{Sei2008} is the observation that the interaction energy in \eqref{eq:main1} is, as long as $\beta \sim \beta_{\mathrm{c}}$, much smaller than the free energy of the ideal gas $F_0(\beta,N,L)$ (compare with Remark~4 in Section~\ref{sec:maintheorem}). A naive version of first order perturbation theory is, however, not applicable because the interaction energy of the Gibbs state of the ideal gas is too large (it is even infinite if hard spheres are considered), see also the discussion in the beginning of Section~\ref{sec:proofupperbound}. In the case of the GP limit with Dirichlet boundary conditions there is also a second obstacle, namely that the condensate wave function of the interacting system is not given by the ground state of the Laplacian in the box (as in case of the ideal gas), but rather by the minimizer of the GP energy functional \eqref{eq:GPfunctional}.

The first problem is solved in \cite{Sei2008} with the help of a Dyson Lemma. It allows to replace the singular and short ranged interaction potential $v_N$ by a softer potential with longer range. The price to pay for this replacement is a certain amount of kinetic energy. In the positive temperature setting it is important that only modes with momenta much larger than $\beta^{-1/2}$ are used in the Dyson Lemma, since the other modes are needed to obtain the free energy of the ideal gas in \eqref{eq:main1}.

After the replacement of the interaction potential a rigorous version of first order perturbation theory is applied. It is based on a correlation inequality \cite{RobertCorrelationinequ} that is applicable to fermionic systems and to bosonic system above the critical temperature for BEC of the ideal gas. It allows to replace a general state in the expectation of the interaction energy by the Gibbs state of the related ideal gas and to estimate the  error. An essential ingredient for this method is that the reference state in the perturbative analysis shows an approximate tensor product structure w.r.t. localization in different regions in position space. For a quasi-free state this is true if the off-diagonal of its 1-pdm decays sufficiently fast in position space. In order to overcome this shortcoming, coherent states are used in \cite{Sei2008} to replace creation and annihilation operators of certain low momentum modes by complex numbers. In particular, this allows for the description of a condensate. Coherent states show an exact tensor product structure w.r.t. spatial localization in different regions in space, and therefore fit seamlessly into the above framework. In the case of Dirichlet boundary conditions, this approach also allows us to take into account that the condensate wave function is given by the GP minimizer, which solves the second problem from above.

The statement in Theorem~\ref{thm:periodic} is uniform in the temperature as long as $\beta \gtrsim \beta_{\mathrm{c}}$. If the temperature is sufficiently low the free energy of the ideal gas in \eqref{eq:main1} is much smaller than the interaction energy and the approach from above cannot be expected to work. To extend the proof to this regime, we apply a different technique that uses in an essential way the  zero temperature result in \cite{LiYng1998}.
%%%%%%%%%%%%%%%%%%%%%%%%%%%%%%%%%%%%%%%%%%%%%%%%%%%%%
\subsection{Reduction to integrable potential}
%%%%%%%%%%%%%%%%%%%%%%%%%%%%%%%%%%%%%%%%%%%%%%%%%%%%%
In the proof of the lower bound we will make use of Fock spaces and, in particular, it will be required that the interaction potential has finite Fourier coefficients. As in \cite[Section~2.1]{Sei2008} we are therefore going to replace the potential $v$ by an integrable potential. This is achieved with the following lemma whose proof can be found in \cite[Sec.~2.1]{Sei2008}.
\begin{lemma}
	\label{lem:integrablepotential}
	Let $v : \mathbb{R}_+ \mapsto \mathbb{R}_+ \cup \{ + \infty \}$ have a finite scattering length $a_v$. For any $\epsilon > 0$ and any $\varphi>0$, there exists a function $\widetilde{v}$ with $0 \leq \widetilde{v}(r) \leq v(r)$ for all $r$, such that $\int_0^{\infty} \widetilde{v}(r) r^2 \de r \leq 2 \varphi$, and such that the scattering length $\widetilde{a}$ of $\widetilde{v}$ satisfies
	\begin{equation}
		\widetilde{a} \geq a_v \left( 1 - \sqrt{a_v/\varphi} \right) (1-\epsilon).
	\end{equation}
\end{lemma}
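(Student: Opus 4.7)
The lemma is quoted from \cite[Sec.~2.1]{Sei2008}; I sketch the strategy.

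The aim is to construct a pointwise reduction $0 \leq \widetilde v \leq v$ which is integrable with respect to $r^2\,dr$ and whose scattering length $\widetilde a$ stays close to $a_v$. A key analytic identity, obtained by integrating the zero-energy scattering equation $-\Delta f_0 + \tfrac12 v f_0 = 0$ over $\mathbb{R}^3$, applying Gauss's theorem, and using the asymptotic $f_0(r) \sim 1 - a_v/r$ at infinity, reads
\[
2 a_v = \int_0^\infty v(r) f_0(r) r^2 \,dr,
\]
which gives in particular the Born-type bound $2 a_v \leq \int_0^\infty v(r) r^2 \, dr$ whenever the right-hand side is finite. Hence one cannot hope for $\int \widetilde v r^2 \,dr$ much smaller than $2 \widetilde a$, so the form $\int \widetilde v r^2 \,dr \leq 2 \varphi$ together with $\widetilde a \approx a_v$ enforces $\varphi \gtrsim a_v$ for any nontrivial conclusion.

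Construction. Since $v$ may take the value $+\infty$, I would first truncate at height $M$: set $v_M := \min(v, M)$. The scattering length $a_{v_M}$ depends monotonically and continuously on $M$ with $a_{v_M} \to a_v$ as $M \to \infty$, so one can choose $M = M(\epsilon)$ with $a_{v_M} \geq (1-\epsilon/2) a_v$. Next restrict the support: define $\widetilde v(r) := v_M(r)\, \mathbf{1}_{[0,R]}(r)$ for some $R > 0$, so that $\widetilde v \leq v$ and $\int_0^\infty \widetilde v(r) r^2 \,dr \leq M R^3/3$. The integrability constraint $\int \widetilde v r^2 \,dr \leq 2 \varphi$ is then achieved by taking $M R^3 \leq 6\varphi$.

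To lower bound $\widetilde a$, one analyses the prototypical soft-sphere case $\widetilde v(r) = M\, \mathbf{1}_{r \leq R}$, for which the radial substitution $u(r) = r f(r)$ reduces the scattering equation to $-u'' + \tfrac{M}{2} u = 0$ on $[0,R]$ with $u(0) = 0$, matching $u(r) = r - \widetilde a$ on $[R,\infty)$. A direct calculation with $\kappa := \sqrt{M/2}$ gives the explicit formula $\widetilde a = R - \tanh(\kappa R)/\kappa$. In the regime $\kappa R \gg 1$ this yields $\widetilde a \geq R - 1/\kappa$, and the choices $M R^3 \sim \varphi$ and $R \sim a_v$ produce $\kappa R \sim \sqrt{\varphi/a_v}$ and therefore $\widetilde a \gtrsim a_v \bigl(1 - \sqrt{a_v/\varphi}\bigr)$, which is exactly the lemma's bound up to the prefactor $(1-\epsilon)$.

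The main obstacle is passing from this soft-sphere prototype to a general $v$: one must argue that the loss in scattering length caused by passing from $v$ to the restricted truncation $\widetilde v$ is controlled. The height truncation $v \to v_M$ is handled by continuity of the scattering length as a functional of $v$, contributing the factor $(1-\epsilon)$; the support restriction $v_M \to v_M \mathbf{1}_{[0,R]}$ is handled by comparing with the soft-sphere calculation above, exploiting the fact that the scattering length of a finite-scattering-length potential is essentially generated on scales of order $a_v$, so truncating outside a ball of comparable size costs at most a fraction $\sqrt{a_v/\varphi}$ of $a_v$.
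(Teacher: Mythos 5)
The paper does not prove this lemma; it is imported verbatim from \cite[Sec.~2.1]{Sei2008}, and you correctly identify the source and several of the right ingredients (the identity $2a_v=\int_0^\infty v f_0 r^2\,\mathrm{d}r$, the bound $f_0(r)\geq[1-a_v/r]_+$, and the soft-sphere formula $\widetilde a=R-\tanh(\kappa R)/\kappa$). Nevertheless your construction modifies the potential in the wrong place, and this is a genuine gap. Multiplying by $\mathbf{1}_{[0,R]}$ discards the tail of $v$, which is both unnecessary and fatal. Unnecessary, because the tail is automatically integrable: $\int_R^\infty v(r)r^2\,\mathrm{d}r\leq(1-a_v/R)^{-1}\int_R^\infty v f_0 r^2\,\mathrm{d}r\leq 2a_v/(1-a_v/R)$, so only the region where $v$ is large (the possibly hard core) threatens the constraint $\int\widetilde v\,r^2\,\mathrm{d}r\leq 2\varphi$. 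Fatal, because the heuristic that ``the scattering length is generated on scales of order $a_v$'' is false: a weak potential supported entirely on an annulus at radius $\gg a_v$ has scattering length $\approx\frac12\int v r^2\,\mathrm{d}r$ (Born regime), and your $\widetilde v=v_M\mathbf{1}_{[0,R]}$ with $R\sim a_v$ then vanishes identically, giving $\widetilde a=0$. Relatedly, the order of quantifiers ``first fix $M(\epsilon)$ by continuity, then shrink $R$'' fails already for a pure hard core of radius $a_v$: any $M$ making $a_{v_M}\geq(1-\epsilon/2)a_v$ has $\int v_M r^2\,\mathrm{d}r=Ma_v^3/3$, and enforcing $MR^3\leq 6\varphi$ afterwards forces $R$ below the core radius and destroys the scattering length. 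The height and the radius must be chosen jointly, with $M\sim\varphi/a_v^3$.

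The second gap is that your argument never actually lower-bounds $\widetilde a$ for a general $v$. From $\widetilde v\leq M\mathbf{1}_{[0,R]}$, monotonicity of the scattering length gives $\widetilde a\leq a_{M\mathbf{1}_{[0,R]}}$ --- the wrong direction --- so the explicit soft-sphere computation controls only the special case where $\widetilde v$ \emph{is} the soft sphere (the pure hard core). You acknowledge this as ``the main obstacle'' but the proposed fix is the false locality heuristic above. A correct route compares $\widetilde v$ with $v$ itself: writing $u_0=rf_0$ and $\widetilde u=r\widetilde f$ for the two zero-energy solutions and computing the Wronskian $\widetilde u u_0'-\widetilde u' u_0$ yields
\begin{equation*}
a_v-\widetilde a=\tfrac12\int_0^\infty\left(v(r)-\widetilde v(r)\right)f_0(r)\widetilde f(r)\,r^2\,\mathrm{d}r
\leq a_v\sup_{\{v\neq\widetilde v\}}\widetilde f,
\end{equation*}
using $v-\widetilde v\leq v$ and $\int v f_0 r^2\,\mathrm{d}r=2a_v$; the factor $\sqrt{a_v/\varphi}$ then comes from bounding $\widetilde f$ on the modification region (which lives at radii $\lesssim a_v$ where $\widetilde v$ has been softened to height $\sim\varphi/a_v^3$), and this is precisely the step your sketch is missing.
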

Since $0 \leq \widetilde{v}(r) \leq v(r)$ for all $r$, we can replace $v$ by $\widetilde{v}$ in the Hamiltonian $H_N^{\lambda}$ for a lower bound. The Hamiltonian we obtain by this procedure will be denoted by $\widetilde{H}_N^{\lambda}$. We also define $\widetilde{a}_N$ to be the scattering length of the scaled potential $\widetilde{v}_N$.
%%%%%%%%%%%%%%%%%%%%%%%%%%%%%%%%%%%%%%%%%%%%%%%%%%%%%
\subsection{Fock space}
%%%%%%%%%%%%%%%%%%%%%%%%%%%%%%%%%%%%%%%%%%%%%%%%%%%%%
In the proof of the lower bound it is convenient to give up the restriction on the number of particles and to work in Fock space instead of in $\mathcal{H}_N$. In this section we introduce the necessary notation for this analysis. By $\mu(\lambda)$ we denote the chemical potential of the ideal Bose gas related to the one-particle Hamiltonian $-\Delta + \lambda | \Phi \rangle \langle \Phi |$, leading to an expected number of $N$ particles, and we define $\mu_0 = \mu(0)$. Let $\mathscr{F}$ be the Fock space over $L^2(\Lambda)$. We define the Hamiltonian $\mathbb{H}^{\lambda}$ on $\mathscr{F}$ by
\begin{equation}
	\mathbb{H}^{\lambda} = \mathbb{T}^{\lambda} + \mathbb{V},
\end{equation}
where the kinetic and the potential energy operators are given by
\begin{equation}
	\mathbb{T}^{\lambda} = \sum_p (p^2 - \mu(\lambda) + \lambda \delta_{p,0}) a_p^* a_p, \quad \text{ and } \quad \mathbb{V} = \frac{1}{2 | \Lambda |} \sum_{p,k,\ell} \widehat{v}_N(p) a_{k+p}^* a_{\ell - p}^* a_{k} a_{\ell},
	\label{eq:TandV}
\end{equation}
respectively. Here $\delta_{p,0}$ denotes the Kronecker delta. The Fourier coefficients of $\widetilde{v}_N(|x|) = (L^{-1} N)^2 \widetilde{v}(N |x|/L)$ are denoted by $\widehat{v}_N(p) = \int_{\Lambda} \widetilde{v}_N(|x|) e^{-ipx} \de x$. Under the assumption $R_0 < N/2$, they are given in terms of the Fourier coefficients $\widehat{v}$ of $\widetilde{v}$ by $\widehat{v}_N(p) = L N^{-1} \widehat{v}(Lp/N)$. By construction, the Fourier coefficients of $\widehat{v}_N$ are bounded in absolute value by $| \widehat{v}_N(p) | \leq \widehat{v}_N(0) \leq 8 \pi \varphi L N^{-1}$, where $\varphi$ has been introduced in the previous section. In the following we will denote the grand canonical kinetic energy operator for $\lambda = 0$ by $\mathbb{T}$ and similarly for the full Hamiltonian.
%%%%%%%%%%%%%%%%%%%%%%%%%%%%%%%%%%%%%%%%%%%%%%%%%%%%%
\subsection{Coherent states and the Gibbs variational principle}
%%%%%%%%%%%%%%%%%%%%%%%%%%%%%%%%%%%%%%%%%%%%%%%%%%%%%
In this section we introduce a formalism that allows us to apply a c-number substitution while still keeping information on a given state $\Gamma$ whose free energy we want to investigate. We start by introducing notation for the c-number substitution. Let us pick some $p_{\mathrm{c}} >0$ and decompose the Fock space as $\mathscr{F} \cong \mathscr{F}_< \otimes \mathscr{F}_{>}$, where $\mathscr{F}_{<}$ and $\mathscr{F}_{>}$ denote the Fock spaces of the momentum modes with $|p| < p_{\mathrm{c}}$ and $|p| \geq p_{\mathrm{c}}$, respectively. The trace over $\mathscr{F}_<$ will be denoted by $\tr_<$ and similarly for $\mathscr{F}_{>}$. To keep the notation simple and because we do not expect it to cause confusion, we will denote the traces over $\mathscr{F}$ and $\mathcal{H}_N$ by the same symbol $\tr$. By $M$ we denote the number of momenta $p \in \frac{2 \pi}{L} \mathbb{Z}^3$ with $|p| < p_{\mathrm{c}}$. For a vector $z \in \mathbb{C}^M$ we introduce the coherent state $| z \rangle \in \mathscr{F}_<$ by 
\begin{equation}
	| z \rangle  = \exp \left( \sum_{|p| < p_{\mathrm{c}}} z_p a_p^* - \overline{z_p} a_p \right) | \text{vac} \rangle \equiv U(z) | \text{vac} \rangle,
	\label{eq:U}
\end{equation}
where $| \text{vac} \rangle$ denotes the vacuum in $\mathscr{F}_<$. Coherent states of this kind form an overcomplete basis with $\int_{\mathbb{C}^M} | z \rangle\langle z | \de z = \mathds{1}_{\mathscr{F}_<}$. Here $\de z = \prod_{| p | < p_{\mathrm{c}}}^M \de z_p$ with $z_p = x_p+i y_p$ and $\de z_p = \frac{\de x_p \de y_p}{\pi}$. For every state $\Gamma$ on the Fock space $\mathscr{F}$, we define the operator $\widetilde{\Gamma}_z$ acting on $\mathscr{F}_{>}$ by 
\begin{equation}
	\widetilde{\Gamma}_z = \langle z , \Gamma z \rangle = \tr_< | z \rangle \langle z| \ \Gamma.
\end{equation}
Additionally, we denote
\begin{equation}
\zeta_{\Gamma}(z) = \tr_>[ \widetilde{\Gamma}_z ]. 
\label{eq:measurenu}
\end{equation}
Since $\Gamma$ is a state, $\zeta_{\Gamma}(z) \de z$ is a probability measure on $\mathbb{C}^M$. The entropy of the classical distribution $\zeta_{\Gamma}$ is defined by
\begin{equation}
S( \zeta_{\Gamma} ) = - \int_{\mathbb{C}^M} \ln\left( \zeta_{\Gamma}(z) \right) \zeta_{\Gamma}(z) \de z.
\label{eq:entropyofnu}
\end{equation} 
On the level of the Hamiltonian, we will need the \textit{lower symbol} of $\mathbb{H}^{\lambda}$ which is defined by $\mathbb{H}_{\mathrm{s}}^{\lambda} \coloneqq \langle z, \mathbb{H}^{\lambda} z \rangle$. It is an operator-valued function from $\mathbb{C}^M$ into the unbounded operators on $\mathscr{F}_>$. 
Since $a_p | z \rangle = z_p | z \rangle$, the lower symbol can be obtain from $\mathbb{H}^{\lambda}$ by simply replacing $a_p$ by $z_p$ and $a_p^*$ by $\overline{z_p}$ for all $| p | < p_{\mathrm{c}}$. By $\mathbb{H}^{\lambda,\mathrm{s}}(z)$ we denote the \textit{upper symbol} of the Hamiltonian $\mathbb{H}^{\lambda}$ which is defined by the identity
\begin{equation}
\mathbb{H}^{\lambda} = \int_{\mathbb{C}^M} \mathbb{H}^{\lambda,\mathrm{s}}(z) \ |z \rangle\langle z | \ \de z.
\label{eq:uppersymbol}
\end{equation} 
To compute it, one has to replace $| z_p |^2$ by $| z_p |^2 - 1$ in the lower symbol and similarly with other polynomials in $z_p$, see \cite{Lieb2005}. 

The following Lemma shows that the entropy of a state $\Gamma$ can be bounded from above in terms of the expectation of the entropies of the states 
\begin{equation}
	\Gamma_z = \frac{\widetilde{\Gamma}_z}{\tr_> \widetilde{\Gamma}_z}
\end{equation}
acting on $\mathscr{F}_{>}$ w.r.t. the probability measure $\zeta_{\Gamma}(z) \de z$, plus one additional term that quantifies the entropy of the classical distribution $\zeta_{\Gamma}(z)$.
\begin{lemma}
	\label{lem:entropyz}
	Let $\Gamma$ be a state on $\mathscr{F}$. The entropy of $\Gamma$ is bounded in the following way:
	\begin{equation}
		S(\Gamma) \leq \int_{\mathbb{C}^M}  S\left( \Gamma_z \right) \zeta_{\Gamma}(z) \de z + S(\zeta_{\Gamma}).
		\label{eq:lemmaentropyboundz}
	\end{equation}
\end{lemma}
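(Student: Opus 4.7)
The plan is to derive the bound from the operator Jensen inequality of Hansen--Pedersen, applied to the operator convex function $g(t)=t\ln t$ on $[0,\infty)$ (with the convention $g(0)=0$, so in particular $g(0)\le 0$). For such a $g$, any isometry $V$ with $V^*V=\mathds{1}$, and any positive operator $A$, one has the operator inequality $g(V^*AV)\le V^* g(A)V$. The strategy is to choose a family of isometries adapted to both the coherent-state resolution of the identity and the partial trace $\Gamma\mapsto\widetilde{\Gamma}_z$.

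Concretely, for each $z\in\mathbb{C}^M$ I would introduce $V_z\colon\mathscr{F}_{>}\to\mathscr{F}$ by $V_z\chi=|z\rangle\otimes\chi$. Since $\langle z|z\rangle=1$ one has $V_z^*V_z=\mathds{1}_{\mathscr{F}_>}$, so $V_z$ is an isometry; in addition $V_zV_z^*=|z\rangle\langle z|\otimes\mathds{1}_{\mathscr{F}_>}$, so the coherent-state resolution of the identity takes the form $\int V_zV_z^*\de z=\mathds{1}_{\mathscr{F}}$, and one checks directly from the definition that $V_z^*\Gamma V_z=\widetilde{\Gamma}_z$. Applying the operator Jensen inequality with $A=\Gamma$ then gives the operator inequality $\widetilde{\Gamma}_z\ln\widetilde{\Gamma}_z \le V_z^*(\Gamma\ln\Gamma)V_z$ on $\mathscr{F}_>$. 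Taking $\tr_>$ of both sides and integrating over $z$, and using $\int V_zV_z^*\de z=\mathds{1}$ to dispose of the isometries on the right, yields
\begin{equation*}
\int_{\mathbb{C}^M}\tr_>\!\big[\widetilde{\Gamma}_z\ln\widetilde{\Gamma}_z\big]\de z \;\le\; \tr\!\big[\Gamma\ln\Gamma\big] \;=\; -S(\Gamma).
\end{equation*}

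The proof is then completed by expanding the left-hand side. Using the factorisation $\widetilde{\Gamma}_z=\zeta_\Gamma(z)\Gamma_z$ together with $\ln(\zeta_\Gamma(z)\Gamma_z)=(\ln\zeta_\Gamma(z))\mathds{1}_{\mathscr{F}_>}+\ln\Gamma_z$ and $\tr_>\Gamma_z=1$, the integrand becomes $\zeta_\Gamma(z)\ln\zeta_\Gamma(z)-\zeta_\Gamma(z)S(\Gamma_z)$, so the $z$-integral equals $-S(\zeta_\Gamma)-\int\zeta_\Gamma(z)S(\Gamma_z)\de z$. This is exactly the negative of the right-hand side of~\eqref{eq:lemmaentropyboundz}, so combining with the previous display yields the lemma. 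The main technical delicacy I anticipate is the infinite-dimensional, unbounded nature of $\Gamma\ln\Gamma$: both the Hansen--Pedersen inequality and the Fubini-type interchange of $\tr_>$ with the $z$-integration need to be justified in this setting. A standard workaround is to first establish the bound for the spectrally truncated states $P_n\Gamma P_n/\tr(P_n\Gamma P_n)$ with $P_n=\mathds{1}_{\{\Gamma\ge 1/n\}}$, where everything reduces to a finite-rank matrix statement, and then to pass to the limit $n\to\infty$ by monotone convergence and lower semicontinuity of the entropy.
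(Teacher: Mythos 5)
Your proposal is correct, and it reaches the key inequality $\int_{\mathbb{C}^M} S(\widetilde{\Gamma}_z)\,\de z \geq S(\Gamma)$ by a genuinely different route from the paper, even though the overall architecture (first reduce the claim to this inequality via the factorization $\widetilde{\Gamma}_z=\zeta_\Gamma(z)\Gamma_z$, then prove it) is the same. The paper expands $\Gamma=\sum_\alpha\lambda_\alpha|\Psi_\alpha\rangle\langle\Psi_\alpha|$ in its eigenbasis and applies the \emph{scalar} Jensen inequality twice: once to pull the logarithm out of the inner product $\langle\Psi_\alpha^z,\ln(\widetilde{\Gamma}_z\lambda_\alpha^{-1})\Psi_\alpha^z\rangle$, and once more with respect to the probability measure $\lambda_\alpha\|\Psi_\alpha^z\|^2\,\de z$ to reduce everything to $\int\tr_>\widetilde{\Gamma}_z\,\de z=1$. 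You instead invoke the operator Jensen (Hansen--Pedersen) inequality for the operator convex function $t\mapsto t\ln t$ together with the isometries $V_z\chi=|z\rangle\otimes\chi$, for which indeed $V_z^*\Gamma V_z=\widetilde{\Gamma}_z$ and $\int V_zV_z^*\,\de z=\mathds{1}_{\mathscr{F}}$ (and for a genuine isometry the hypothesis $g(0)\le 0$ is not even needed). Your argument is more structural and makes transparent that the statement is a monotonicity property of the entropy under the "pinching" by coherent states, whereas the paper's argument is elementary and self-contained, requiring nothing beyond convexity of $-\ln$ and the resolution of the identity. The price of your route is exactly what you identify: operator convexity of $t\ln t$, the validity of Hansen--Pedersen for unbounded trace-class arguments in infinite dimensions, and the trace--integral interchange (note $\Gamma\ln\Gamma\le 0$, so Tonelli applies once $S(\Gamma)<\infty$, which is the only case of interest); your truncation scheme handles these, so the proof goes through.
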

\begin{proof}
We write the first term on the right-hand side of \eqref{eq:lemmaentropyboundz} as
\begin{equation}
	\int_{\mathbb{C}^M}  S\left( \Gamma_z \right) \zeta_{\Gamma}(z) \de z = \int_{\mathbb{C}^M} S(\widetilde{\Gamma}_z) \de z + \int_{\mathbb{C}^M} \left( \tr_> \widetilde{\Gamma}_z \right) \ln\left( \tr_> \widetilde{\Gamma}_z \right) \de z.
	\label{eq:lemmaentropyboundz1}
\end{equation}	
To prove the result, we need to show that
\begin{equation}
	\int_{\mathbb{C}^M} S(\widetilde{\Gamma}_z) \de z \geq S(\Gamma)
	\label{eq:lemmaentropyboundz1b}
\end{equation}
holds. To that end, we expand $\Gamma = \sum_{\alpha=1}^{\infty} \lambda_{\alpha} | \Psi_{\alpha} \rangle\langle \Psi_{\alpha} |$ which implies $\widetilde{\Gamma}_z =  \sum_{\alpha = 1}^{\infty} \lambda_{\alpha}| \Psi^z_{\alpha} \rangle\langle \Psi^z_{\alpha} |$. In the second equality we denoted $\Psi_{\alpha}^z = \langle z, \Psi_{\alpha} \rangle$. Since $\int_{\mathbb{C}^M} \langle \Psi^z_{\alpha}, \Psi^z_{\alpha} \rangle \de z = 1$, which follows from $\int_{\mathbb{C}^M} | z \rangle\langle z | \de z = \mathds{1}_{\mathscr{F}_{<}}$, Eq.~\eqref{eq:lemmaentropyboundz1b} is equivalent to
\begin{equation}
	- \int_{\mathbb{C}^M} \sum_{\alpha=1}^{\infty} \lambda_{\alpha} \langle \Psi_{\alpha}^z , \ln(\widetilde{\Gamma}_z  \lambda_{\alpha}^{-1}) \Psi_{\alpha}^z \rangle \de z \geq 0.
	\label{eq:lemmaentropyboundz1c}
\end{equation}
We apply Jensen's inequality to show that $\| \Psi_{\alpha}^z \|^{-2}  \langle \Psi_{\alpha}^z , \ln(\widetilde{\Gamma}_z  \lambda_{\alpha}^{-1}) \Psi_{\alpha}^z \rangle \leq \ln( \| \Psi_{\alpha}^z \|^{-2}  \langle \Psi_{\alpha}^z , \widetilde{\Gamma}_z  \lambda_{\alpha}^{-1} \Psi_{\alpha}^z \rangle )$. Hence, the left-hand side of \eqref{eq:lemmaentropyboundz1c} is bounded from below by
\begin{equation}
	- \int_{\mathbb{C}^M} \sum_{\alpha=1}^{\infty} \lambda_{\alpha} \| \Psi_{\alpha}^z \|^{2} \ln\left( \| \Psi_{\alpha}^z \|^{-2} \langle \Psi_{\alpha}^z , \widetilde{\Gamma}_z  \lambda_{\alpha}^{-1} \Psi_{\alpha}^z \rangle \right) \de z. 
	\label{eq:lemmaentropyboundz1d}
\end{equation}
The measure $\lambda_{\alpha} \| \Psi_{\alpha}^z \|^{2} \de z$ is a probability measure with respect to summation over $\alpha \in \mathbb{N}$ and integration over $\mathbb{C}^M$ in $z$. Another application of Jensen's inequality therefore tells us that
\begin{align}
	- \int_{\mathbb{C}^M} \sum_{\alpha=1}^{\infty} \lambda_{\alpha} \| \Psi_{\alpha}^z \|^{2} \ln\left( \| \Psi_{\alpha}^z \|^{-2} \langle \Psi_{\alpha}^z , \widetilde{\Gamma}_z  \lambda_{\alpha}^{-1} \Psi_{\alpha}^z \rangle \right) \de z &\geq - \ln \left( \int_{\mathbb{C}^M}   \sum_{\alpha=1}^{\infty} \langle \Psi_{\alpha}^z , \widetilde{\Gamma}_z \Psi_{\alpha}^z \rangle  \de z \right) \label{eq:lemmaentropyboundz1e} \\
	&= -\ln \left( \int_{\mathbb{C}^M} \tr_> \widetilde{\Gamma}_z \de z \right) = 0.  \nonumber
\end{align}
To come to the last line, we used $\sum_{\alpha=1}^{\infty} | \Psi_{\alpha}^z \rangle\langle \Psi_{\alpha}^z | = \mathds{1}_{\mathscr{F}_>}$. This proves the claim. 
\end{proof}
With the definitions from above and Lemma~\ref{lem:entropyz}, we can derive a lower bound for the Gibbs free energy functional. Let $\Gamma$ be a state on $\mathcal{H}_N \subset \mathscr{F}$.  Eq.~\eqref{eq:uppersymbol} allows us to write the expectation of the energy as 
\begin{equation}
	\tr \left[ H_N^{\lambda} \Gamma \right] = \mu(\lambda) N + \int_{\mathbb{C}^M} \tr \left[ \mathbb{H}^{\lambda,\mathrm{s}}(z) \ |z \rangle \langle z |  \ \Gamma \right] \de z = \mu(\lambda) N + \int_{\mathbb{C}^M} \tr_> \left[\mathbb{H}^{\lambda,\mathrm{s}}(z) \Gamma_z \right] \zeta_{\Gamma}(z) \de z.
\end{equation}
In combination with Lemma~\ref{lem:entropyz}, this implies
\begin{equation}
	\tr \left[ H^{\lambda}_N \Gamma \right] - \tfrac{1}{\beta} S(\Gamma) \geq \mu(\lambda) N + \int_{\mathbb{C}^M} \left\{ \tr_> \left[\mathbb{H}^{\lambda,\mathrm{s}}(z) \Gamma_z \right] - \tfrac{1}{\beta} S(\Gamma_z) \right\} \zeta_{\Gamma}(z) \de z - \tfrac{1}{\beta} S(\zeta_{\Gamma}).
	\label{eq:lemmaentropyboundz5}
\end{equation}
Although the upper symbol naturally appears in the above inequality, it is more convenient to work with the lower symbol instead. Let $\mathbb{N}_{\mathrm{s}} = |z|^2 + \sum_{|p| \geq p_{\mathrm{c}}} a_p^* a_p$ denote the lower symbol of the particle number operator. The difference between the upper and the lower symbol $\Delta \mathbb{H}^{\lambda}(z) = \mathbb{H}^{\lambda}_{\mathrm{s}}(z) - \mathbb{H}^{\lambda,\mathrm{s}}(z)$ can be written as
\begin{align}
\Delta \mathbb{H}^{\lambda}(z) =& \sum_{| p | < p_c} \left( p^2 + \delta_{p,0} \lambda - \mu(\lambda) \right) + \frac{1}{2 | \Lambda |} \bigg[ \widehat{v}_N(0) \left( 2 M \mathbb{N}_s(z) - M^2 \right) \label{eq:lemmaentropyboundz6} \\
&+ 2 \sum_{| \ell | <p_c, |k| \geq p_c} \widehat{v}_N(\ell - k) a_k^{*} a_k + \sum_{| \ell |, |k| <p_c} \widehat{v}_N(\ell - k) \left( 2 |z_k|^2 - 1 \right) \bigg]. \nonumber
\end{align}
The bound $| \widehat{v}_N | \leq 8 \pi \varphi L N^{-1}$ therefore implies
\begin{equation}
	\Delta \mathbb{H}^{\lambda}(z) \leq M \left( p_c^2 - \mu(\lambda) \right) + \lambda + \frac{16 \pi \varphi L}{| \Lambda | N} M \mathbb{N}_{\mathrm{s}}(z)
	\label{eq:lemmaentropyboundz7}
\end{equation}
as well as
\begin{equation}
	\int_{\mathbb{C}^M} \tr \left[ \Delta \mathbb{H}^{\lambda}(z) \Gamma_z \right] \zeta_{\Gamma}(z) \de z \leq M (p_c^2 - \mu(\lambda)) + \lambda + \frac{16 \pi \varphi L}{| \Lambda | N} M \left( M+N \right) \equiv Z^{(1)}.
	\label{eq:lemmaentropyboundz8}
\end{equation}
To obtain the second bound, we used that $\int_{\mathbb{C}^M} \tr_>[ \mathbb{N}_{\mathrm{s}}(z) \Gamma_z ] \zeta_{\Gamma}(z) \de z = N + M$. Eq.~\eqref{eq:lemmaentropyboundz8} allows us to replace the upper by the lower symbol in \eqref{eq:lemmaentropyboundz5} in a controlled way. Before we state the final result, let us introduce the state $\Upsilon^z$ on $\mathscr{F}$ by
\begin{equation}
	\Upsilon^z = | z \rangle \langle z | \otimes \Gamma_z.
	\label{eq:lemmaentropyboundz9}
\end{equation}
We have $\tr [\mathbb{H}^{\lambda}_s(z) \Gamma_z] = \tr [\mathbb{H}^{\lambda} \Upsilon^z]$ and $S(\Gamma_z) = S(\Upsilon^z)$. Putting \eqref{eq:lemmaentropyboundz5} and \eqref{eq:lemmaentropyboundz8} together, we have thus shown that
\begin{equation}
	\tr \left[ H_N^{\lambda} \Gamma \right] - \tfrac{1}{\beta} S(\Gamma) \geq \mu(\lambda) N +  \int_{\mathbb{C}^M} \left\{ \tr \left[ \mathbb{H}^{\lambda} \Upsilon^z \right] - \tfrac{1}{\beta} S\left( \Upsilon^z \right) \right\} \zeta_{\Gamma}(z) \de z - \tfrac{1}{\beta} S(\zeta_{\Gamma}) - Z^{(1)}.
	\label{eq:lemmaentropyboundz11}
\end{equation}
We will later choose the parameters $p_{\mathrm{c}}$ and $\varphi$ such that $Z^{(1)} \ll | \Lambda | a_N \varrho^2$. Eq.~\eqref{eq:lemmaentropyboundz11} is the formula we were looking for. It should be compared to (2.3.9) and (2.3.10) in \cite{Sei2008}, in which a version of the grand canonical Gibbs state of the interacting system appears. In contrast to that, \eqref{eq:lemmaentropyboundz11} allows to use the c-number substitution while still working with a given state $\Gamma$. The Gibbs variational principle applied to $\tr \mathbb{H}^{\lambda} \Upsilon^z - \tfrac{1}{\beta} S\left( \Upsilon^z \right)$ will later allow us to obtain information on an approximate minimizer of the Gibbs free energy functional \eqref{eq:Gibbsfreeenergyfunctional}, see Sec.~\ref{sec:asymptotics1pdm}.
\begin{remark}
\label{rem:linktoRobertspaper1}
In \cite{Sei2008} the additional term 
\begin{equation}
	\mathbb{K} = 4 \pi \widetilde{a} \frac{C}{| \Lambda |} (\mathbb{N} - N)^2
	\label{eq:K}
\end{equation}
is added to the second quantized Hamiltonian before relaxing the restriction on the particle number. Like this one obtains a strong control on the expected number of particles in the system. We do not need this term in our approach because the information that the state $\Gamma$ has exactly $N$ particles is still encoded in the Fock space formalism through the state $\Gamma_z$ and the measure $\zeta_{\Gamma}(z) \de z$.  
\end{remark}

In the remaining part of this section we will go through the proof in \cite{Sei2008}, mention changes due to our approach and collect the necessary results. The following sections will be named like the ones in \cite{Sei2008}.
%%%%%%%%%%%%%%%%%%%%%%%%%%%%%%%%%%%%%%%%%%%%%%%%%%%%%
\subsection{Relative entropy and a-priori bounds}
%%%%%%%%%%%%%%%%%%%%%%%%%%%%%%%%%%%%%%%%%%%%%%%%%%%%%
In this section we derive an a-priori bound for states $\Gamma$ whose free energy is small in an appropriate sense. This bound is the only information we are going to need about the state to prove the lower bound.

For two general states $\Gamma$ and $\Gamma'$ on Fock space we denote by 
\begin{equation}
	S(\Gamma,\Gamma') = \tr \left[ \Gamma \left( \ln\left( \Gamma \right) - \ln\left( \Gamma' \right) \right) \right] 
	\label{eq:relentropy}
\end{equation}
the relative entropy of $\Gamma$ with respect to $\Gamma'$. It is a nonnegative functional that equals zero if and only if $\Gamma = \Gamma'$. Let $\Gamma^0$ be the Gibbs state corresponding to $\mathbb{T}_{\mathrm{s}}(z)$ at inverse temperature $\beta$ on $\mathscr{F}_>$, which is independent of $z$. We emphasize that $\mathbb{T}_{\mathrm{s}}(z)$ is the lower symbol of the grand canonical kinetic energy operator with $\lambda = 0$. Since the interaction potential $\widetilde{v}_N$ and $\lambda$ are nonnegative, we have
\begin{equation}
\tr \left[ \mathbb{H}^{\lambda} \Upsilon^z \right] - \tfrac{1}{\beta} S\left( \Upsilon^z \right) \geq \tr \left[ \mathbb{T} \Upsilon^z \right] - \tfrac{1}{\beta} S\left(\Upsilon^z \right) \geq -\tfrac{1}{\beta} \ln \tr_> \left[ e^{-\beta \mathbb{T}_s(z)} \right] + \tfrac{1}{\beta} S\left(\Gamma_z,\Gamma^0\right). 
\label{eq:aprioribounds1}
\end{equation}
Let us integrate both sides of the above equation with $\zeta_{\Gamma} (z) \de z$ over $\mathbb{C}^M$. The first term on the right-hand side equals
\begin{equation}
	-\tfrac{1}{\beta} \int_{\mathbb{C}^M} \ln \tr_> \left[ e^{-\beta \mathbb{T}_s(z)} \right] \zeta_{\Gamma}(z) \de z =   \sum_{|p|<p_{\mathrm{c}}} \int_{\mathbb{C}^M} \left( p^2 - \mu_0 \right) | z_p |^2 \zeta_{\Gamma}(z) \de z + \tfrac{1}{\beta} \sum_{|p| \geq p_{\mathrm{c}}} \ln\left( 1 - e^{-\beta\left( p^2 - \mu_0 \right)} \right).
	\label{eq:aprioribounds2bb}
\end{equation}
The chemical potential $\mu_0$ is negative because the lowest eigenvalue of $-\Delta$ equals zero. From the Gibbs variational principle we know that
\begin{align}
	\sum_{|p|<p_{\mathrm{c}}} \int_{\mathbb{C}^M} \left( p^2 - \mu_0 \right) | z_p |^2 \zeta_{\Gamma}(z) \de z - \tfrac{1}{\beta} S(\zeta_{\Gamma}) &\geq -\tfrac{1}{\beta} \ln \left( \int_{\mathbb{C}^M} \exp\left( - \beta \sum_{|p|<p_{\mathrm{c}}} \left( p^2 - \mu_0 \right) | z_p |^2 \right) \de z \right) \label{eq:aprioribounds2bc} \\
	&= \tfrac{1}{\beta} \sum_{|p| < p_{\mathrm{c}}} \ln\left( \beta \left( p^2 - \mu_0 \right) \right) \geq \tfrac{1}{\beta} \sum_{|p| < p_{\mathrm{c}}}  \ln\left( 1 - e^{-\beta\left( p^2 - \mu_0 \right)} \right). \nonumber
\end{align} 
To arrive at the last line we used the inequality $x \geq 1 - e^{-x}$.
Eqs. \eqref{eq:aprioribounds1}--\eqref{eq:aprioribounds2bc} therefore imply
\begin{equation}
\int_{\mathbb{C}^M} \left\{ \tr \left[ \mathbb{H}^{\lambda} \Upsilon^z \right] - \tfrac{1}{\beta} S\left(\Upsilon^z \right)	\right\} \zeta_{\Gamma}(z) \de z - \tfrac{1}{\beta} S(\zeta_{\Gamma}) \geq \ \tfrac{1}{\beta} \sum_p \ln\left( 1 - e^{-\beta\left( p^2 - \mu_0 \right)} \right) + \tfrac{1}{\beta} \int_{\mathbb{C}^M} S\left(\Gamma_z,\Gamma^0\right) \zeta_{\Gamma}(z) \de z. 
\label{eq:aprioribounds3}
\end{equation}
Note that we have chosen $\mu_0$ such that the expected number of particles in the grand canonical system equals $N$. Concerning the lower bound, it is sufficient to consider states $\Gamma$ with free energy bounded from above by
\begin{equation}
\tr \left[ H_N^{\lambda} \Gamma \right] - \tfrac{1}{\beta} S(\Gamma) \leq F_0(\beta,N,L,\lambda) + | \Lambda | 16 \pi a_N \varrho^2 .
\label{eq:relentropapriori2}
\end{equation}
Here $F_0(\beta,N,L,\lambda)$ denotes the canonical free energy for the Hamiltonian $H_N^{\lambda}$ with $v = 0$. The actual lower bound we are going to prove will be smaller than the right-hand side of \eqref{eq:relentropapriori2}, that is, the statement will hold independently of this assumption. We use Lemma~\ref{lem:freeenergy} in the Appendix to obtain an upper bound for the canonical free energy in Eq.~\eqref{eq:relentropapriori2} in terms of the grand canonical free energy, that is, \eqref{eq:gcfreeenergy} with $\mu_0$ replaced by $\mu(\lambda)$ in the first term and $p^2 - \mu_0$ replaced by $p^2 + \delta_{p,0} \lambda - \mu(\lambda)$ in the second term. Together with \eqref{eq:lemmaentropyboundz11} and \eqref{eq:relentropapriori2}, this implies
\begin{align}
	\int_{\mathbb{C}^M} S\left(\Gamma_z,\Gamma^0\right) \zeta_{\Gamma}(z) \de z \leq& \ \sum_p \ln\left( 1 - e^{-\beta\left( p^2 - \mu(\lambda) + \delta_{p,0} \lambda \right)} \right) - \sum_p \ln\left( 1 - e^{-\beta\left( p^2 - \mu_0 \right)} \right)  \label{eq:relentropapriori3} \\
	& + \beta | \Lambda | 16 \pi \beta a_N \varrho^2 + \beta Z^{(1)} + \ln(N+1) + 1. \nonumber
\end{align}
To obtain an upper bound on the difference between the two grand canonical potentials in the first line, we write
\begin{equation}
	\sum_{p \in \frac{2 \pi}{L} \mathbb{Z}^3} \ln\left( 1 - e^{-\beta\left( p^2 - \mu(\lambda) + \delta_{p,0} \lambda \right)} \right) \leq \sum_{| p | \geq \frac{2 \pi}{L}} \ln\left( 1 - e^{-\beta\left( p^2 - \mu(\lambda) \right)} \right) \leq \sum_{| p | \geq \frac{2 \pi}{L}} \ln\left( 1 - e^{-\beta\left( p^2 - \mu_0 \right)} \right).
	\label{eq:relentropapriori3b}
\end{equation}
The second estimate follows from $\mu_0 \leq \mu(\lambda)$ which is implied by the monotonicity of the map $\lambda \mapsto \mu(\lambda)$. In combination with \eqref{eq:relentropapriori3} and 
\begin{equation}
	\ln\left( 1 - e^{\beta \mu_0 } \right) \lesssim \ln\left( \frac{1}{-\beta \mu_0} \right) \lesssim \ln(N),
\end{equation}
\eqref{eq:relentropapriori3b} 
proves
\begin{equation}
	\int_{\mathbb{C}^M} S\left(\Gamma_z,\Gamma^0\right) \zeta_{\Gamma}(z) \de z \leq | \Lambda | 16 \pi \beta a_N \varrho^2 + \beta Z^{(1)} + \text{const.} \ln(N).
	\label{eq:a-priori1}
\end{equation}
This is the a-priori bound we were looking for. To compute the interaction energy, we will use \eqref{eq:a-priori1} to replace $\Gamma_z$ by $\Gamma^0$ in a controlled way. In other words, the lower bound represents a rigorous version of first order perturbation theory. 
\begin{remark}
	The interacting free energy corresponding to $H_N^{\lambda}$ depends on $\lambda$ only through the free energy of the ideal gas to leading order. This is because the interaction energy depends, apart from $| \Lambda |$, $a_N$ and $\varrho$ (which are independent of $\lambda$), only on the expected density of the condensate $\varrho_0(\beta,N,L,\lambda)$. It can be checked, however, that $\varrho_0(\beta,N,L,\lambda)$ does not depend on $\lambda$ to leading order if $\lambda \in [0, (2 \pi/L)^2 \eta]$ with $0<\eta<1$. This justifies the use of $\Gamma^0$ in the computation of the interaction energy.
\end{remark}

We also derive a second a-priori bound. It is a simple estimate for the variance of the probability measure $\zeta_{\Gamma}(z) \de z$ which counts the number of particles in the Fock space with momenta smaller than or equal to $p_{\mathrm{c}}$ and reads
\begin{equation}
	\int_{\mathbb{C}^M} |z|^2 \zeta_{\Gamma}(z) \de z \leq N + M.
	\label{eq:relentropapriori4}
\end{equation}
To prove \eqref{eq:relentropapriori4} we use $|z|^2 \leq \tr[ \mathbb{N}_{\mathrm{s}}(z) \Gamma_z]$ and $\int_{\mathbb{C}^M} \tr[ \mathbb{N}_{\mathrm{s}}(z) \Gamma_z ] \zeta_{\Gamma}(z) \de z = N + M$.
%%%%%%%%%%%%%%%%%%%%%%%%%%%%%%%%%%%%%%%%%%%%%%%%%%%%%
\subsection{Replacing vacuum}
\label{sec:replacingvacuum}
%%%%%%%%%%%%%%%%%%%%%%%%%%%%%%%%%%%%%%%%%%%%%%%%%%%%%
In order to prove the lower bound, we have to estimate the kinetic energy and the interaction energy of states of the form $\Upsilon^z = U(z) | \text{vac} \rangle\langle \text{vac} | U(z)^* \otimes \Gamma_z$ with $U(z)$ defined in \eqref{eq:U}, and where $\Gamma_z$ obeys the a-priori bound \eqref{eq:a-priori1}. We find it necessary for this analysis to replace the vacuum in the formula for $\Upsilon^z$ by a more general quasi-free state, which we do in a controlled way in this section. This will become important below when the interaction energy of $\Upsilon^z$ is computed. For this purpose the latter will be replaced by a quasi-free state, whose one-particle density matrix should show rapid off-diagonal decay in order for the localization technique of the relative entropy to be applicable, see \cite[Sections~2.8,~2.13]{Sei2008}. Hence the momentum distribution needs to be sufficiently smooth and cannot vanish identically for the low momentum modes (as it does in the vacuum state). %We define this quasi-free state in \eqref{eq:locrelentropy1}. In order to be able to compare the state $\Upsilon^z$ during this analysis to the reference state, we need to replace also its vacuum. 

We denote by $\Pi$ a particle-number conserving quasi-free state on $\mathscr{F}_<$. It is fully determined by its 1-pdm
\begin{equation}\label{eq:1}
	\pi = \sum_{|p| < p_{\mathrm{c}}} \pi_p | p \rangle \langle p |.
\end{equation} 
Here $| p \rangle$ denotes a plane wave state in $L^2(\Lambda)$ with momentum $p$. We also define $P =  \sum_{|p| < p_{\mathrm{c}}} \pi_p = \trs \pi$. Here and in the following we denote by $\trs[\cdot]$ the trace over the one-particle Hilbert space $\mathcal{H}$. Finally, let us introduce the state
\begin{equation}
	\Upsilon_{\pi}^z = U(z) \Pi U(z)^* \otimes \Gamma_z
\end{equation}
on $\mathscr{F}$. In order to replace $\Upsilon^z$ by $\Upsilon_{\pi}^z$ in a controlled way, we have to estimate the effect of this replacement on the kinetic and the potential energy. Our analysis follows the one in \cite[Section~2.5]{Sei2008} with the only difference that we control the particle number with the measure $\zeta_{\Gamma}(z) \de z$ and not with the help of the operator $\mathbb{K}$, see Remark~\ref{rem:linktoRobertspaper1}. More concretely, we use the identity $\int_{\mathbb{C}^M} \tr_> [ \mathbb{N}_{\mathrm{s}} \Gamma_z ] \zeta_{\Gamma}(z) \de z = N + M$. When we go through the analysis in \cite[Section~2.5]{Sei2008}, we obtain
\begin{equation}
	\int_{\mathbb{C}^M} \tr \left[ \mathbb{V} \left( \Upsilon_{\pi}^z - \Upsilon^z \right) \right] \zeta_{\Gamma}(z) \de z \leq \frac{8 \pi \varphi L}{| \Lambda | N} \left( P^2 + 2 P \left[ N + M \right] \right) \equiv Z^{(2)}. \label{eq:replacingvaccum2}
\end{equation}

We also have to replace $\Upsilon^z$ by $\Upsilon_{\pi}^z$ in the kinetic energy which can be done with the identity
\begin{equation}
	\tr\left[ \mathbb{T}^{\lambda} \Upsilon^z \right] = \tr\left[ \mathbb{T}^{\lambda} \Upsilon_{\pi}^z \right] - \sum_{|p| < p_{\mathrm{c}}}  \left( p^2 + \delta_{p,0} \lambda - \mu(\lambda) \right) \pi_p.
	\label{eq:replacingvaccum3}
\end{equation}
In combination with \eqref{eq:lemmaentropyboundz11}, we obtain
\begin{align}
	\tr \left[ H_N^{\lambda} \Gamma \right] - \tfrac{1}{\beta} S(\Gamma) \geq& \ \mu(\lambda) N + \int_{\mathbb{C}^M} \left\{ \tr \left[ \left( \mathbb{T}^{\lambda} + \mathbb{V} \right) \Upsilon_{\pi}^z \right] - \tfrac{1}{\beta} S(\Upsilon^z)  \right\} \zeta_{\Gamma}(z) \de z - \tfrac{1}{\beta} S(\zeta_{\Gamma})
	\label{eq:replacingvaccum4} \\
	& - \sum_{|p| < p_c} \left( p^2 + \delta_{p,0} \lambda - \mu(\lambda) \right) \pi_p - Z^{(1)} - Z^{(2)} \nonumber
\end{align}
as a lower bound for the free energy of $\Gamma$. 
%%%%%%%%%%%%%%%%%%%%%%%%%%%%%%%%%%%%%%%%%%%%%%%%%%%%%
\subsection{Dyson Lemma and Filling the Holes}
%%%%%%%%%%%%%%%%%%%%%%%%%%%%%%%%%%%%%%%%%%%%%%%%%%%%%
The sections 2.6 (Dyson Lemma) and 2.7 (Filling the Holes) in \cite{Sei2008} remain basically unchanged. To introduce several quantities that are needed later and to mention the necessary changes due to the term $\lambda a_0^* a_0$ in the Hamiltonian, we collect the main result here. The Dyson Lemma \cite[Lemma~2]{Sei2008} is used to replace the singular and short ranged potential $\widetilde{v}_N$ by a softer potential with a longer range at the expense of a certain amount of kinetic energy. To be precise, only the high momentum modes are used for the Dyson Lemma. This is necessary because the low momentum modes are used to obtain the free energy of the ideal Bose gas. The Dyson Lemma naturally leads to an effective interaction potential with a hole around zero. Because it will be necessary for the computation of the interaction energy, this potential is replaced by a slightly different one without a hole.

By $R$ we denote the length scale of the effective potential from the Dyson Lemma satisfying $10 R_0 L /N < R < L/2$. When this potential is replaced by a potential without a hole in the middle, one obtains a potential with a slightly reduced scattering length
\begin{equation}
	a'_N = \widetilde{a}_N (1- \epsilon)(1-\kappa) \left( 1 - \frac{18}{(\pi/4)^3}(4-\pi) \frac{\left( R_0 L/N \right)^3}{(R/10)^3 - \left( R_0 L/N \right)^3} \frac{1}{j(1/10)} \right),
	\label{eq:Dysonholes1a}
\end{equation}
with two parameters $0 < \kappa <1$ and $\epsilon > 0$ that are related to the Dyson Lemma, see \cite[Sec.~2.6]{Sei2008}. The definition of the function $j$ is given by
\begin{equation}
	j(t) = 12 (t+2) [1-t]_+^2.
	\label{eq:Dysonholes1b}
\end{equation}
We apply the Dyson Lemma and the analysis to replace the relevant potential by one without a hole in the same way as in \cite[Sec.~2.7]{Sei2008} and compute the term $\lambda a_0^* a_0$ separately. The final result of this analysis reads
\begin{align}
	\tr \left[ \left( \mathbb{H}^{\lambda} + \mathbb{V} \right) \Upsilon_{\pi}^z \right] - \tfrac{1}{\beta} S(\Upsilon^z) \geq& - \tfrac{1}{\beta} \ln \tr_> e^{-\beta \mathbb{T}_{\mathrm{s}}^{\mathrm{c}}(z)} + \tr\left[ \mathbb{W} \Upsilon_{\pi}^z \right] + \tfrac{1}{\beta} S\left( \Gamma_z, \Gamma^{0,\lambda}_{\mathrm{c}} \right)  \label{eq:Dysonholes1} \\
	&+ \sum_{| p| < p_{\mathrm{c}}} \left[ (1 - \kappa + \kappa') p^2 + \delta_{p,0} \lambda - \mu(\lambda) \right] \pi_p. \nonumber
\end{align}
In the above equation $\mathbb{T}_{\mathrm{s}}^{\mathrm{c}}(z)$ denotes the lower symbol of the operator
\begin{equation}
	\mathbb{T}^{\mathrm{c}} = \sum_p \epsilon(p) a_p^* a_p, \quad \text{ where } \quad \epsilon(p) = \delta_{p,0} \lambda + \kappa' p^2 + (1-\kappa) p^2 \left(1- \nu(sp)^2 \right) - \mu(\lambda)
	\label{eq:Dysonholes2}
\end{equation}
and 
\begin{equation}
	\kappa' = \kappa - \frac{24 \widetilde{a}_N  }{\pi^2} \frac{(4 R_0 L/N)^2}{R^3}. 
	\label{eq:Dysonholes2b}
\end{equation}
We will later choose $\kappa$ and $R$ such that $\widetilde{a}_N (R_0 L/N)^2/R^3 \ll \kappa$. This in particular implies $\kappa' > 0$. The function $\nu : \mathbb{R}^3 \mapsto \mathbb{R}_+$ is chosen such that $\nu(p) = 0$ for $|p| \leq 1$, $\nu(p) = 1$ for $| p | \geq 2$, and $0 \leq \nu(p) \leq 1$ in-between. It is used to implement the fact that only the high momentum modes are used in the Dyson Lemma. The parameter $s$ obeys $s \geq R$ and will later be chosen such that $s \gg R$. We will also choose $\kappa \ll 1$. In combination with $\lambda \leq (2\pi/L)^2 \eta$ with $0 < \eta < 1$, this implies $\epsilon(p) > 0$ for all $p$. The effective interaction potential $\mathbb{W}$ will not be specified here because we will use the same estimate for $\tr\left[ \mathbb{W} \Upsilon_{\pi}^z \right]$ as in \cite{Sei2008}. Its definition can be found in \cite[Sec.~2.7]{Sei2008}. Note that, compared to \cite[(2.7.15)]{Sei2008}, we have the additional term $ \beta^{-1} S( \Gamma_z, \Gamma^{0,\lambda}_{\mathrm{c}} ) $ in our lower bound \eqref{eq:Dysonholes1}. Here $\Gamma^{0,\lambda}_{\mathrm{c}}$ denotes the grand canonical Gibbs state for the kinetic energy operator $\mathbb{T}^{\mathrm{c}}_{\mathrm{s}}(z)$ which is independent of $z$ and depends on $\lambda$ only through the chemical potential $\mu(\lambda)$. The additional term is not important for the lower bound (it is positive and could be dropped), but it will be important for the proof of the asymptotics of the 1-pdm of approximate minimizers of the Gibbs free energy functional in Sec.~\ref{sec:asymptotics1pdm}.  When we insert the above result into \eqref{eq:replacingvaccum4} and argue as in \eqref{eq:aprioribounds2bb} and \eqref{eq:aprioribounds2bc}, we find
\begin{align}
	\tr \left[ H_N^{\lambda} \Gamma \right] - \tfrac{1}{\beta} S(\Gamma) \geq & \ \mu(\lambda) N + \tfrac{1}{\beta} \sum_p \ln\left( 1 - e^{-\beta \epsilon(p)} \right) + \int_{\mathbb{C}^M} \tr \left[ \mathbb{W} \Upsilon_{\pi}^z \right] \zeta_{\Gamma}(z) \de z 
	\label{eq:Dysonholes3} \\
	&+ \tfrac{1}{\beta} \int_{\mathbb{C}^M} S\left( \Gamma_z, \Gamma^{0,\lambda}_{\mathrm{c}} \right) \zeta_{\Gamma}(z) \de z  - (\kappa - \kappa') \sum_{|p| < p_c} p^2 \pi_p - Z^{(1)} - Z^{(2)}. \nonumber
\end{align}
From the first two terms on the right-hand side of \eqref{eq:Dysonholes3}, we will obtain the free energy of the ideal gas. 
%%%%%%%%%%%%%%%%%%%%%%%%%%%%%%%%%%%%%%%%%%%%%%%%%%%%%
\subsection{Localization of Relative Entropy}
\label{sec:localizationofrelativeentropy}
%%%%%%%%%%%%%%%%%%%%%%%%%%%%%%%%%%%%%%%%%%%%%%%%%%%%%
In this section we introduce notation that will be important for the following. The main result from the related section in \cite{Sei2008} will not be stated since we will not explicitly need it. It is used only in parts of the proof in \cite{Sei2008} that we do not have to adjust.

Define the quasi free state $\Omega_{\pi} = \Pi \otimes \Gamma^0$ via its 1-pdm
\begin{equation}
	\omega_{\pi} = \sum_p \frac{1}{e^{\ell(p)}-1} | p \rangle\langle p |
	\label{eq:locrelentropy1}
\end{equation}
with $\ell(p) = \beta(p^2 - \mu_0)$ for $| p | \geq p_{\mathrm{c}}$ and $\ell(p) = \ln(1+\pi_p^{-1})$ for $|p| < p_{\mathrm{c}}$. We recall that the quasi-free state $\Pi$ with 1-pdm $\pi$ has been defined in \eqref{eq:1}. See also the beginning of Section~\ref{sec:replacingvacuum} for the reason why it is introduced. Let $\eta \in \mathcal{C}^{\infty} ([0,\infty), \mathbb{R})$ with $\eta(0) = 1$, $\eta(r) = 0$ for $r \geq 1$ and $\hat{\eta}(p) = \int \eta(|x|) e^{ipx} \de x \geq 0$. For $0 < b \leq L/2$ let $\eta_b(r) = \eta(r/b)$. We also define the quasi-free state $\Omega_b$ via its 1-pdm $\omega_b$ which is given by
\begin{equation}
	\omega_b(x,y) = \omega_{\pi}(x,y) \eta_b(d(x,y)).
	\label{eq:locrelentropy2}
\end{equation}
The densities of the states $\Omega_{\pi}$ and $\Omega_b$ then fulfill
\begin{equation}
\varrho_{\omega} \coloneqq \omega_{\pi}(x,x) = \omega_b(x,x) 
\label{eq:locrelentropy3}
\end{equation}
because $\eta_b(0) = 1$. Finally, let 
\begin{equation}\label{eq:2}
\Omega_b^z = U(z) \Omega_b U(z)^*.
\end{equation}

%%%%%%%%%%%%%%%%%%%%%%%%%%%%%%%%%%%%%%%%%%%%%%%%%%%%%
\subsection{Interaction Energy Part 1 - 3}
%%%%%%%%%%%%%%%%%%%%%%%%%%%%%%%%%%%%%%%%%%%%%%%%%%%%%
The expectation of the effective potential $\mathbb{W}$ in the state $\Upsilon_{\pi}^z$ is estimated as in \cite[Secs.~2.9--2.11]{Sei2008}. The result of the analysis in these sections is the following lower bound:
\begin{align}
\int_{\mathbb{C}^M} \tr\left[ \mathbb{W} \Upsilon_{\pi}^z \right] \zeta_{\Gamma}(z) \de z \geq& \ 4 \pi a_N' | \Lambda | \int_{\mathbb{C}^M} \min \left\{ 2 \varrho^2, \varrho_z^2 + 2 \varrho_z \left( \gamma_b + \varrho_{\omega} \right) + \gamma_b^2 + \varrho_{\omega}^2 \right\} \zeta_{\Gamma}(z) \de z \label{eq:interaction1} \\
&+ \frac{24 \widetilde{a}_N}{R^3} \left\{ \frac{a_N'}{125 \widetilde{a}_N} - \frac{6^3 c R}{s} \right\} \int_{\mathbb{C}^M} \tr \left[ \mathbb{N}\left( \Upsilon_{\pi}^z - \Omega_b^z \right) \right] \zeta_{\Gamma}(z) \de z  \nonumber \\
&-\text{const. } \frac{a_N}{R^6} \left( b^3 | \Lambda | \right)^{1/2} \left( \int_{\mathbb{C}^M} S(\Upsilon_{\pi}^z,\Omega_b^z) \zeta_{\Gamma}(z) \de z \right)^{1/2} \nonumber \\
&- \text{const. } \frac{| \Lambda | a_N \varrho}{R^2 s} \int_{b/s}^{\infty} r^6 |m(r)| \de r - \text{const. } a_N | \Lambda | \varrho^2 \left[ \left(R^3 \varrho \right)^{1/3} + \frac{R}{s} + R p_{\mathrm{c}} \right]. \nonumber
\end{align}
The scattering length $a_N'$ has been defined in \eqref{eq:Dysonholes1a}, $m(r)$ is an explicitly given smooth function that vanishes faster than any power for $|r| \to \infty$, compare with \cite[Sec.~2.10]{Sei2008}, and $c > 0$. Additionally,
\begin{equation}
	\gamma_b = \frac{1}{4 \pi R^3} \int_{\Lambda} \omega_b(x,0) j(d(x,0)/R) \de x,
	\label{eq:interaction2}
\end{equation}
where $j$ is defined in \eqref{eq:Dysonholes1a}. To obtain the result, we started with Eqs.~\cite[(2.11.19--21)]{Sei2008} and the same choice of the parameters $\epsilon$ and $D$ as in \cite{Sei2008}. We also applied Jensen's inequality to the term proportional to the relative entropy and used \eqref{eq:relentropapriori4} to bound $\int_{\mathbb{C}^M} \varrho_z \ \zeta_{\Gamma}(z) \de z \leq | \Lambda|^{-1} (N+M)$ where $\varrho_z = |z|^2/| \Lambda |$ as well as $a_N' \leq \widetilde{a}_N \leq a_N$. We assumed that $p_{\mathrm{c}} \lesssim \varrho^{1/3}$, $\varrho_{\omega} \lesssim \varrho$ and $R \lesssim s$. The bound is valid for any choice of the parameter $0< b \leq L/2$ that has been introduced in the previous section. We will later choose $b$ such that $b p_{\mathrm{c}} \gg 1$ and $\beta b^{-2} \ll 1$.

%%%%%%%%%%%%%%%%%%%%%%%%%%%%%%%%%%%%%%%%%%%%%%%%%%%%%
\subsection{A bound on the number of particles}
%%%%%%%%%%%%%%%%%%%%%%%%%%%%%%%%%%%%%%%%%%%%%%%%%%%%%
The lower bound on the interaction energy contains a term of the form
\begin{equation}
	\frac{24 \widetilde{a}_N}{R^3} \left\{ \frac{a_N'}{125 \widetilde{a}_N} - \frac{6^3 c R}{s}  \right\} \int_{\mathbb{C}^M} \tr \left[ \mathbb{N}\left( \Upsilon_{\pi}^z - \Omega_b^z \right) \right]  \zeta_{\Gamma}(z) \de z.
	\label{eq:boundnumberofparticles1}
\end{equation}
Recall that we will later choose $R \ll s$, that is, the term $\tr \left[ \mathbb{N}\left( \Upsilon_{\pi}^z - \Omega_b^z \right) \right]$ is multiplied by a positive constant. In this section we will first rewrite the integral over the trace on the right-hand side of \eqref{eq:boundnumberofparticles1}. This way it will be apparent that the term in \eqref{eq:boundnumberofparticles1} can be combined with another error term that we will find in Sec.~\ref{sec:finallowerbound}. This term will be of the same form but it will be multiplied by a negative constant that is much smaller than the one in the equation above. Accordingly, we only have to derive a lower bound for the integral on the right-hand side of \eqref{eq:boundnumberofparticles1} to finally estimate the sum of these two terms.

Let us start by rewriting the integral on the right-hand side of \eqref{eq:boundnumberofparticles1}. We note that $\tr [\mathbb{N} \Omega_b^z] = |z|^2 + \tr [\mathbb{N} \Omega_b] = |z|^2 + \tr [\mathbb{N} \Omega_{\pi}]$ as well as $\tr[\mathbb{N} \Upsilon_{\pi}^z] = |z|^2 + \tr[\mathbb{N} \Upsilon_{\pi}]$ and we denote by $\mathbb{N}^> = \sum_{|p| \geq p_{\mathrm{c}}} a_p^* a_p$ the particle number operator on the Fock space $\mathscr{F}_>$. Since $\Upsilon_{\pi} = \Pi \otimes \Gamma_z$ and $\Omega_{\pi} = \Pi \otimes \Gamma^0$ we can write
\begin{equation}
	\tr \left[ \mathbb{N}\left( \Upsilon_{\pi}^z - \Omega_b^z \right) \right] = \tr_> \left[ \mathbb{N}^> \left( \Gamma_z - \Gamma^0 \right) \right].
	\label{eq:boundnumberofparticles1b1}
\end{equation}
We also know that
\begin{equation}
	N+M = \int_{\mathbb{C}^M} \tr_> \left[ \mathbb{N}_s(z) \Gamma_z \right] \zeta_{\Gamma}(z) \de z = \int_{\mathbb{C}^M} |z|^2 \zeta_{\Gamma}(z) \de z + \int_{\mathbb{C}^M} \tr_> \left[ \mathbb{N}^> \Gamma_z \right] \zeta_{\Gamma}(z) \de z.
	\label{eq:boundnumberofparticles1b2}
\end{equation}
In combination, Eqs.~\eqref{eq:boundnumberofparticles1b1} and \eqref{eq:boundnumberofparticles1b2} imply
\begin{equation}
		\int_{\mathbb{C}^M} \tr \left[ \mathbb{N}\left( \Upsilon_{\pi}^z - \Omega_b^z \right) \right] \zeta_{\Gamma}(z) \de z = N + M -  \tr_> \left[ \mathbb{N}^> \Gamma^0 \right] - \int_{\mathbb{C}^M} | z |^2 \zeta_{\Gamma}(z) \de z. \label{eq:boundnumberofparticles1b3c}
\end{equation}
This is the first result we were looking for.

Next, we will derive a lower bound for the right-hand side of \eqref{eq:boundnumberofparticles1b1}. It implies a lower bound on the right-hand side of \eqref{eq:boundnumberofparticles1b3c} that will later allow us to estimate the relevant error term in Sec.~\ref{sec:finallowerbound}.
A bound of this kind was proved in \cite[Sec.~2.12]{Sei2008} in the case of the dilute Bose gas in the thermodynamic limit. In this limit momentum space sums can be replaced by integrals because the relevant errors do not grow proportionally to the volume and are therefore irrelevant. In the GP limit that we consider these error terms have to be quantified, however. To that end, we have to adjust the estimates in Eqs.~(2.12.9)--(2.12.12) in \cite{Sei2008}, which will be done with the help of the following Lemma.
\begin{lemma}
	\label{lem:latticesum1}
	Let $f : [0,\infty) \to \mathbb{R}$ be a nonnegative and monotone decreasing function and choose some $\kappa \geq 0$. Then
	\begin{equation}
		\sum_{p \in \frac{2 \pi}{L} \mathbb{Z}^3 \backslash \{ 0 \}} f\left( |p| \right) \mathds{1}(|p| \geq \kappa) \leq  \left( \frac{L}{2 \pi } \right)^3 \int_{  | p | \geq \left[ \kappa -  \sqrt{3} \frac{2 \pi}{L} \right]_+ } f\left( |p| \right) \left( 1 + \frac{3 \pi}{L | p |} + \frac{6 \pi}{L^2 p^2} \right) \de p \label{eq:boundnumberofparticles1d}
	\end{equation}
	holds.
\end{lemma}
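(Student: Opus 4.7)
The lemma is a standard Riemann-sum comparison between a lattice sum and a continuum integral, with a correction factor that quantifies the discretization error for a monotone decreasing radial function. My plan is to associate each lattice point $p$ with a cubical cell of volume $(2\pi/L)^3$, exploit the monotonicity of $f$ to estimate $f(|p|)$ in terms of $f(|q|)$ for $q$ in a nearby region, and then collect the geometric correction terms.

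Concretely, set $h = 2\pi/L$ and for each $p \in h\mathbb{Z}^3 \setminus \{0\}$ introduce the cube $C_p = p + [-h/2,h/2]^3$. These cubes tile $\mathbb{R}^3$ up to a set of measure zero. For any $q \in C_p$ the triangle inequality gives $\bigl| |p|-|q| \bigr| \le \sqrt{3}\,h/2$, and in particular $|q| \ge [\kappa - \sqrt{3}h]_+$ whenever $|p| \ge \kappa$; this accounts for the lower limit in the integral on the right-hand side. The most natural way to proceed is then via Abel summation: letting $N_\kappa(R) = \#\{p \in h\mathbb{Z}^3 \setminus\{0\} : \kappa \le |p| \le R\}$, we have
\begin{equation*}
\sum_{p \in h\mathbb{Z}^3 \setminus \{0\}} f(|p|)\,\mathds{1}(|p| \ge \kappa) = \int_0^\infty (-f'(R))\,N_\kappa(R)\,\de R,
\end{equation*}
so it is enough to bound $N_\kappa(R)$ from above by the Lebesgue measure (times $h^{-3}$) of a suitable radial shell, and then to rewrite the resulting expression as an integral against $f$ using integration by parts.

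Since the cubes $\{C_p : \kappa \le |p| \le R\}$ are disjoint and contained in the shell $\{q : [\kappa-\sqrt{3}h/2]_+ \le |q| \le R+\sqrt{3}h/2\}$, one has $N_\kappa(R)\,h^3$ bounded by the volume of this shell. Expanding $(R+\sqrt{3}h/2)^3$ (and carefully tracking the contribution of the inner boundary, relaxed to $[\kappa - \sqrt{3}h]_+$ to simplify the form) produces, after integration by parts in $R$, a radial density of the form $(R/h)^2 + c_1 R/h + c_2$, which rewritten as $|p|$-dependent integration on $\mathbb{R}^3$ gives precisely the weight factor $1 + \tfrac{3\pi}{L|p|} + \tfrac{6\pi}{L^2 p^2}$.

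The main obstacle is matching the exact numerical coefficients in the correction factor, in particular the $\tfrac{6\pi}{L^2 p^2}$ rather than the $\tfrac{3h^2}{4 p^2}$ one gets from a naive application of the cube-in-ball inclusion. This will require either a slightly sharper geometric inclusion (observing that the corners of $C_p$ that maximize $|q|$ contribute only in restricted radial directions) or, more robustly, bookkeeping the boundary terms in Abel summation so that they cancel against excess radial contributions rather than being discarded. Everything else — monotonicity, integration by parts, and the change of variables from a $1$-dimensional radial integral back to a $3$-dimensional integral over $\{|p| \ge [\kappa - \sqrt{3}(2\pi/L)]_+\}$ — is routine.
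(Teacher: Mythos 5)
Your overall strategy (cells of volume $(2\pi/L)^3$ around lattice points, monotonicity, and a layer-cake/Abel reduction) is a legitimate way to compare the sum with an integral, but it provably does not yield the constants in the statement, and you acknowledge this without closing the gap. With $h=2\pi/L$, the shell-inflation bound $N_\kappa(R)h^3\le \frac{4\pi}{3}\bigl((R+\sqrt{3}h/2)^3-[\kappa-\sqrt{3}h/2]_+^3\bigr)$ leads, after integration by parts, to the radial density $4\pi(R+\sqrt{3}h/2)^2=4\pi R^2\bigl(1+\tfrac{\sqrt{3}h}{R}+\tfrac{3h^2}{4R^2}\bigr)$, i.e.\ to the weight $1+\tfrac{2\sqrt{3}\pi}{L|p|}+\tfrac{3\pi^2}{L^2p^2}$. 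Both correction coefficients exceed the stated ones ($2\sqrt{3}\pi\approx 6.93>3\pi/\!\!\;\ldots$ — more precisely $2\sqrt{3}\pi>3\pi$ and $3\pi^2>6\pi$), so you obtain a strictly weaker inequality. The two repairs you sketch (a sharper corner analysis, or cancellation of boundary terms in the Abel summation) are not carried out, and it is not apparent that either would reduce $3\pi^2/L^2$ to $6\pi/L^2$; as written, the proof of the lemma as stated is incomplete.

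The paper's proof gets the exact weight by a structurally different decomposition: it stratifies $\frac{2\pi}{L}\mathbb{Z}^3\setminus\{0\}$ according to how many coordinates of $p$ vanish. For points with no vanishing coordinate one can choose, for each $p$, the cube of side $h$ lying \emph{between} $p$ and the origin in every coordinate direction; on that cube $|q|\le|p|$, so the sum is a genuine lower Riemann sum for $(L/2\pi)^3\int_{\mathbb{R}^3}f$ with \emph{no} correction term at all. The points lying on exactly one coordinate plane contribute three two-dimensional Riemann sums, each bounded by $(L/2\pi)^2\int_{\mathbb{R}^2}f$, and the points on the axes contribute $6\,(L/2\pi)\int_0^\infty f$; rewriting these lower-dimensional integrals as three-dimensional ones produces exactly the terms $\tfrac{3\pi}{L|p|}$ and $\tfrac{6\pi}{L^2p^2}$. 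In other words, the corrections in \eqref{eq:boundnumberofparticles1d} are not discretization errors of a single three-dimensional Riemann sum (which is what your shell inflation estimates) but the contributions of the lower-dimensional lattice strata. If you want to salvage your approach, you would either need to adopt this stratification or be content with the weaker (but for all applications in the paper equally useful) constants $2\sqrt{3}\pi$ and $3\pi^2$.
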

\begin{proof}
	Assume first that $\kappa - \sqrt{3} \frac{2 \pi}{L} \leq 0$. In this case we drop the characteristic function on the left-hand side of \eqref{eq:boundnumberofparticles1d} for an upper bound. Next, we write the sum over $p$ as the sum over those $p$ that are an element of one of the coordinate planes, i.e. with one coordinate $p_i$ equal to zero, plus a sum over all remaining $p$. To estimate the sum over all remaining $p$, we interpret the sum as a lower Riemann sum and find that it is bounded from above by
	\begin{equation}
		\left( \frac{L}{2 \pi} \right)^3 \int_{ \mathbb{R}^3} f\left( |p| \right) \de p. \label{eq:boundnumberofparticles1e}
	\end{equation}
	The sum over those $p$ that are an element of the coordinate planes can be estimated similarly. Here we write the whole sum as a sum over those $p$ that are an element of one of the coordinate axes plus the sum over all remaining $p$. The sum over the remaining $p$ is estimated again by interpreting it as a lower Riemann sum. For one such coordinate plane, we find
	\begin{equation}
		\sum_{ p \in \frac{2 \pi}{L} \mathbb{Z}^2 \cap \{ p_i \neq 0 \text{ for } i=1,2 \} } f\left( |p| \right) \leq \left( \frac{L}{2 \pi} \right)^2  \int_{ \mathbb{R}^2} f\left( |p| \right) \de p.
		\label{eq:boundnumberofparticles1f}
	\end{equation}
	Because there are three coordinate planes, we have three such terms. It remains to estimate the sum over those $p$ that are an element of one of the coordinate axes of $\mathbb{R}^3$. Again by interpreting the sums over the three coordinate axis as lower Riemann sums, we find that they are bounded from above by
	\begin{equation}
		6 \frac{L}{2 \pi} \int_{0}^{\infty} f\left( |p| \right) \de p.
		\label{eq:boundnumberofparticles1g}
	\end{equation}
	In order to write the two-dimensional integral from \eqref{eq:boundnumberofparticles1f} in terms of a three-dimensional integral, we use
	\begin{align}
		\left( \frac{L}{2 \pi} \right)^2 \int_{\mathbb{R}^2} f\left( |p| \right) \de p &= \left( \frac{L}{2 \pi} \right)^3 \int_{\mathbb{R}^3} f\left( |p| \right) \frac{\pi}{| p | L}  \de p. \label{eq:boundnumberofparticles1h} 
	\end{align}
	A similar computation can be done for the term in Eq.~\eqref{eq:boundnumberofparticles1g}. Putting these estimates together proves \eqref{eq:boundnumberofparticles1d} in this case. The bound in the case $\kappa - \sqrt{3} \frac{2 \pi}{L} > 0$ can be obtained similarly. Here we only have to realize that $\kappa - \sqrt{3} \frac{2 \pi}{L}$ is the radius of the largest ball such that the integral over its complement is an upper bound to the relevant three-dimensional lower Riemann sum. This proves the claim.
\end{proof}
To adjust the analysis in \cite{Sei2008} after (2.12.8), we have to find an upper bound for the sum
\begin{equation}
	\sum_{p \in \frac{2 \pi}{L} \mathbb{Z}^3} \mathds{2}\left( | p | \geq p_{\mathrm{c}} \right) \frac{2}{\left( \beta p^2 - \beta \mu_0 \right)^2}. 
	\label{eq:boundnumberofparticles1i}
\end{equation}
An application of Lemma~\ref{lem:latticesum1} tells us that it is bounded from above by
\begin{equation}
	 2 \frac{\mathds{1}\left( p_{\mathrm{c}} = 0 \right)}{\left( \beta \mu_0 \right)^2} + \left( \frac{L}{2 \pi} \right)^3 \int_{  | p | \geq \left[ p_{\mathrm{c}} -  \sqrt{3} \frac{2 \pi}{L} \right]_+ } \frac{2}{\left( \beta p^2 - \beta \mu_0 \right)^2} \left( 1 + \frac{3 \pi}{L p} + \frac{6 \pi}{L^2 p^2} \right)  \de p.
	\label{eq:boundnumberofparticles1j}
\end{equation}
A short computation shows that the expression in the above equation cannot be larger than 
\begin{equation}
	A \coloneqq 2 \frac{\mathds{1}\left( p_{\mathrm{c}} = 0 \right)}{\left( \beta \mu_0 \right)^2} + \text{const. } \frac{L^3}{\beta^{3/2}} \left( \frac{1}{\sqrt{\beta \left[ p_{\mathrm{c}} -  \sqrt{3} \frac{2 \pi}{L} \right]_+^2 - \beta \mu_0 } } + \frac{\beta}{L^2} \left( \frac{1}{\beta \left[ p_{\mathrm{c}} -  \sqrt{3} \frac{2 \pi}{L} \right]_+^2 - \beta \mu_0 } \right)^{3/2} \right).
	\label{eq:boundnumberofparticles1k}
\end{equation}
This bound replaces (2.12.12) in \cite{Sei2008}. Using the above and (2.12.8) in \cite{Sei2008}, we conclude that
\begin{equation}
	\tr_{>}\left[ \mathbb{N}^> \left( \Gamma_z - \Gamma^0 \right) \right] \geq - A^{1/2} S \left( \Gamma_z, \Gamma^0 \right)^{1/2}. 
	\label{eq:boundnumberofparticles1l}
\end{equation}
holds. 

If we combine \eqref{eq:boundnumberofparticles1l} with \eqref{eq:boundnumberofparticles1b1} and \eqref{eq:boundnumberofparticles1b3c} we obtain the bound 
\begin{equation}
	N - \tr\left[ \mathbb{N}^>  \Gamma^0 \right] - \int_{\mathbb{C}^M} | z |^2 \zeta_{\Gamma}(z) \de z  \geq -M -  \left( 2 A \right)^{1/2} \int_{\mathbb{C}^M} S \left( \Gamma_z, \Gamma^0 \right)^{1/2} \zeta_{\Gamma}(z) \de z. \label{eq:boundnumberofparticles1b3}
\end{equation}
An application of Jensen's inequality and the a-priori bound \eqref{eq:a-priori1} therefore imply
\begin{equation}
	N - \tr\left[ \mathbb{N}^>  \Gamma^0 \right] - \int_{\mathbb{C}^M} | z |^2 \zeta_{\Gamma}(z) \de z \geq -M - \text{const. } \left( | \Lambda | \beta a_N \varrho^2 + \beta Z^{(1)} + \ln(N) \right)^{1/2} A^{1/2}. \label{eq:boundnumberofparticles1b4}
\end{equation}
This is the bound we were looking for. It will later be used to bound the relevant error term in \eqref{eq:interaction1}.
%%%%%%%%%%%%%%%%%%%%%%%%%%%%%%%%%%%%%%%%%%%%%%%%%%%%%
\subsection{Relative Entropy, Effect of Cutoff}
\label{sec:relentropycutoff}
%%%%%%%%%%%%%%%%%%%%%%%%%%%%%%%%%%%%%%%%%%%%%%%%%%%%%
In this section we estimate the relative entropy $S( \Upsilon_{\pi}^z,\Omega_b^{z}) = S(\Pi \otimes \Gamma_z, \Omega_b)$, which appears in the lower bound \eqref{eq:interaction1} for the interaction energy, in terms of $S(\Pi \otimes \Gamma_z,\Omega_{\pi}) = S(\Gamma_z,\Gamma^0)$. Since we have an a-priori bound for the integral w.r.t. $\zeta_{\Gamma}(z) \de z$ over the latter expression at hand this will allow us to finalize the lower bound for the interaction energy. Compared to \cite{Sei2008}, we have to adjust how the momentum space sum related to \cite[(2.13.21)]{Sei2008} is estimated. 

We are faced with estimating the term
\begin{equation}
	2 \sum_{p \in \frac{2 \pi}{L} \mathbb{Z}^3} \left( 1 + \beta \left( | p | + | q | \right)^2 \right)^2 \omega^t(p) \left( 1 + 2 \omega^t(p) \right).
	\label{eq:latticesum24b}
\end{equation}
Here 
\begin{equation}
	\omega^t(p) = \frac{1}{e^{\beta \left( p^2 - \mu_0 \right) - D \beta t q^2} - 1}
\end{equation}
with some parameters $D$, $t$ and $q$ that are specified in \cite{Sei2008} and that are chosen such that $\beta ( p^2 - \mu_0 ) - D \beta t q^2 > 0$ for all $p \in \frac{2 \pi}{L} \mathbb{Z}^3$. When we insert the estimate \cite[(2.13.24)]{Sei2008} for $\omega^t$ into \eqref{eq:latticesum24b}, we see that it is bounded from above by a constant times
\begin{equation}
\left( 1 + \left( \beta q^2 \right)^2 \right) e^{- \tau} \sum_{p \in \frac{2 \pi}{L} \mathbb{Z}^3} e^{- \frac{1}{2} \beta p^2} \left( 1 + \left( \beta p^2 \right)^2 \right) \left( 1 + \left( \frac{1}{\tau + \frac{1}{2} \beta p^2} \right)^2 \right),
\label{eq:latticesum26}
\end{equation}
with
\begin{equation}
	\tau = -\beta \mu_0 + \beta p_{\mathrm{c}}^2 \left( \frac{1}{8} - \frac{D}{b^2 p_{\mathrm{c}}^2} \right).
	\label{eq:latticesum25b}
\end{equation}
We will later choose $p_{\mathrm{c}}$ and $b$ such that $b p_{\mathrm{c}} \gg 1$, and hence $\tau > 0$. Next, we bound the summands in \eqref{eq:latticesum26} from above by a monotone function with the same behavior at zero and at infinity. Afterwards we use 
Lemma~\ref{lem:latticesum1} to see
that \eqref{eq:latticesum26} is bounded from above by a constant times
\begin{equation}
	e^{- \tau} \left( 1 + \left( \beta q^2 \right)^2 \right) \left( 1 + \tau^{-2} + \frac{| \Lambda | \left( 1+ \tau^{-1/2} + \beta L^{-2} \tau^{-3/2} \right)}{\beta^{3/2}}   \right).
	\label{eq:latticesum27}
\end{equation}
This is the estimate for the term in \eqref{eq:latticesum24b} we intended to show. The remaining part of the analysis in \cite{Sei2008} can be done similarly. With the a-priori bound \eqref{eq:a-priori1} and the estimate $\widetilde{a}_N \leq a_N$, we finally arrive at
\begin{equation}
\int_{\mathbb{C}^M} S\left( \Upsilon_{\pi}^z, \Omega_{b}^z \right) \zeta_{\Gamma}(z) \de z \lesssim \beta Z^{(1)} + \ln(N) + \frac{\beta^2}{\tau^2 b^4} + | \Lambda | \left( a_N \beta \varrho^2 + \frac{\beta^{1/2} \left( \tau^{-1/2} + \beta L^{-2} \tau^{-3/2} \right)}{b^4} \right).
\label{eq:latticesum28}
\end{equation}
To obtain the result, we used that $\beta b^{-2}$ is small enough and that $b p_{\mathrm{c}}$ is large enough.
%%%%%%%%%%%%%%%%%%%%%%%%%%%%%%%%%%%%%%%%%%%%%%%%%%%%%
\subsection{Final lower bound}
\label{sec:finallowerbound}
%%%%%%%%%%%%%%%%%%%%%%%%%%%%%%%%%%%%%%%%%%%%%%%%%%%%%
We have obtained all necessary estimates to complete the lower bound for the free energy of $\Gamma$. To that end, we collect the estimates from the previous sections, that is, \eqref{eq:Dysonholes3}, \eqref{eq:interaction1}, \eqref{eq:boundnumberofparticles1b3c} and \eqref{eq:latticesum28} and find
\begin{align}
	\tr \left[ H_N^{\lambda} \Gamma \right] - \tfrac{1}{\beta} S(\Gamma) \geq& \ \mu(\lambda) N + \tfrac{1}{\beta} \sum_p \ln\left( 1 - e^{-\beta \epsilon(p)} \right) - Z^{(1)} - Z^{(2)} - Z^{(3)} \label{eq:finallowerbound1} \\
	& + 4 \pi a_N | \Lambda | \int_{\mathbb{C}^M} \min \left\{ 2 \varrho^2, \varrho_z^2 + 2 \varrho_z \left( \gamma_b + \varrho_{\omega} \right) + \varrho_{\omega}^2 + \gamma_b^2 \right\} \zeta_{\Gamma}(z) \de z \nonumber \\
	&+ \text{const. } \frac{\widetilde{a}_N}{R^3} \left( N -  \tr_> \left[ \mathbb{N}^> \Gamma^0 \right] - \int_{\mathbb{C}^M} | z |^2 \zeta_{\Gamma}(z) \de z \right) + \tfrac{1}{\beta} \int_{\mathbb{C}^M} S\left( \Gamma_z, \Gamma^{0,\lambda}_{\mathrm{c}} \right) \zeta_{\Gamma}(z) \de z , \nonumber
\end{align}
with
\begin{align}
	Z^{(3)} \coloneqq& \ \text{const. }  a_N | \Lambda | \left[ \varrho^2 \left( \left( R^3 \varrho \right)^{1/3} + \frac{R}{s} + R p_{\mathrm{c}} + \kappa + \left( \frac{R_0}{R} \right)^3 + \sqrt{\frac{a_N N}{\varphi L}} \right) + \frac{\varrho}{R^2 s} \int_{b/s}^{\infty} r^6 | m(r) | \de r \phantom{ \left( \frac{ \left( \tau^{-3/2} \right) }{b} \right)^{1/2} } \right. \label{eq:finallowerbound2} \\
	&\hspace{3.38cm}\left. \hspace{3cm} + \frac{1}{R^6} \left( b^3 a_N \beta \varrho^2 + \frac{\beta^{1/2} \left( \tau^{-1/2} + \beta L^{-2} \tau^{-3/2} \right) }{b} \right)^{1/2} \right] \nonumber \\
	&\hspace{2cm}+ \text{const. } \frac{a_N}{R^6} \left( b^3 | \Lambda | \right)^{1/2} \left( \beta Z^{(1)} + \ln(N) +  \frac{\beta^2}{\tau^2 b^4} \right)^{1/2} + \text{const. } \left( \frac{a_N}{R} \right)^3 p_{\mathrm{c}}^3 | \Lambda | / \beta. \nonumber
\end{align}
To obtain this result, we used the definition \eqref{eq:Dysonholes1a} of $a_N'$ and $\gamma_b \leq \varrho_{\omega} \leq \varrho$. The first part of the inequality for $\varrho_{\omega}$ follows from the definition of $\gamma_b$ \eqref{eq:interaction2} and the second part from the choice of $\pi_p$ in \cite{Sei2008} after (2.13.15). Using the definition of $\pi_p$ again, we estimated
\begin{equation}
	(\kappa - \kappa') \sum_{|p| < p_{\mathrm{c}}} p^2 \pi_p \leq (\kappa - \kappa') p_{\mathrm{c}}^2 P \lesssim \left( \frac{a_N}{R} \right)^3 p_{\mathrm{c}}^3 | \Lambda | / \beta
	\label{eq:finallowerbound3}
\end{equation} 
as in \cite[Sec.~2.14]{Sei2008}. To replace $\widetilde{a}_N$ by $a_N$ in the term in the second line in \eqref{eq:finallowerbound1} we applied Lemma~\ref{lem:integrablepotential} with the choice $\epsilon = \sqrt{a_N N/(L \varphi)}$. We will later choose $\varphi L/N \gg a_N$. The error terms $Z^{(1)}$ and $Z^{(2)}$ are defined in \eqref{eq:lemmaentropyboundz8} and \eqref{eq:replacingvaccum2}, respectively.

To obtain a bound for the interaction term in \eqref{eq:finallowerbound1}, we write
\begin{align}
  \varrho_z^2 + 2 \varrho_z \left( \gamma_b + \varrho_{\omega} \right) + \varrho_{\omega}^2 + \gamma_b^2 =&	
  \left( \varrho - \varrho^0 \right)^2 + 2 \left( \varrho - \varrho^0 \right) \left( \varrho_{\omega} + \gamma_b \right) + \varrho_{\omega}^2 + \gamma_b^2 + \left( \varrho_z - \varrho + \varrho^0 \right)^2  \label{eq:finallowerbound4b} \\
  &+ 2 \left( \varrho_z - \varrho + \varrho^0 \right) \left( \varrho - \varrho^0 + \varrho_{\omega} + \gamma_b \right), \nonumber
\end{align}
where
\begin{equation}
	\varrho^0 = | \Lambda |^{-1} \tr_> \left[ \mathbb{N}^> \Gamma^0 \right] = | \Lambda |^{-1} \sum_{|p| \geq p_{\mathrm{c}}} \frac{1}{e^{\beta \left( p^2 - \mu_0 \right)} - 1}.
	\label{eq:finallowerbound5}
\end{equation}
The last term in the first line of \eqref{eq:finallowerbound4b} can be dropped for a lower bound. Next we combine the first term in the third line of \eqref{eq:finallowerbound1} with $4 \pi a_N | \Lambda |$ times the term in the second line of \eqref{eq:finallowerbound5} integrated with $\zeta_{\Gamma}(z) \de z$ over $\mathbb{C}^M$. Together, they read
\begin{equation}
	\int_{\mathbb{C}^M} \left( \varrho_z - \varrho + \varrho^0 \right) \left[ 8 \pi a_N | \Lambda | \left( \varrho - \varrho^0 + \varrho_{\omega} + \gamma_b \right) - \text{const. } \frac{| \Lambda | \widetilde{a}_N}{R^3} \right] \zeta_{\Gamma}(z) \de z.
	\label{eq:finallowerbound51}
\end{equation}
We will later choose $R \ll \varrho^{-1/3}$, that is, the term in the second bracket in \eqref{eq:finallowerbound51} is negative and we need a lower bound for 
\begin{equation}
	\int_{\mathbb{C}^M} \left( \varrho - \varrho^0 - \varrho_z \right) \zeta_{\Gamma}(z) \de z = | \Lambda |^{-1} \left( N - \tr_> \left[ \mathbb{N}^> \Gamma^0 \right] - \int_{\mathbb{C}^M} |z|^2 \zeta_{\Gamma}(z) \de z \right).
\end{equation}
Such a bound is provided by \eqref{eq:boundnumberofparticles1b4}. In combination, the results of this paragraph imply
\begin{align}
	&\int_{\mathbb{C}^M} \left[ 4 \pi a_N | \Lambda | \min\left\{ 2 \varrho^2, \left( \varrho_z^2 + 2 \varrho_z \left( \gamma_b + \varrho_{\omega} \right) + \varrho_{\omega}^2 + \gamma_b^2 \right) \right\} + \text{const. } \frac{\widetilde{a}_N}{R^3} \left( N -  \tr_> \left[ \mathbb{N}^> \Gamma^0 \right] -  | z |^2 \right) \right] \zeta_{\Gamma}(z) \de z \nonumber \\
	&\hspace{2cm} \geq 4 \pi a_N | \Lambda | \min\left\{ 2 \varrho^2, \left[ \left( \varrho - \varrho^0 \right)^2 + 2 \left( \varrho - \varrho^0 \right) \left( \varrho_{\omega} + \gamma_b \right) + \varrho_{\omega}^2 + \gamma_b^2 \right] \right\}  \label{eq:finallowerbound5b} \\
	& \hspace{2.4cm} - \text{const. } \frac{a_N}{R^3} \left[ | \Lambda | p_{\mathrm{c}}^3 + \left( | \Lambda | \beta a_N \varrho^2 + \beta Z^{(1)} + \ln(N) \right)^{1/2} A^{1/2} \right] \nonumber
\end{align} 
with $A$ defined in \eqref{eq:boundnumberofparticles1k}. To obtain the result, we also used $M \lesssim | \Lambda | p_{\mathrm{c}}^3$. 

In the following, we assume $p_{\mathrm{c}} \neq 0$. The case where $p_{\mathrm{c}} = 0$ will follow easily from the analysis of this case. Using the definition for $\varrho_0^{\mathrm{gc}}(\beta,N,L)$ in Sec.~\ref{sec:supplementary}, we see that $\varrho - \varrho^0 \geq \varrho_0^{\mathrm{gc}}$ which implies that we obtain a lower bound for the terms in the second line in  \eqref{eq:finallowerbound5b} when we replace $\varrho - \varrho^0$ by $\varrho_0^{\mathrm{gc}}$. In order to derive a lower bound for $\varrho_{\omega}$, we estimate
\begin{align}
	\varrho_{\omega} &\geq | \Lambda |^{-1} \sum_{| p | \geq p_{\mathrm{c}}} \frac{1}{e^{\beta \left( p^2 - \mu_0 \right)} - 1} = \varrho_{\mathrm{th}}^{\mathrm{gc}} - | \Lambda |^{-1} \sum_{0 < | p |  < p_\mathrm{c}} \frac{1}{e^{\beta \left( p^2 - \mu_0 \right) } - 1} \label{eq:finallowerbound6} \\
	&\geq \varrho_{\mathrm{th}}^{\mathrm{gc}} - \int_{\left (2 - \sqrt{3} \right) \frac{2 \pi}{L} < |p| < p_\mathrm{c}} \frac{1}{e^{\beta \left( p^2 - \mu_0 \right) } - 1} \left( 1 + \frac{3 \pi}{L | p |} + \frac{6 \pi}{L^2 p^2} \right) \de p - | \Lambda |^{-1} \sum_{0 < | p |  \leq 2 \frac{2 \pi}{L}} \frac{1}{e^{\beta \left( p^2 - \mu_0 \right) } - 1}. \nonumber
\end{align}
To obtain the last bound, we used Lemma~\ref{lem:latticesum1}. The integral in the second line is not larger than a constant times $p_{\mathrm{c}}/\beta$ and the sum is bounded by a constant times $(\beta L)^{-1}$. Hence,
\begin{equation}
	\varrho_{\omega} \geq \varrho_{\mathrm{th}}^{\mathrm{gc}} - \text{const. } \left( \frac{p_{\mathrm{c}}}{\beta} + \frac{1}{\beta L} \right). 
	\label{eq:finallowerbound7}
\end{equation} 
When we follow the argumentation in \cite[Sec.~2.14]{Sei2008} and invoke Lemma~\ref{lem:latticesum1}, we see that 
\begin{equation}
	\gamma_b \geq \varrho_{\omega}  - \text{const. } \left[ \varrho \left( \frac{R}{b} \right)^2 + \frac{R^2}{\beta^{5/2}}\left( 1 + \frac{\beta}{L^2} \right)  \right].
	\label{eq:finallowerbound8}
\end{equation} 
Eqs.~\eqref{eq:finallowerbound7}, \eqref{eq:finallowerbound8} together imply, that the terms in the second line in  \eqref{eq:finallowerbound5b} are bounded from below by
\begin{equation}
	4 \pi a_N | \Lambda | \left( 2 \varrho^2 - \left( \varrho^{\mathrm{gc}}_0 \right)^2 \right) - \text{const. } a_N | \Lambda | \left[ \varrho^2 \left( \frac{R}{b} \right)^2 + \varrho \left( \frac{p_{\mathrm{c}}}{\beta} + \frac{R^2}{\beta^{5/2}} \left( 1 + \frac{\beta}{L^2} \right) + \frac{1}{\beta L} \right) \right].
	\label{eq:finallowerbound9}
\end{equation}
It remains to replace the grand canonical condensate density $\varrho^{\mathrm{gc}}_0(\beta,N,L)$ by its canonical version $\varrho_0(\beta,N,L)$. This can be achieved with the help of Lemma~\ref{lem:comparisoncondensatedensities} in the Appendix which implies
\begin{equation}
	\left( \varrho^{\mathrm{gc}}_0 \right)^2 \leq \varrho_0^2 + \text{const. } \varrho \left[ \left( \frac{\varrho \ln(N)}{\beta L} \right)^{1/2} + \frac{\ln(N)}{\beta L} \right].
	\label{eq:finallowerbound9b}
\end{equation}
Together with \eqref{eq:finallowerbound9} this implies the result we were looking for. It has been derived under the assumption $p_{\mathrm{c}}>0$. For $p_{\mathrm{c}} = 0$, we have $\varrho^0 = \varrho = \varrho_{\omega}$. Using this and \eqref{eq:finallowerbound8}, we see that the terms in the second line in \eqref{eq:finallowerbound5b} are in this case bounded from below by
\begin{equation}
	4 \pi a_N | \Lambda | \varrho^2 - \text{const. } a_N | \Lambda | \left[ \varrho^2 \left( \frac{R}{b} \right)^2 + \varrho \frac{R^2}{\beta^{5/2}} \left( 1 + \frac{\beta}{L^2} \right) \right].
	\label{eq:finallowerbound10}
\end{equation}
In combination, \eqref{eq:finallowerbound5b} and \eqref{eq:finallowerbound9}--\eqref{eq:finallowerbound10} imply that the term in the second line of \eqref{eq:finallowerbound1} plus the first term in the third line are bounded from below by 
\begin{align}
	4 \pi a_N& | \Lambda | \left( 2 \varrho^2 - \varrho_0(\beta,N,L) \right) - \text{const. } \frac{a_N}{R^3} \left[ | \Lambda | p_{\mathrm{c}}^3 + \left( | \Lambda | \beta a_N \varrho^2 + \beta Z^{(1)} + \ln(N) \right)^{1/2} A^{1/2} \right] \label{eq:finallowerbound11} \\
	&\hspace{1cm} - \text{const. } a_N | \Lambda | \left[ \varrho^2 \left( \frac{R}{b} \right)^2 + \varrho \left( \frac{p_{\mathrm{c}}}{\beta} + \frac{R^2}{\beta^{5/2}} \left( 1 + \frac{\beta}{L^2} \right) + \frac{1}{\beta L} + \left( \frac{\varrho \ln(N)}{\beta L} \right)^{1/2} + \frac{\ln(N)}{\beta L} \right) \right]. \nonumber
\end{align}
We recall that $A$ has been defined in \eqref{eq:boundnumberofparticles1k}. The result has been obtained under the assumption that $R \varrho^{1/3}$ is small enough.
%%%%%%%%%%%%%%%%%%%%%%%%%%%%%%%%%%%%%%%%%%%%%%%%%%%%%
\subsection{The non-interacting free energy}
\label{sec:freefreeenergy}
%%%%%%%%%%%%%%%%%%%%%%%%%%%%%%%%%%%%%%%%%%%%%%%%%%%%%
In this Section we derive a lower bound for the first two terms on the right-hand side of \eqref{eq:finallowerbound1}.
The dispersion relation $\epsilon(p)$ has been defined in \eqref{eq:Dysonholes2}. The following Lemma will be necessary to derive a lower bound for the second term.
\begin{lemma}
	\label{lem:boundchemicalpotential}
	The chemical potential $\mu(\lambda)$ satisfies
	\begin{equation}
	\mu_0 \leq \mu(\lambda) \leq \min\left\{ \left( \frac{2 \pi}{L} \right)^2, \lambda \right\} - \frac{1}{\beta} \ln\left( 1 + \frac{1}{N} \right).
	\label{eq:freefreeenergy1c}
	\end{equation}
\end{lemma}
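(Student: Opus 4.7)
The plan is to use the implicit definition of $\mu(\lambda)$, namely that it is the unique solution of
\begin{equation*}
N = \mathcal{N}(\mu,\lambda) := \frac{1}{e^{\beta(\lambda - \mu)} - 1} + \sum_{p \in \frac{2\pi}{L}\mathbb{Z}^3 \setminus \{0\}} \frac{1}{e^{\beta(p^2 - \mu)} - 1},
\end{equation*}
which uses that $-\Delta + \lambda|\Phi\rangle\langle\Phi|$ has the same spectrum as $-\Delta$ except that the lowest eigenvalue is shifted from $0$ to $\lambda$. Existence/uniqueness follows on the admissible range $\mu < \min\{\lambda,(2\pi/L)^2\}$ from strict monotonicity of $\mathcal{N}(\cdot,\lambda)$ in its first argument and the divergence of $\mathcal{N}$ as $\mu$ approaches the lowest single-particle eigenvalue from below.

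\textbf{Lower bound.} This is pure monotonicity. For any fixed $\mu$ in the admissible range, $\lambda \mapsto \mathcal{N}(\mu,\lambda)$ is decreasing, because only the $p=0$ term depends on $\lambda$ and it decreases in $\lambda$. Hence for $\lambda \geq 0$,
\begin{equation*}
\mathcal{N}(\mu_0,\lambda) \leq \mathcal{N}(\mu_0,0) = N = \mathcal{N}(\mu(\lambda),\lambda),
\end{equation*}
and strict monotonicity of $\mathcal{N}(\cdot,\lambda)$ in $\mu$ then forces $\mu(\lambda) \geq \mu_0$.

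\textbf{Upper bound.} The key observation is that the grand canonical occupation of any individual momentum mode cannot exceed the total expected particle number $N$, since all occupations are nonnegative and sum to $N$. I would split into two cases according to where the lowest eigenvalue of $-\Delta + \lambda|\Phi\rangle\langle\Phi|$ sits. If $\lambda \leq (2\pi/L)^2$, this eigenvalue is $\lambda$ at $p=0$, and
\begin{equation*}
\frac{1}{e^{\beta(\lambda - \mu(\lambda))} - 1} \leq N
\end{equation*}
rearranges to $\mu(\lambda) \leq \lambda - \tfrac{1}{\beta}\ln(1 + 1/N)$. If instead $\lambda > (2\pi/L)^2$, the lowest eigenvalue is $(2\pi/L)^2$, attained at any $p$ with $|p|=2\pi/L$; applying the same bound to the occupation of one such mode gives $\mu(\lambda) \leq (2\pi/L)^2 - \tfrac{1}{\beta}\ln(1+1/N)$. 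Taking the minimum of the two cases yields the claim.

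There is no real obstacle here; the statement is essentially a restatement of the bound $N_0 \leq N$ on the expected condensate occupation, together with one-line monotonicity of the Bose distribution. The only thing to double-check is that the degeneracy at the bottom of the spectrum is harmless, which it is since ignoring degeneracy in case (b) only weakens the bound from $\ln(1+6/N)$ to $\ln(1+1/N)$, as stated.
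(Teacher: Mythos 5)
Your proof is correct and follows essentially the same route as the paper: the lower bound via monotonicity of $\lambda\mapsto\mu(\lambda)$ (which you additionally justify through the two-variable monotonicity of the particle-number function, whereas the paper simply asserts it), and the upper bound via the observation that the occupation of the lowest single-particle mode $\min\{(2\pi/L)^2,\lambda\}$ cannot exceed $N$. No substantive difference.
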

\begin{proof}
	The lower bound follows from the fact that the map $\lambda \mapsto \mu(\lambda)$ is monotone increasing. The lowest eigenvalue of the operator $p^2 + \delta_{p,0} \lambda$ is given by $\min\left\{ \left( 2 \pi/L \right)^2, \lambda \right\} $. To prove the upper bound, we realize that
	\begin{equation}
	\frac{1}{\exp\left( \beta \left( \min\left\{ \left( \frac{2 \pi}{L} \right)^2, \lambda \right\}  - \mu(\lambda) \right) \right) - 1 } \leq N.
	\label{eq:freefreeenergy1b}
	\end{equation}
	The above statement follows from the fact that we have $N$ particles in the system, and hence the expected number of particles in the condensate cannot exceed $N$. Eq.~\eqref{eq:freefreeenergy1b} is equivalent to the second inequality in \eqref{eq:freefreeenergy1c} and proves the claim.
\end{proof} 
A long as $| p | \leq 1/s$ we have $\epsilon(p) = (1 - \kappa + \kappa') p^2 + \delta_{p,0} \lambda - \mu(\lambda)$ and $\epsilon(p) \geq \kappa' p^2 - \mu(\lambda)$ holds for $| p | > 1/s$. We will later choose the parameters such that $s^2/(L^2 \kappa')$ is much smaller than one. Together with $\lambda \leq (\frac{2 \pi}{L})^2 \eta$ for $0 < \eta < 1$, this in particular implies that $\epsilon(p) > 0$ for $|p | \leq 1/s$ and $\kappa' p^2 - \mu(\lambda) > 0$ for $| p | > 1/s$. In accordance with this decomposition of the momenta, we split the sum in the first line on the right-hand side of \eqref{eq:finallowerbound1} into two parts. The first part is given by
\begin{align}
	\tfrac{1}{\beta} \sum_{| p | \leq 1/s} \ln\left( 1 - e^{-\beta \left( (1- \kappa + \kappa') p^2 + \delta_{p,0} \lambda - \mu(\lambda) \right) } \right) \geq& \tfrac{1}{\beta} \sum_{p} \ln\left( 1 - e^{-\beta \left( p^2 + \delta_{p,0} \lambda - \mu(\lambda) \right) } \right) \label{eq:freefreeenergy2a} \\
	&- \sum_{p} \frac{(\kappa - \kappa') p^2}{\exp\left( \beta \left( (1- \kappa + \kappa') p^2 + \delta_{p,0} \lambda - \mu(\lambda) \right) \right) - 1} \nonumber
\end{align}
To arrive at the right-hand side, we used the concavity of the map $x \mapsto \ln(1 - e^{-x})$. An application of Lemma~\ref{lem:boundchemicalpotential} together with the assumption that $| \kappa - \kappa' | $ is small enough tells us that the absolute value of the term in the second line is bounded from above by a constant times
\begin{align}
	\left| \kappa - \kappa' \right| \left[ \tfrac{1}{\beta} + \sum_{| p | > 2 \frac{2 \pi}{L} } \frac{p^2}{e^{-\beta p^2/4} - 1} \right].
	\label{eq:freefreeenergy3}
\end{align}
The summands times $\beta$ are bounded from above by a constant times $e^{-\beta p^2/8}$. An application of Lemma~\ref{lem:latticesum1} therefore tells us that the sum in \eqref{eq:freefreeenergy3} cannot be larger than a constant times $| \Lambda | / \beta^{5/2}$.

The part of the sum in the first line of \eqref{eq:finallowerbound1} coming from the momenta with $| p | > 1/s$ is bounded from below by
\begin{equation}
	\tfrac{1}{\beta} \sum_{|p| > 1/s} \ln\left( 1 - e^{-\beta \left( \kappa' p^2 - \mu(\lambda) \right) } \right) \geq \tfrac{1}{\beta} \sum_{| p | > 1/s} \ln\left( 1 - e^{- \beta \left( \kappa' p^2/2 + \frac{\kappa'}{2 s^2} - \left( \frac{2\pi}{L} \right)^2 \right)} \right)
	\label{eq:freefreeenergy4}
\end{equation}
As already mentioned in the discussion after Lemma~\ref{lem:boundchemicalpotential}, we will later choose $\kappa'$ and $s$ such that $\kappa' /(2 s^2) - (2\pi/L)^2 \geq 0$. Since $x \mapsto \ln(1-e^{-x})$ is a negative and monotone increasing functions, we can use Lemma~\ref{lem:latticesum1} to show that the right-hand side of \eqref{eq:freefreeenergy4} is bounded from below by a constant times
\begin{equation}
	\frac{| \Lambda |}{\beta^{5/2} \left(\kappa' \right)^{3/2}} \int_{ | p | \geq \left( \beta \kappa' \right)^{1/2} \left[ \frac{1}{s} - \sqrt{3} \frac{2 \pi}{L} \right]_+} \ln\left( 1 - e^{- p^2/2} \right) \left( 1 + \frac{\beta \kappa'}{p^2 L^2} \right) \de p.
	\label{eq:freefreeenergy5}
\end{equation}
We will later choose $\kappa'$ and $s$ such that $\beta \kappa'/s^2 \gg 1$. We also note that the term in \eqref{eq:freefreeenergy5} is an exponentially decaying function of this parameter. Putting the results of this section and the definition of $\kappa'$ in \eqref{eq:Dysonholes2b} together, in particular, \eqref{eq:freefreeenergy2a}--\eqref{eq:freefreeenergy5}, we find 
\begin{align}
	\mu(\lambda) N + \tfrac{1}{\beta} \sum_p \ln\left( 1 - e^{-\beta \epsilon(p)} \right) \geq F_0\left( \beta, N, L, \lambda \right) &- \text{const. } \left\{ \frac{a_N R_0^2}{R^3} \left[ \frac{1}{\beta} + \frac{| \Lambda | }{\beta^{5/2}} \right] + \frac{\ln(N)}{\beta} \right\} \label{eq:freefreeenergy6}  \\
	&- \text{const. } \frac{| \Lambda |}{\beta^{5/2} \left(\kappa' \right)^{3/2}} \exp\left( - \text{const. } \sqrt{\beta \kappa'/s^2} \right) \nonumber
\end{align}
To obtain the result, we also used $s \leq L$ and Lemma~\ref{lem:freeenergy} in the Appendix to replace the grand canonical free energy $F_0^{\mathrm{gc}}\left( \beta, \mu(\lambda), L, \lambda \right)$ by the canonical free energy $F_0(\beta,N,L,\lambda)$. The bound has been derived under the assumptions that $s^2/(L^2 \kappa')$, $s^2/(\beta \kappa')$ and $| \kappa - \kappa' |$ are small enough. 
 %%%%%%%%%%%%%%%%%%%%%%%%%%%%%%%%%%%%%%%%%%%%%%%%%%%%%%%%%%%
 \subsection{Choice of Parameters}
 \label{sec:choiceofparameters}
 %%%%%%%%%%%%%%%%%%%%%%%%%%%%%%%%%%%%%%%%%%%%%%%%%%%%%%%%%%%
Optimization under the assumptions $a_N = a_{v} L /N$ with fixed $a_{v} > 0$, $\lambda \leq (\frac{2 \pi}{L})^2 \eta$ with fixed $0<\eta<1$ and $\beta \varrho^{2/3} \gtrsim 1$ leads to the same choice of parameters as in \cite[Sec.~2.16]{Sei2008} and implies the lower bound
\begin{align}
	 \tr\left[ H_N^{\lambda} \Gamma \right] - \tfrac{1}{\beta} S(\Gamma) \geq& F_0(\beta,N,L,\lambda) + 4 \pi | \Lambda | a_N \left( 2 \varrho^2 - \varrho_0(\beta,N,L)^2  \right) - L^{-2} N \widetilde{c}_{\ell} (\beta,N,L) \label{eq:choiceofparameters1} \\
	 &+ \tfrac{1}{\beta} \int_{\mathbb{C}^M} S\left( \Gamma_z, \Gamma_{\mathrm{c}}^{0,\lambda} \right) \zeta_{\Gamma}(z) \de z \nonumber 
\end{align}
with
\begin{equation}
	\widetilde{c}_{\ell} (\beta,N,L) \leq C_{\delta} \left( \beta \varrho^{2/3} \right) N^{-4/1209 + \delta} \left( \beta \varrho^{2/3} \right)^{5/403 - \delta}
	\label{eq:choiceofparameters2} 
\end{equation}
and some function $s \mapsto C_{\delta}(s)$ that is uniformly bounded for $s \in [d,\infty)$ with $d>0$. For $\delta \to 0$, the function $C_{\delta}$ blows up. The error term is of lower order as long as $N^{-2/3} (\beta \varrho^{2/3})^{5/2} \ll 1$. The bound is uniform for $0 < a_v \lesssim 1$. The desired uniformity in the temperature will be achieved in the next section. 

Since it will be needed in Sec.~\ref{sec:asymptotics1pdm}, we also state here the choices of $p_{\mathrm{c}}$ and $R$ resulting from the optimization. They are given by
\begin{equation}
	p_{\mathrm{c}} = \begin{cases} \beta^{-1/2} \left( a_N \varrho^{2} \beta^{5/2} \right)^{81/403} & \text{ if } \  \beta | \mu_0| \leq \left( a_N \varrho^{2} \beta^{5/2} \right)^{162/403} \\ 0 & \text{ otherwise. } \end{cases}
	\label{eq:pc} 
\end{equation}
and
\begin{equation}
	R = \varrho^{-1/3} \left( a_N \varrho^{2} \beta^{5/2} \right)^{3/403}.
	\label{eq:R} 
\end{equation}
%%%%%%%%%%%%%%%%%%%%%%%%%%%%%%%%%%%%%%%%%%%%%%%%%%%%%%%%%%%
\subsection{Uniformity in the temperature}
%%%%%%%%%%%%%%%%%%%%%%%%%%%%%%%%%%%%%%%%%%%%%%%%%%%%%%%%%%%
We follow the analysis in \cite[Sec.~2.17]{Sei2008} until equation (2.17.4) and arrive at
\begin{equation}
	H_N \geq \sum_{j=1}^N \ell \left( \sqrt{-\Delta_j} \right) + 4 \pi a_N N \varrho \left( 1 - \epsilon - \text{const. } \left( a^3 \varrho \right)^{1/17} - \text{const. } \frac{1}{\epsilon R s^2 \varrho} \right),
	\label{eq:uniformity1}
\end{equation}
where $\ell(|p|) = p^2 \left( 1 - \kappa/2 - (1-\kappa) \nu(sp)^2 \right)$. To obtain this bound, it has been assumed that $\kappa = ( a_N^3 \varrho )^{1/17}$ and $R = a_N ( a_N^3 \varrho )^{-5/17}$. From this point on we have to adjust the analysis in \cite{Sei2008}. This is necessary because we cannot replace sums by integrals, we have to add the term $\delta_{p,0} \lambda$ to the one-particle Hamiltonian, and we want to obtain the canonical free energy and the canonical condensate density (in the thermodynamic limit the canonical and the grand canonical free energies and condensate densities are the same). 

Denote the first term on the right-hand side of Eq.~\eqref{eq:uniformity1} by $T_N^{\mathrm{c}}$ and let
\begin{equation}
	T_N^{\mathrm{c},\lambda} = T_N^{\mathrm{c}} + \lambda \sum_{i=1}^N | \Phi \rangle \langle \Phi |_i,
	\label{eq:uniformity1b}
\end{equation}
where $\Phi(x) = | \Lambda |^{-1/2}$, as before. For any $N$-particle state $\Gamma$ we have
\begin{align}
	\tr \left[ T^{\mathrm{c},\lambda}_N \Gamma \right] - TS(\Gamma) &= \tr \left[ \left( T^{\mathrm{c},\lambda}_N - \mu(\lambda) N \right) \Gamma \right] - TS(\Gamma) + \mu(\lambda) N \label{eq:uniformity2} \\
	&\geq \tfrac{1}{\beta} \sum_{p} \ln \left( 1- e^{-\beta \left( \ell(|p|) + \delta_{p,0} \lambda - \mu(\lambda) \right)} \right) + \mu(\lambda) N. \nonumber
\end{align}
If we assume that $s^2/(\kappa L^2)$ is small enough, we have $\ell(|p|) + \delta_{p,0} \lambda - \mu(\lambda) > 0$ for all $p \in \frac{2 \pi}{L} \mathbb{Z}^3$. This follows from $\kappa = ( a_N^3 \varrho )^{1/17} \ll 1$ and $\lambda \leq (\frac{2 \pi}{L})^2 \eta$ with $0 < \eta < 1$.  We use $\ell(|p|) = \left( 1 - \kappa/2 \right) p^2$ for $| p | \leq 1/s$ to see that
\begin{align}
	\tfrac{1}{\beta} \sum_{| p | \leq 1/s} \ln \left( 1- e^{-\beta \left( \ell(|p|) + \delta_{p,0} \lambda - \mu(\lambda) \right)} \right) \geq& \tfrac{1}{\beta} \sum_{p} \ln \left( 1- e^{-\beta \left( p^2 + \delta_{p,0} \lambda - \mu(\lambda) \right)} \right) \label{eq:uniformity3} \\
	&- \sum_{p} \frac{\frac{\kappa p^2}{2}}{\exp\left(\beta \left( p^2 + \delta_{p,0} \lambda - \mu(\lambda) \right)\right) - 1}, \nonumber  
\end{align}
as in \eqref{eq:freefreeenergy2a}. The term in the second line of \eqref{eq:uniformity3} can be quantified as a similar term in Sec.~\ref{sec:freefreeenergy}, compare with \eqref{eq:freefreeenergy3}. This is also true for the sum over all momenta with $| p | > 1/s$. Following these arguments and replacing again the grand canonical free energy by the canonical one with Lemma~\ref{lem:freeenergy} in the Appendix, we find that
\begin{align}
		\tr \left[ T^{\lambda}_N \Gamma \right] - TS(\Gamma) \geq& F_0\left( \beta, N, L ,\lambda \right) - \text{const. } \left[ \kappa \left( \frac{ 1 }{\beta} + \frac{| \Lambda |}{\beta^{5/2}} \right) + \frac{\ln(N)}{\beta} \right]  \label{eq:uniformity3b} \\
		&- \text{const. } \frac{| \Lambda |}{\beta^{5/2} \kappa^{3/2}} \exp\left( - \text{const. } \sqrt{\beta \kappa/s^2} \right) \nonumber
\end{align}
holds. To obtain the bound, we assumed that $s^2/(L^2 \kappa)$ and $s^2/(\beta \kappa)$ are small enough. 

In order to obtain the final estimate, we also need to replace the interaction energy in Eq.~\eqref{eq:uniformity1} by the formula we have in Theorem~\ref{thm:periodic}. As above we denote by $\varrho_0(\beta,N,L)$ the expected condensate density of the ideal gas in the canonical ensemble in the case $\lambda = 0$ and we define by $\varrho_{\mathrm{th}} = \varrho - \varrho_0$ the expected density of the thermal cloud. We then have for $\beta \gtrsim \beta_{\mathrm{c}}$
\begin{align}
	4 \pi a_N | \Lambda | \varrho^2 &\geq 4 \pi a_N | \Lambda | \left[ 2 \varrho^2 - \varrho_0^2 \right] - 4 \pi a_N | \Lambda | \left( 2 \varrho_0 \varrho_{\mathrm{th}} + \varrho_{\mathrm{th}}^2 \right) \label{eq:uniformity7} \\
	&\geq 4 \pi a_N | \Lambda | \left[ 2 \varrho^2 - \varrho_0^2 \right] - \text{const. } a_N | \Lambda | \varrho \varrho^{\mathrm{gc}}_{\mathrm{th}}. \nonumber
\end{align}
To come to the second line, we used $\varrho_0 \leq \varrho$ as well as Lemma~\ref{lem:cvsgcexpectations} in the appendix to bound $\varrho_{\mathrm{th}} \lesssim \varrho_{ \mathrm{th} }^{ \mathrm{gc} }$. To see that $\varrho^{\mathrm{gc}}_{\mathrm{th}} \lesssim \beta^{-3/2} + \beta/L$, we write
\begin{equation}
	|\Lambda| \varrho^{\mathrm{gc}}_{ \mathrm{th} } = \sum_{ 0 < |p| < 2 \frac{2 \pi}{L}} \frac{1}{e^{\beta \left( p^2 - \mu_0 \right) } - 1} + \sum_{| p | \geq 2 \frac{2 \pi}{L}} \frac{1}{e^{\beta \left( p^2 - \mu_0 \right) } - 1}. 
	\label{eq:uniformity7b}
\end{equation}
The first term on the right-hand side is bounded by a constant times $L^2/\beta$. To bound the second term, we invoke Lemma~\ref{lem:latticesum1} to see that it is bounded from above by a constant times $| \Lambda | / \beta^{3/2}$. 

In combination, \eqref{eq:uniformity3b} and \eqref{eq:uniformity7} imply:
\begin{align}
	\tr \left[ H^{\lambda}_N \Gamma \right] - TS(\Gamma) \geq& F_0(\beta,N,L,\lambda) + 4 \pi a_N | \Lambda | \left( 2 \varrho^2 - \varrho_0(\beta,N,L)^2 \right) 
	- \text{const. } \left[ \kappa \left( \frac{N}{L^2} + \frac{ L }{\beta^{3/2} } + \frac{| \Lambda |}{\beta^{5/2}} \right) + \frac{\ln(N)}{\beta} \right] \nonumber \\
	& -\text{const. } a_N | \Lambda | \varrho^2 \left[ \epsilon + \left( a_N^3 \varrho \right)^{1/17} + \frac{1}{\epsilon R s^2 \varrho} + \frac{1}{\beta^{3/2} \varrho} \right] \label{eq:uniformity8} \\
	&- \text{const. } \frac{| \Lambda |}{\beta^{5/2} \kappa^{3/2}} \exp\left( - \text{const. } \sqrt{\beta \kappa/s^2} \right). \nonumber
\end{align}
In the derivation of this bound, we assumed that $s^2/(L^2 \kappa)$ and $s^2/(\beta \kappa)$ are small enough. Before we optimize under the assumption $\lambda \leq (\frac{2\pi}{L})^2 \eta$ with fixed $0<\eta<1$ and $a_N = a_v L/N$ with $a_v > 0$ fixed, we insert $\kappa$ and $R$ from above, see the discussion after \eqref{eq:uniformity1}. With some $\delta > 0$ we choose
\begin{equation}
	s^2 = \min \left\{ L^2, \beta \right\} \left( a_N^3 \varrho \right)^{1/17 + \delta} \quad \text{ and } \quad \epsilon^2 = \frac{1}{R s^2 \varrho}.
\label{eq:uniformity9a}
\end{equation}
This fulfills the condition on $s$ and $\kappa$ stated after \eqref{eq:uniformity8} and it assures the smallness of the term in the third line in \eqref{eq:uniformity8}. It implies
\begin{align}
	&\tr \left[ H^{\lambda}_N \Gamma \right] - TS(\Gamma) \geq F_0(\beta,N,L,\lambda) + 4 \pi a_N | \Lambda | \left( 2 \varrho^2 - \varrho_0(\beta,N,L)^2 \right) \label{eq:uniformity9} \\
	&\hspace{2cm}- \text{const. } | \Lambda | a_N \varrho^2 \left[ \left( \frac{1}{\min\{L^2,\beta \} \varrho^{2/3} \left( a_N^3 \varrho \right)^{5/51 + \delta}} \right)^{1/2} + \left( a_N^3 \varrho \right)^{1/17} \left( 1 + \frac{1}{ a_N \varrho^2 \beta^{5/2} } \right) + \frac{1}{\left( \beta \varrho^{2/3} \right)^{3/2}}  \right]. \nonumber
\end{align}
We have to combine the two bounds \eqref{eq:choiceofparameters1} and \eqref{eq:uniformity9} in the same way as in \cite[Sec.~217]{Sei2008} to obtain the optimal rate, that is, we use \eqref{eq:choiceofparameters1} as long as $\beta \varrho^{2/3} \leq N^{7568/103275} \approx N^{0.0733}$ and \eqref{eq:uniformity9} otherwise. This yields the final lower bound 
\begin{equation}
	F(\beta,N,L,\lambda) \geq F_0(\beta,N,L,\lambda) + 4 \pi | \Lambda | a_N \left( 2 \varrho^2 - \varrho_0(\beta,N,L)^2 \right) - L^{-2} N c_{\ell}(N) \label{eq:finallowerbounda}
\end{equation}
 with 
\begin{equation}
	c_{\ell}(N) \leq C_{\delta} \left( \beta \varrho^{2/3} \right) N^{-\alpha}
	\label{eq:finallowerboundb} 
\end{equation}
and some function $s \mapsto C_{\delta}(s)$ that is uniformly bounded on intervals $[d,\infty)$ with $d>0$ and $\alpha = 4/6885 - \delta$. For $\delta \to 0$, the function $C_{\delta}$ blows up. The bound is uniform for $0 < a_v \lesssim 1$. This completes the proof of the lower bound.
%%%%%%%%%%%%%%%%%%%%%%%%%%%%%%%%%%%%%%%%%%%%%%%%%%%%%%%%%%%
\section{Proof of the asymptotics of the one-particle density matrix}
\label{sec:asymptotics1pdm}
%%%%%%%%%%%%%%%%%%%%%%%%%%%%%%%%%%%%%%%%%%%%%%%%%%%%%%%%%%
In this section we prove the claimed asymptotics for the 1-pdm of approximate minimizers of the Gibbs free energy functional. A crucial input for the analysis in this section are the lower bounds \eqref{eq:choiceofparameters1} and \eqref{eq:uniformity9}. The proof is split into four parts: In the first part we consider the 1-pdm projected onto the subspace of the one-particle Hilbert space with momenta at least $p_{\mathrm{c}}$ and we show that it equals the one of the non-interacting Gibbs state to leading order. In the second step we consider the 1-pdm projected to the orthogonal complement of that subspace and show that also there it is to leading order given by the one of the non-interacting Gibbs state. In the third step we estimate the off-diagonal contributions and in the fourth part we prove the uniformity in the temperature. We highlight that off-diagonal contributions to the 1-pdm have to be estimated because we do not assume that the states under consideration are translation invariant. With this assumption we would obtain a better rate. An important example of a translation invariant state is the interacting Gibbs state \eqref{eq:Gibbsstate}.
%%%%%%%%%%%%%%%%%%%%%%%%%%%%%%%%%%%%%%%%%%%%%%%%%%%%%
\subsection{The one-particle density matrix of the thermal cloud}
%%%%%%%%%%%%%%%%%%%%%%%%%%%%%%%%%%%%%%%%%%%%%%%%%%%%%

Let $\Gamma$ be an approximate minimizers of the Gibbs free energy functional in the sense that 
\begin{equation}
	\mathcal{F}(\Gamma) \leq F_0(\beta,N,L) + 4 \pi a_N | \Lambda | \left( 2 \varrho^2 - \varrho_0(\beta,N,L)^2 \right) ( 1 + \delta(N) ),
	\label{eq:asymptotics1pdm1}
\end{equation}
with $0 \leq \delta(N) = o(1)$ in the considered limit. Together with the lower bound \eqref{eq:choiceofparameters1} with $\lambda = 0$, this implies
\begin{equation}
	\int_{\mathbb{C}^M} S\left ( \Gamma_{z} , \Gamma^{0}_{\mathrm{c}} \right) \zeta_{\Gamma} (z) \de z \leq 4 \pi | \Lambda | \beta a_N \varrho^2 \left( \widetilde{c}_{\ell}(\beta,N,L) + \delta(N) \right).
	\label{eq:asymptotics1pdm4}
\end{equation} 
The state $\Gamma^0_{\mathrm{c}} \equiv \Gamma^{0,\lambda=0}_{\mathrm{c}}$ was defined in Sec.~3.6. The index c refers to the fact that the relevant dispersion relation  is not $p^2 - \mu_0$ but the one we obtained after applying the Dyson Lemma, see \eqref{eq:Dysonholes2}. The goal of this section is to obtain quantitative information on the 1-pdm $\gamma$ of $\Gamma$ from this bound. Let us define 
\begin{equation}
	P = \mathds{1} \left( - \Delta < p_{\mathrm{c}} \right) \quad \text{ and } \quad Q = 1 - P.
	\label{eq:PQ}
\end{equation}
When projected to the high momentum modes, $\gamma$ reads
\begin{equation}
	Q \gamma Q = \int_{\mathbb{C}^M} \gamma_{z} \ \zeta_{\Gamma} (z) \de z,
	\label{eq:asymptotics1pdm5}
\end{equation}
where $\gamma_{z}$ is the 1-pdm of the state $\Gamma_{z}$. Hence, if we denote by $\gamma^0_{\mathrm{c}}$ the 1-pdm of $\Gamma^0_{\mathrm{c}}$, we have
\begin{equation}
	\left\| Q \left( \gamma - \gamma^0_{\mathrm{c}} \right)  Q \right\|_1 \leq \int_{\mathbb{C}^M} \left\| \gamma_{z} - \gamma^0_{\mathrm{c}} \right\|_1 \zeta_{\Gamma} (z) \de z. 
	\label{eq:asymptotics1pdm6}
\end{equation}
In the following, we will derive a bound on the right-hand side.

The starting point of our analysis is \eqref{eq:asymptotics1pdm4}. Since $\Gamma_{\mathrm{c}}^0$ is a quasi-free state, the left-hand side \eqref{eq:asymptotics1pdm4} can be bounded from below in terms of the bosonic relative entropy. For two nonnegative operators $a, b$ with finite trace, it is defined by
\begin{equation}
\mathcal{S}(a,b) = \sum_{i,j} \left| \langle \psi_i, \varphi_j \rangle \right|^2 \left( \sigma(\gamma_i) - \sigma(\eta_j) - \sigma'(\eta_j)(\gamma_i-\eta_j)  \right),
\label{eq:asymptotics1pdm7}
\end{equation}
where $\sigma(x) = x \ln(x) - (1+x) \ln (1+x)$ and $\{ \lambda_i, \psi_i \}$ and $\{ \eta_j, \varphi_j \}$ denote the eigenvalues and eigenfunctions of $a$ and $b$, respectively. We also denote by 
\begin{equation}
	\mathcal{S}(a) = -\trs \left[ \sigma\left( a \right) \right]
	\label{eq:asymptotics1pdm8}
\end{equation}
the bosonic entropy of $a$. We then have  
 $\mathcal{S}(\gamma_z) \geq S(\Gamma_z)$, see \cite[2.5.14.5]{Thirring_4}, as well as $S(\Gamma^0_{\mathrm{c}}) = \mathcal{S}(\gamma^0_{\mathrm{c}})$, and therefore conclude that
 \begin{equation}
 	S\left(\Gamma_z,\Gamma^0_{\mathrm{c}} \right) \geq \mathcal{S}\left( \gamma_z, \gamma^0_{\mathrm{c}} \right).
 	\label{eq:asymptotics1pdm8b}
 \end{equation}
In combination with \eqref{eq:asymptotics1pdm4}, this implies
\begin{equation}
\int_{\mathbb{C}^M} \mathcal{S}\left( \gamma_z , \gamma^0_{\mathrm{c}} \right) \zeta_{\Gamma}(z) \de z \lesssim \beta | \Lambda | a_N \varrho^2 \left( \widetilde{c}_{\ell}(\beta,N,L) + \delta(N) \right).
\label{eq:asymptotics1pdm9}
\end{equation}
In order to obtain quantitative information from Eq.~\eqref{eq:asymptotics1pdm9}, we need the following Lemma which quantifies the coercivity of the bosonic relative entropy. It is an improved version of \cite[Lemma~4.1]{me}.
%%%%%%%%%%%%%%%%%%%%%%%%%%%%%%%%%%%%%%%%%%%%%%%%%%%%%%%%%%%%%
\begin{lemma}
	\label{lem:relativeentropy}
	There exists a constant $C>0$ such that for any two nonnegative trace-class operators $a$ and $b$ we have
	\begin{equation}
	\mathcal{S}\left( a, b \right) \geq C \frac{\Vert a - b \Vert_1^2}{\left\Vert 1+b \right\Vert \trs\left[ a + b \right]}.
	\label{eq:asymptoticslemma2}
	\end{equation}
\end{lemma}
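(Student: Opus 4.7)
The plan is to reduce the operator inequality to a sharp one-dimensional convexity estimate and then lift it via Cauchy--Schwarz together with the duality $\| a-b \|_1 = \sup_{\| X \| \leq 1} | \trs [ X(a-b) ] |$. The scalar bound to aim for is
\[
	\sigma(x) - \sigma(y) - \sigma'(y)(x-y) \geq \frac{(x-y)^2}{2(1+y)(x+y)}, \qquad x,y \geq 0,
\]
for $\sigma(t) = t \ln t - (1+t) \ln(1+t)$, which has $\sigma''(t) = 1/(t(1+t))$. The crucial feature of this bound is that only $(1+y)$, and not $(1+\max\{x,y\})$, appears in the denominator: this asymmetry is precisely what eventually produces $\| 1+b \|$ rather than something involving $\| a \|$ in the final estimate, and is essential in the application, where $a = \gamma_z$ has no a priori norm control while $b = \gamma^{0}_{\mathrm{c}}$ does.

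To establish the pointwise bound I use the Taylor representation $\sigma(x) - \sigma(y) - \sigma'(y)(x-y) = \int_y^x \sigma''(t)(x-t)\, dt$. The case $y \geq x$ is immediate from monotonicity of $\sigma''$, which yields the stronger bound $(x-y)^2/[2 y (1+y)]$ that in turn dominates the claimed right-hand side because $x+y \geq y$. The case $x \geq y$ is more delicate since the analogous monotonicity estimate $(x-y)^2/[2x(1+x)]$ is too weak when $x \gg y$. I instead restrict the integration to $[y, (x+y)/2]$, on which $x-t \geq (x-y)/2$, and use the exact primitive $\int dt/(t(1+t)) = \ln(t/(1+t))$ to reduce matters to showing $\ln\bigl[(x+y)(1+y)/(y(2+x+y))\bigr] \geq (x-y)/[(1+y)(x+y)]$. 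This in turn follows from the algebraic identity $(1+y)(x+y) = y(2+x+y) + (x-y)$ combined with the elementary inequality $\ln(1+u) \geq u/(1+u)$ applied to $u = (x-y)/[y(2+x+y)]$.

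For the operator lift, expand in the eigenbases $\{ \psi_i \}$ of $a$ and $\{ \varphi_j \}$ of $b$, so that $\langle \psi_i, (a-b) \varphi_j \rangle = (\gamma_i - \eta_j) \langle \psi_i, \varphi_j \rangle$. For any $X$ with $\| X \| \leq 1$, Cauchy--Schwarz with weights $W_{ij} = (1+\eta_j)(\gamma_i + \eta_j)$ yields
\[
	| \trs [ X(a-b) ] |^2 \leq \Bigl( \sum_{i,j} | \langle \psi_i, X \varphi_j \rangle |^2 \, W_{ij} \Bigr) \Bigl( \sum_{i,j} | \langle \psi_i, \varphi_j \rangle |^2 \, \frac{(\gamma_i - \eta_j)^2}{W_{ij}} \Bigr).
\]
The first factor is bounded by $\| 1+b \| \bigl( \trs[a X X^*] + \trs[b X^* X] \bigr) \leq \| 1+b \| \, \trs[a+b]$ using $1 + \eta_j \leq \| 1+b \|$ together with $X X^*, X^* X \leq \mathds{1}$; the second factor is at most $2\, \mathcal{S}(a,b)$ by the summed pointwise bound. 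Taking the supremum over $X$ closes the argument with $C = 1/2$. The sole obstacle is thus the asymmetric pointwise estimate above; the operator lift is then a standard weighted Cauchy--Schwarz manipulation.
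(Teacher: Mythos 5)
Your proof is correct, and it differs from the paper's in both of its two halves. For the scalar inequality $\sigma(x)-\sigma(y)-\sigma'(y)(x-y)\geq C (x-y)^2/[(1+y)(x+y)]$ the paper simply cites \eqref{eq:asymptoticslemma5} from the earlier work \cite{me}, whereas you give a short self-contained derivation from the Taylor remainder $\int_y^x\sigma''(t)(x-t)\,\dda t$ with the explicit constant $C=1/2$; your case split (monotonicity of $\sigma''$ for $y\geq x$, restriction of the integration to $[y,(x+y)/2]$ combined with the identity $(1+y)(x+y)=y(2+x+y)+(x-y)$ and $\ln(1+u)\geq u/(1+u)$ for $x\geq y$) checks out. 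For the operator lift the paper takes a Powers--St\o rmer-type detour: it first deduces $\mathcal{S}(a,b)\geq C\Vert a^{1/2}-b^{1/2}\Vert_2^2/\Vert 1+b\Vert$ from the weaker pointwise bound $(\sqrt{x}-\sqrt{y})^2/(1+y)$, and then converts the Hilbert--Schmidt distance of the square roots into the trace-norm distance via the symmetrized factorization $a-b=\tfrac12(a^{1/2}-b^{1/2})(a^{1/2}+b^{1/2})+\tfrac12(a^{1/2}+b^{1/2})(a^{1/2}-b^{1/2})$ and H\"older. You instead go directly from the pointwise bound to the trace norm by duality, $\Vert a-b\Vert_1=\sup_{\Vert X\Vert\leq 1}|\trs[X(a-b)]|$, and a weighted Cauchy--Schwarz in the double eigenbasis with weights $W_{ij}=(1+\eta_j)(\gamma_i+\eta_j)$. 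Both routes land on the same final estimate with comparable constants; the paper's intermediate bound \eqref{eq:asymptoticslemma1a} on $\Vert a^{1/2}-b^{1/2}\Vert_2$ is of some independent interest, while your argument is more direct, keeps track of the explicit constant $C=1/2$, and avoids manipulating operator square roots.
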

\begin{proof}[Proof of Lemma~\ref{lem:relativeentropy}]
	Let $f(x,y) = \sigma(x) - \sigma(y) - \sigma'(y)(x-y)$. In the proof of Lemma~4.1 in \cite{me} it has been shown that there is a $C>0$ such that
	\begin{equation}
		f(x,y) \geq C \frac{(x-y)^2}{(1+y)(x+y)}.
		\label{eq:asymptoticslemma5}
	\end{equation}
	We write $(x-y)^2 = \left( \sqrt{x} - \sqrt{y} \right)^2 \left( \sqrt{x} + \sqrt{y} \right)^2$ which allows us to bound the right-hand side from below by
	\begin{equation}
		\frac{(x-y)^2}{(1+y)(x+y)} \geq \frac{\left( \sqrt{x} - \sqrt{y} \right)^2}{1+y} \geq \frac{\left( \sqrt{x} - \sqrt{y} \right)^2}{1+y_{\mathrm{max}}}.
		\label{eq:asymptoticslemma6}
	\end{equation}
	To obtain the final estimate, we assumed that $y \in [0,y_{\mathrm{max}}]$. In combination with \eqref{eq:asymptotics1pdm7}, this proves
	\begin{equation}
		\mathcal{S}(a,b) \geq C \frac{ \left\Vert a^{1/2} - b^{1/2} \right\Vert_2^2}{\Vert 1 + b \Vert}.
		\label{eq:asymptoticslemma1a}
	\end{equation}
	Next, we write the difference between the two density matrices as
	\begin{equation}
	a - b = \tfrac{1}{2} \left( a^{1/2} - b^{1/2} \right)\left( a^{1/2} + b^{1/2} \right) + \tfrac{1}{2} \left( a^{1/2} + b^{1/2} \right) \left( a^{1/2} - b^{1/2} \right)
	\label{eq:asymptoticslemma3}
	\end{equation}
	and estimate their trace norm difference by
	\begin{equation}
	\left\Vert a - b \right\Vert_1 \leq \left\Vert a^{1/2} - b^{1/2} \right\Vert_2  \left( \left\Vert a^{1/2} \right\Vert_2 + \left\Vert b^{1/2} \right\Vert_2 \right) = \left\Vert a^{1/2} - b^{1/2} \right\Vert_2  \left( \left\Vert a \right\Vert_1^{1/2} + \left\Vert b \right\Vert_1^{1/2} \right). 
	\label{eq:asymptoticslemma4}
	\end{equation}
	Here $\Vert \cdot \Vert_2$ denotes the Hilbert-Schmidt norm. Together with \eqref{eq:asymptoticslemma1a} and $\sqrt{x} + \sqrt{y} \leq \sqrt{2} \sqrt{x+y}$ for $x,y \geq 0$, this proves the claim.
\end{proof}
\begin{remark}
	Since the operator norm of $b$ appears in the denominator on the right-hand side of \eqref{eq:asymptoticslemma2}, Lemma~\eqref{lem:relativeentropy} is useful only in the case where the largest eigenvalue of $b$ is not too large. In particular, $b$ should not have a condensate.
\end{remark}
\begin{remark}
	The  Lemma above is an improvement w.r.t. \cite[Lemma~4.1]{me} since it gives a lower bound for the bosonic relative entropy directly in terms of the trace norm difference of the two density matrices. In \cite{me} a considerable amount of further analysis was necessary, during which additional  error terms are accumulated, to obtain such a bound. For this reason one obtains better rates of convergence from given bounds for the bosonic relative entropy when using Lemma~\ref{lem:relativeentropy} compared to what one would obtain with \cite[Lemma~4.1]{me}. Given the size of the remainder in \eqref{eq:choiceofparameters1}, an improved rate of convergence is not of particular relevance  for the analysis here, however. Bounds for the trace norm difference of two one-particle density matrices in terms of the relative entropy of the related states have recently also been proven in \cite[Theorem~6.1]{LewinNamRougerie2018}.
\end{remark}
%%%%%%%%%%%%%%%%%%%%%%%%%%%%%%%%%%%%%%%%%%%%%%%%%%%%%%%%%%%%%
Before we apply Lemma~\ref{lem:relativeentropy}, we will show that $\trs \gamma_z \leq N$ holds. To that end, we write
\begin{equation}
	\trs \gamma_z = \frac{\tr \left[ \mathbb{N}^> | z \rangle\langle z | \ \Gamma \right]}{\tr \widetilde{\Gamma}_z},
	\label{eq:intermezzo1}
\end{equation}
where $\mathbb{N}^> = \sum_{|p| \geq p_{\mathrm{c}}} a_p^* a_p$, as before. Let us denote by $P_N$ the projection onto the $N$-particle sector of the Fock space $\mathscr{F}$. It is sufficient to show that
\begin{equation}
	P_N \mathbb{N}^> | z \rangle\langle z | P_N \leq N P_N  | z \rangle\langle z | P_N
	\label{eq:intermezzo2}
\end{equation}
holds. With $[\mathbb{N}^>, | z \rangle\langle z | ] = 0 = [\mathbb{N}^>, P_N ]$ we check that $[P_N \mathbb{N}^> P_N, P_N | z \rangle\langle z | P_N ] = 0$. But this implies the claim. With this information at hand, we apply Lemma~\ref{lem:relativeentropy} to the left-hand side of \eqref{eq:asymptotics1pdm9} and additionally use $\trs \gamma_{\mathrm{c}}^0 \lesssim N$ as well as $\Vert 1 + \gamma^0_{\mathrm{c}} \Vert \lesssim L^2/\beta$. We find that
\begin{equation}
	\left\| Q \left( \gamma - \gamma^0_{\mathrm{c}} \right) Q \right\|_1 \lesssim N \left( \widetilde{c}_{\ell}(\beta,N,L) + \delta(N) \right)^{1/2}
	\label{eq:asymptotics1pdm11}
\end{equation}
holds. It remains to replace $\gamma^0_{\mathrm{c}}$ by the canonical 1-pdm of the ideal Bose gas. 

To that end, we first replace the dispersion relation $\epsilon(p)$ \eqref{eq:Dysonholes2} in the definition of $\gamma^0_{\mathrm{c}}$ by $p^2 - \mu_0$. This can be done with an analysis that is very similar to the one carried out between \eqref{eq:freefreeenergy2a} and \eqref{eq:freefreeenergy5} in Sec.~\ref{sec:freefreeenergy} and yields
\begin{equation}
	\left\| Q \left( \gamma^0_{\mathrm{c}} - \gamma_0^{\mathrm{gc}} \right) Q \right\|_1 \lesssim | \kappa - \kappa' |  \frac{| \Lambda | }{\beta^{3/2}} +  \frac{| \Lambda |}{\left( \beta \kappa' \right)^{3/2}} \exp\left( - \text{const. } \sqrt{\kappa' \beta/s^2} \right). \label{eq:asymptotics1pdm11b}
\end{equation}
In order to replace $\gamma_0^{\mathrm{gc}}$ by its canonical analogue $\gamma_0$, we invoke Lemma~\ref{lem:comparisoncondensatedensities} in the Appendix to show that
\begin{align}
	\left\Vert Q \left( \gamma_0 - \gamma_0^{\mathrm{gc}} \right) Q \right\Vert_1 &\leq \left\Vert \mathds{1}\left( - \Delta \neq 0 \right) \left( \gamma_0 - \gamma_0^{\mathrm{gc}} \right) \mathds{1}\left( - \Delta \neq 0 \right) \right\Vert_1 \label{eq:asymptotics1pdm11c} \\
	&\lesssim  \left( \trs \left[ \widetilde{\gamma}_{0}^{\mathrm{gc}} \left( 1 + \widetilde{\gamma}_{0}^{\mathrm{gc}} \right) \right] \ln N \right)^{1/2} + \left( 1 + \left\Vert \widetilde{\gamma}_{0}^{\mathrm{gc}} \right\Vert \right) \ln N \nonumber
\end{align}
holds, where $\widetilde{\gamma}_{0}^{\mathrm{gc}} =  \mathds{1}\left( - \Delta \neq 0 \right) \gamma_0^{\mathrm{gc}}$. With $\Vert \widetilde{\gamma}_{0}^{\mathrm{gc}} \Vert \lesssim L^2/\beta$ one checks that the term in the second line of \eqref{eq:asymptotics1pdm11c} is bounded from above by a constant times $N^{5/6} \ln(N)^{1/2}$ uniformly in $\beta \varrho^{2/3} \gtrsim 1$. In combination  \eqref{eq:asymptotics1pdm11}--\eqref{eq:asymptotics1pdm11c} imply
\begin{equation}
	\left\| Q \left( \gamma - \gamma_0 \right) Q \right\|_1 \lesssim N \left( \widetilde{c}_{\ell}(\beta,N,L) + \delta(N) \right)^{1/2}.
	\label{eq:asymptotics1pdm11d}
\end{equation}
The bound yields the desired result as long as $N^{-2/3} ( \beta \varrho^{2/3} )^{5/2} \ll 1$.
%%%%%%%%%%%%%%%%%%%%%%%%%%%%%%%%%%%%%%%%%%%%%%%%%%%%%
\subsection{The one-particle density matrix of the condensate}
\label{sec:expectednumberofparticlesincondensate}
%%%%%%%%%%%%%%%%%%%%%%%%%%%%%%%%%%%%%%%%%%%%%%%%%%%%%
In order to investigate $P \gamma P$ and, in particular, to show the existence of a BEC, we apply a Griffith argument. 
From Eq.~\eqref{eq:finallowerbounda}, we know that
\begin{equation}
	\mathcal{F}(\Gamma) + \lambda   \langle \Phi | \gamma | \Phi \rangle =  \tr\left[ H_N^{\lambda} \Gamma \right] - \tfrac{1}{\beta} S(\Gamma) \geq F_0(\beta,N,L,\lambda) - 4 \pi a_N | \Lambda | \left( 2 \varrho^2 - \varrho_0(\beta,N,L)^2 \right) \left( 1 - c_{\ell}(N) \right),
	\label{eq:1pdmcondensate3}
\end{equation}
where the Hamiltonian $H_N^{\lambda}$ was defined in \eqref{eq:1pdmcondensate1a} and $0 \leq \lambda \leq (\frac{2 \pi}{L})^2 \eta$ with some fixed $0 < \eta <1$. Together with \eqref{eq:asymptotics1pdm1} this implies
\begin{align}
	\langle \Phi | \gamma | \Phi \rangle &\geq \frac{F_0(\beta,N,L,\lambda) - F_0(\beta,N,L,0)}{\lambda} - \frac{| \Lambda | a_N \varrho^2}{\lambda} \left( c_{\ell}(N) + \delta(N) \right) \label{eq:1pdmcondensate4} \\
	&= 	\langle \Phi | \gamma_{0} | \Phi \rangle + \frac{1}{2} \frac{\partial^2 F_0(\beta,N,L,\lambda)}{\partial \lambda^2} \bigg|_{\lambda = \widetilde{\lambda}} \ \lambda - \frac{| \Lambda | a_N \varrho^2}{\lambda} \left( c_{\ell}(N) + \delta(N) \right) \nonumber
\end{align}
for some $0 \leq \widetilde{\lambda} \leq \lambda$. As above we denoted by $\Phi$ the constant function with value $|\Lambda|^{-1/2}$ on the torus. The second derivative in the above equation is nothing but $(-\beta)$ times the variance $\textbf{Var}_{\lambda}(n_0) = \langle n_0^2 \rangle_{\lambda} - \langle n_0 \rangle_{\lambda}^2$ of the occupation of the $p=0$ orbital. Here and in the following, we denote by $\langle \cdot \rangle_{\lambda}$ the expectation in the canonical ensemble with the energy of the $p=0$ orbital shifted by $\lambda$. We also recall that $n_p = a_p^* a_p$. In order to bound the above variance, we need the following Lemma:
\begin{lemma}
	\label{lem:variance}
	Assume that $0 \leq \lambda < ( \frac{2 \pi}{L} )^2 \eta$ with $0 < \eta < 1$. We then have
	\begin{equation}
		\var_{\lambda}(n_0) \lesssim \frac{L^4}{\beta^2}.
		\label{eq:lemvariance}
	\end{equation}
\end{lemma}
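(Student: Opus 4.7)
Because $N$ is conserved, $n_0 = N - S$ with $S := \sum_{p\neq 0} n_p$, and thus $\var_\lambda(n_0) = \var_\lambda(S)$. My plan is to realize the canonical state of the shifted ideal gas as a truncated product of independent geometric distributions, apply a standard conditional-variance estimate, and then bound the resulting unconditional variance by a direct lattice sum.

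Summing the canonical Boltzmann weight $\exp\{-\beta\lambda n_0 -\beta\sum_{p\neq 0} p^2 n_p\}\mathds 1(n_0 + S = N)$ over the enforced value $n_0 = N - S$ shows that the canonical marginal of $(n_p)_{p\neq 0}$ is proportional to $\prod_{p\neq 0} e^{-\beta(p^2-\lambda) n_p}\mathds 1(S \leq N)$. Hence, if $P_*$ denotes the product of independent geometric distributions on the excluded modes with means $\bar n_p^* = (e^{\beta(p^2-\lambda)}-1)^{-1}$, then $P_{\mathrm{can}} = P_*(\,\cdot\mid S \leq N)$. Centering at $c = \mathbb E_{P_*}[S]$ in $\var_{\mathrm{can}}(S) \leq \mathbb E_{\mathrm{can}}[(S-c)^2]$ and dropping the indicator yields
\[
  \var_\lambda(n_0) \;\leq\; \frac{\var_{P_*}(S)}{P_*(S \leq N)}.
\]

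For the numerator, independence under $P_*$ gives $\var_{P_*}(S) = \sum_{p \neq 0}\bar n_p^*(1+\bar n_p^*)$. Since $\lambda \leq \eta(2\pi/L)^2 \leq \eta p^2$ for every $p \in (2\pi/L)\mathbb Z^3\setminus\{0\}$, one has $\bar n_p^* \leq (\beta(1-\eta)p^2)^{-1}$, and a change of variables $p = (2\pi/L)k$ gives
\[
  \var_{P_*}(S) \;\lesssim\; \frac{L^4}{\beta^2(1-\eta)^2}\sum_{k \in \mathbb Z^3 \setminus 0}|k|^{-4} \;\lesssim\; \frac{L^4}{\beta^2},
\]
using absolute convergence of the three-dimensional Epstein zeta sum.

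The main obstacle is showing $P_*(S \leq N) \gtrsim 1$ uniformly in the considered regime $\beta\varrho^{2/3}\sim 1$ and $\lambda \in [0,\eta(2\pi/L)^2]$. The normalization identity $P_*(S \leq N) = e^{\beta\lambda N} Z_N^{(\lambda)}/Y_{gc}^{\mathrm{excl}}(\lambda)$ reduces this to a comparison of the canonical partition function with the grand canonical one on the excluded modes at chemical potential $\lambda$. Lemma~\ref{lem:freeenergy} of the appendix relates $Z_N^{(\lambda)}$ to the full grand canonical partition function at the self-consistent chemical potential $\mu(\lambda)\leq\lambda$ up to $e^{\pm O(\ln N)}$ factors, after which the residual ratio involves only $\bar n_0^{\mathrm{gc}}(\mu(\lambda))$ and exponentially small corrections. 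A short case analysis — treating $\bar n_0^{\mathrm{gc}}(\mu(\lambda)) \gtrsim L^2/\beta$ by a direct Chebyshev estimate on the excluded-mode ensemble, and the complementary regime (near the critical temperature with tiny finite-size condensate) by a direct bound on $\langle n_0 \rangle_{\mathrm{can}}$ — then completes the proof.
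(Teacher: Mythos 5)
Your reduction of the canonical marginal of the excited modes to a product of geometrics conditioned on $\{S\le N\}$ is correct, and the variance bound $\var_\lambda(n_0)\le \var_{P_*}(S)/P_*(S\le N)$ follows as you say. The fatal step is the claim that $P_*(S\le N)\gtrsim 1$ throughout the regime where the lemma is used. The lemma must hold uniformly for $\beta\varrho^{2/3}\gtrsim 1$, which includes temperatures \emph{above} the critical one ($\beta<\beta_{\mathrm{c}}(1-\epsilon)$). There $\mathbb{E}_{P_*}[S]=\sum_{p\neq0}(e^{\beta(p^2-\lambda)}-1)^{-1}\approx |\Lambda|\,\zeta(3/2)(4\pi\beta)^{-3/2}=N(\beta_{\mathrm{c}}/\beta)^{3/2}$ exceeds $N$ by an amount of order $N$, while the fluctuations of $S$ under $P_*$ are only of order $\sqrt{\var_{P_*}(S)}\lesssim L^2/\beta\sim N^{2/3}$; hence $\{S\le N\}$ is a large-deviation event and $P_*(S\le N)$ is exponentially small, so your bound degenerates. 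Your proposed "short case analysis" does not touch this regime: the problematic case is not the near-critical finite-size condensate but the genuinely supercritical temperatures, where no tilting freedom remains (the exponential-family structure forces the parameter $\lambda$ in $P_*$). Even strictly below $T_{\mathrm{c}}$, routing the normalization through Lemma~\ref{lem:freeenergy} costs a factor $(1+N)^{-1}$ in the lower bound for $P_*(S\le N)$, which would inflate the variance bound by $N$ unless you instead argue by Chebyshev, and that requires $\beta\ge\beta_{\mathrm{c}}(1-\eta)^{-1}(1+\mathrm{const.})$. A secondary, fixable point: bounding only $\bar n_p^*\le(\beta(1-\eta)p^2)^{-1}$ does not control $\sum_{p\neq0}\bar n_p^*(1+\bar n_p^*)$, since the linear part would give the divergent sum $\sum_k|k|^{-2}$; you need $\bar n_p^*(1+\bar n_p^*)\le \mathrm{const.}\,(\beta(1-\eta)p^2)^{-2}$, which holds because $e^x/(e^x-1)^2\le \mathrm{const.}\,x^{-2}$ for $x>0$.

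The paper takes a different and temperature-uniform route: after the same first step $\var_\lambda(n_0)=\var_\lambda\bigl(\sum_{p\neq0}n_p\bigr)$, it invokes S\"ut\H{o}'s correlation inequality, which states that in the canonical ensemble of the ideal gas $\langle n_pn_q\rangle_\lambda-\langle n_p\rangle_\lambda\langle n_q\rangle_\lambda<0$ for $p\neq q$. This discards all off-diagonal covariances, leaving $\sum_{p\neq0}\langle n_p^2\rangle_\lambda$, which is then compared mode by mode to the grand canonical expectation via Lemma~\ref{lem:cvsgcexpectations} and summed explicitly. If you want to salvage your conditioning picture, you would have to show that the conditioned measure $P_*(\,\cdot\mid S\le N)$ is negatively associated and then still compare single-mode second moments to grand canonical ones --- at which point you have essentially reconstructed the paper's argument.
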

\begin{proof}
	The canonical Gibbs state has exactly $N$ particles. This allows us to conclude that the particle number fluctuations of the condensate and those of the thermal cloud are equal: $\textbf{Var}_{\lambda}\left( n_0 \right) = \textbf{Var}_{\lambda} \left(\sum_{p \neq 0} n_p \right)$. 
	From \cite[Theorem~(ii)]{Suto} we know that the correlation inequality $ \langle n_p n_q \rangle_{\lambda} - \langle n_p \rangle_{\lambda} \langle n_q \rangle_{\lambda} < 0$ holds for the canonical Gibbs state of the ideal gas if $p \neq q$. Using this result, we estimate
	\begin{align}
	\textbf{Var}_{\lambda} \left( \sum_{p \neq 0} n_p  \right) = \sum_{p,q \neq 0, p \neq q} \left( \langle n_p n_q \rangle_{\lambda} - \langle n_p \rangle_{\lambda} \langle n_q \rangle_{\lambda} \right) + \sum_{p \neq 0} \left( \langle n_p^2 \rangle_{\lambda} - \langle n_p \rangle_{\lambda}^2 \right) \leq \sum_{p \neq 0} \langle n_p^2 \rangle_{\lambda} \lesssim \sum_{p \neq 0}  \langle n_p^2 \rangle_{\lambda,\mathrm{gc}}. \label{eq:1pdmcondensate5} 
	\end{align} 
	By $\langle \cdot \rangle_{\lambda,\mathrm{gc}}$ we denote the expectation w.r.t. to the grand canonical ensemble. The last inequality follows from Lemma~\ref{lem:cvsgcexpectations}. A straightforward computation shows
	\begin{equation}
	\langle n_p^2 \rangle_{\lambda,\mathrm{gc}} = \frac{1+e^{\beta \left( p^2 - \mu(\lambda) \right)}}{\left( e^{\beta \left( p^2 - \mu(\lambda) \right)} - 1 \right)^2}.
	\label{eq:1pdmcondensate5b} 
	\end{equation}
	From Lemma~\ref{lem:boundchemicalpotential} we know that $p^2 - \mu(\lambda) \geq (1-\eta) p^2$ as well as that $p^2 - \mu(\lambda) \leq p^2 - \mu_0$.
	We therefore have
	\begin{equation}
		\sum_{p \neq 0} \frac{1+e^{\beta \left( p^2 - \mu(\lambda) \right)}}{\left( e^{\beta \left( p^2 - \mu(\lambda) \right)} - 1 \right)^2} \leq \sum_{p \neq 0} \frac{1+e^{\beta \left( p^2 - \mu_0 \right)}}{\left( e^{\beta (1-\eta) p^2} - 1 \right)^2} \lesssim \frac{L^4}{(1-\eta)^2 \beta^2} \sum_{p \in \mathbb{Z}^3 \backslash \{ 0 \}} \frac{1}{p^4}. 
	\end{equation}
	Together with \eqref{eq:1pdmcondensate5} and \eqref{eq:1pdmcondensate5b}, this proves the claim.
\end{proof}
We use Lemma~\ref{lem:variance} to bound the second term on the right-hand side of \eqref{eq:1pdmcondensate4}. The choice $\lambda = (\frac{2 \pi}{L})^2/2$ implies together with $a_N \lesssim N/L$ the bound
\begin{equation}
	\langle \Phi | \gamma | \Phi \rangle \geq N_0 - \text{const. }  N \left( c_{\ell}(N) + \delta(N) \right).
	\label{eq:1pdmcondensate10}
\end{equation}
We highlight that the right-hand side of \eqref{eq:1pdmcondensate10} is uniform in $\beta \varrho^{2/3}$. 

Let
\begin{equation}
P_{\Phi} = | \Phi \rangle\langle  \Phi | \quad \text{and} \quad \widetilde{P} = \mathds{1}\left( 0 < -\Delta < p_c^2 \right) = P - P_{\Phi}.
\label{eq:offdiag1}
\end{equation}
Our next goal is to derive a bound for $\Vert P (\gamma - \gamma_0) P \Vert_1$. When we write the trace in terms of the eigenfunctions of the Laplacian and use \eqref{eq:asymptotics1pdm11d} as well as \eqref{eq:1pdmcondensate10}, we see that
\begin{align}
N = \trs \gamma  &= \langle \Phi | \gamma | \Phi \rangle + \trs \widetilde{P} \gamma \widetilde{P} + \trs Q \gamma_0 Q + \trs Q (\gamma - \gamma_0) Q \label{eq:offdiag3} \\
&\geq N_0 - N \left( c_{\ell}(N) + \delta_N \right) + \trs \widetilde{P} \gamma \widetilde{P} + \trs Q \gamma_0 Q - N \left( \widetilde{c}_{\ell}(\beta,N,L) + \delta(N) \right)^{1/2}. \nonumber
\end{align}
With Lemma~\ref{lem:cvsgcexpectations} in the Appendix and Lemma~\ref{lem:latticesum1}, we show that
\begin{equation}
\trs \widetilde{P} \gamma_0 \widetilde{P} \lesssim \frac{L^2}{\beta} + \frac{ L^3 p_c}{\beta}
\label{eq:offdiag4}
\end{equation}
holds. Since $\gamma_0$ is diagonal in the momentum basis this implies
\begin{equation}
\trs Q \gamma_0 Q \geq N_{\mathrm{th}} - \text{const. } \left( \frac{L^2}{\beta} + \frac{L^3 p_c}{\beta} \right).
\label{eq:offdiag5}
\end{equation}
Next, we insert this inequality into \eqref{eq:offdiag3} and find
\begin{equation}
\trs \widetilde{P} \gamma \widetilde{P}  \leq N \left( c_{\ell}(N) + \delta(N) + \left( \widetilde{c}_{\ell}(\beta,N,L) + \delta(N) \right)^{1/2} \right) + \text{const. } \left( \frac{L^2}{\beta} + \frac{L^3 p_c}{\beta} \right). \label{eq:offdiag6}
\end{equation}
When we insert \eqref{eq:offdiag4}, \eqref{eq:offdiag5} and \eqref{eq:offdiag6} in the first line of \eqref{eq:offdiag3} in order to obtain a lower bound for the expression on the right-hand side, we find
\begin{equation}
\langle \Phi | \gamma | \Phi \rangle \leq N_0 + N \left( c_{\ell}(N) + \delta_N + \left( \widetilde{c}_{\ell}(\beta,N,L) + \delta(N) \right)^{1/2} \right) + \text{const. } \left( \frac{L^2}{\beta} + \frac{L^3 p_c}{\beta} \right). \label{eq:offdiag7}
\end{equation}
Together with the lower bound on the same quantity \eqref{eq:1pdmcondensate10}, this implies the bound
\begin{equation}
\Vert P_{\Phi} (\gamma - \gamma_0) P_{\Phi} \Vert_1 = | \langle \Phi | \gamma - \gamma_0 | \Phi \rangle | \leq N \left( c_{\ell}(N) + \delta_N + \left( \widetilde{c}_{\ell}(\beta,N,L) + \delta(N) \right)^{1/2} \right) + \text{const. } \left( \frac{L^2}{\beta} + \frac{L^3 p_c}{\beta} \right). 
\label{eq:offdiag8}
\end{equation}
To obtain a bound for the term we are interested in, that is, for $\Vert P \left( \gamma - \gamma_0 \right) P \Vert_1$, we write
\begin{equation}
\Vert P \left( \gamma - \gamma_0 \right) P \Vert_1 \leq \Vert P_{\Phi} \left( \gamma - \gamma_0 \right) P_{\Phi} \Vert_1 + 2 \Vert P_{\Phi} \left( \gamma - \gamma_0 \right) \widetilde{P} \Vert_1 + \Vert \widetilde{P} \left( \gamma - \gamma_0 \right) \widetilde{P} \Vert_1.
\label{eq:offdiag9}
\end{equation}
It remains to give a bound on the second term on the right-hand side. To that end, we use $P_{\Phi} \gamma_0 \widetilde{P} = 0$ and estimate
\begin{equation}
\Vert P_{\Phi} \left( \gamma - \gamma_0 \right) \widetilde{P} \Vert_1 \leq \Vert P_{\Phi} \Vert_1 \Vert \gamma^{1/2} \Vert \Vert \gamma^{1/2} \widetilde{P} \Vert \leq N^{1/2} \Vert \widetilde{P} \gamma \widetilde{P} \Vert^{1/2} \leq N^{1/2} \Vert \widetilde{P} \gamma \widetilde{P} \Vert_1^{1/2}.
\label{eq:offdiag10}
\end{equation}
Putting Eqs.~\eqref{eq:offdiag4}, \eqref{eq:offdiag6}, \eqref{eq:offdiag8}, \eqref{eq:offdiag9} and \eqref{eq:offdiag10} together, we finally obtain 
\begin{equation}
\Vert P \left( \gamma - \gamma_0 \right) P \Vert_1 \lesssim N^{1/2} \left( N \left( c_{\ell}(N) + \widetilde{c}_{\ell}(\beta,N,L)^{1/2} + \delta(N)^{1/2} \right) + \frac{N^{2/3}}{\beta \varrho^{2/3}} + \frac{N  \beta^{1/2} p_c}{\left( \beta \varrho^{2/3} \right)^{3/2}} \right)^{1/2}.
\label{eq:offdiag11}
\end{equation}
This is the bound for the low momentum block of the 1-pdm of $\Gamma$ we were looking for. In combination with the choice for $p_{\mathrm{c}}$ in \eqref{eq:pc}, it implies that $\Vert P \left( \gamma - \gamma_0 \right) P \Vert_1 $ is much smaller than $N$ as long as as long as $N^{-2/3} ( \beta \varrho^{2/3} )^{5/2} \ll 1$. It remains to estimate the off-diagonal contributions and to discuss the uniformity in the temperature.
%%%%%%%%%%%%%%%%%%%%%%%%%%%%%%%%%%%%%%%%%%%%%%%%%%%%%
\subsection{The off-diagonal of the one-particle density matrix and the final estimate}
%%%%%%%%%%%%%%%%%%%%%%%%%%%%%%%%%%%%%%%%%%%%%%%%%%%%%
In this section we are going to control the off-diagonal parts of $\gamma$ which will allow us to give the final estimate. Our analysis follows the lines of a similar analysis in \cite[Sec.~4.3]{me}. We write
\begin{equation}
	\Vert \gamma - \gamma_0 \Vert_1 \leq \Vert P \left( \gamma - \gamma_0 \right) P \Vert_1 + 2 \Vert P \left( \gamma - \gamma_0 \right) Q \Vert_1 + \Vert Q \left( \gamma - \gamma_0 \right) Q \Vert_1
	\label{eq:offdiag2}
\end{equation}
and estimate the right-hand side term by term. A bound for the first term on the right-hand side was given in \eqref{eq:offdiag11}. From \eqref{eq:asymptotics1pdm11d} we know that the last term is bounded by $N (\widetilde{c}_{\ell}(\beta,N,L) + \delta(N))^{1/2}$. 

To derive a bound on the second term on the right-hand side of \eqref{eq:offdiag2}, we use $P \gamma_0 Q = 0$ and write
\begin{equation}
	\Vert P \left( \gamma - \gamma_0 \right) Q \Vert_1  = \Vert P \gamma Q \Vert_1 \leq \Vert P_{\Phi} P \gamma Q \Vert_1 + \Vert \left(1 - P_{\Phi} \right) P \gamma Q \Vert_1.
	\label{eq:offdiag12}
\end{equation}
We estimate the first term on the right-hand side by
\begin{equation}
	\Vert P_{\Phi} P \gamma Q \Vert_1 \leq \Vert P \gamma Q \Vert \leq \Vert \gamma^{1/2} \Vert \ \Vert \gamma^{1/2} Q \Vert \leq N^{1/2} \Vert Q \gamma Q \Vert^{1/2} \leq N^{1/2} \left( \Vert Q (\gamma - \gamma_0) Q \Vert^{1/2} + \Vert Q \gamma_0 Q \Vert \right)^{1/2}. \label{eq:offdiag13}
\end{equation}
The first term in the bracket on the right-hand side of \eqref{eq:offdiag13} can be estimated by its trace norm which can be bounded with the help of \eqref{eq:asymptotics1pdm11d}. To bound the second term, we invoke Lemma~\ref{lem:cvsgcexpectations} in the appendix to see that it is bounded from above by a constant times $(\beta (p_{\mathrm{c}}^2 - \mu_0) )^{-1}$. Putting these two bounds together, we therefore have
\begin{equation}
	\Vert P_{\Phi} P \gamma Q \Vert_1 \lesssim N \left(  \widetilde{c}_{\ell}(\beta,N,L) + \delta(N) \right)^{1/4} + \left( \frac{N}{ \beta \left( p_{\mathrm{c}}^2 - \mu_0 \right) } \right)^{1/2}.
	\label{eq:offdiag13b}
\end{equation} 
To bound the second term on the right-hand side of \eqref{eq:offdiag12}, we write
\begin{align}
	\left\Vert \left(1 - P_{\Phi} \right) P \gamma Q  \right\Vert_1 \leq \left\Vert \left(1 - P_{\Phi} \right) P \gamma^{1/2} \right\Vert_2 \Vert \gamma^{1/2} Q \Vert_2 &\leq N^{1/2} \left( \trs \left(1 - P_{\Phi} \right) P \gamma P \right)^{1/2} \label{eq:offdiag14} \\
	&\leq N^{1/2} \left( \Vert P (\gamma - \gamma_0) P \Vert_1 + \Vert \widetilde{P} \gamma_0 \widetilde{P} \Vert_1 \right)^{1/2}. \nonumber
\end{align}	
The first term in the bracket on the right-hand side can be bounded with \eqref{eq:offdiag11}, the second with \eqref{eq:offdiag4}. We find
\begin{align}
	\left\Vert \left(1 - P_{\Phi} \right) P \gamma Q  \right\Vert_1 \lesssim N \left( c_{\ell}(N)^{1/4} + \widetilde{c}_{\ell}(\beta,N,L)^{1/8} + \delta(N)^{1/8} \right) &+ N^{3/4} \left( \frac{L^2}{\beta} + \frac{L^3 p_{\mathrm{c}}}{\beta} \right)^{1/4} \label{eq:offdiag14c} \\
	&+ N^{1/2} \left( \frac{L^2}{\beta} + \frac{L^3 p_{\mathrm{c}}}{\beta} \right)^{1/2}. \nonumber
\end{align}
By combining Eqs.~\eqref{eq:offdiag12}, \eqref{eq:offdiag13b} and \eqref{eq:offdiag14c}, we estimate the off-diagonal contribution to the 1-pdm by
\begin{align}
	\Vert P \left( \gamma - \gamma_0 \right) Q \Vert_1 \lesssim& N \left( \left( c_{\ell}(N)^{1/4} + \widetilde{c}_{\ell}(\beta,N,L)^{1/8} + \delta(N) \right)^{1/8} \right) + N^{3/4} \left( \frac{N^{2/3}}{\beta \varrho^{2/3}} + \frac{N \beta^{1/2} p_{\mathrm{c}}}{\left( \beta \varrho^{2/3} \right)^{3/2}} \right)^{1/4} \label{eq:offdiag14b} \\
	&+ N^{1/2} \left( \frac{N^{2/3}}{\beta \varrho^{2/3}} + \frac{N \beta^{1/2} p_{\mathrm{c}}}{\left( \beta \varrho^{2/3} \right)^{3/2}} \right)^{1/2} + \left( \frac{N}{ \beta \left( p_{\mathrm{c}}^2 - \mu_0 \right) } \right)^{1/2}. \nonumber
\end{align}
We now have everything together to state the final bound for $\gamma$. To that end, we combine \eqref{eq:asymptotics1pdm11d}, \eqref{eq:offdiag11} \eqref{eq:offdiag2} and \eqref{eq:offdiag14b}. Inserting also the explicit choice for $p_{\mathrm{c}}$ \eqref{eq:pc}, we find
\begin{equation}
	\Vert \gamma - \gamma_0 \Vert_1 \lesssim N \left( c_{\ell}(N)^{1/4} + \widetilde{c}_{\ell}(\beta,N,L)^{1/8} + \delta(N)^{1/8} \right).
	\label{eq:offdiag16}
\end{equation}
This proves the claimed asymptotics for the 1-pdm as long as $N^{-2/3} ( \beta \varrho^{2/3})^{5/2} \ll 1$. In the next section we discuss the uniformity in $\beta \varrho^{2/3}$. 
%%%%%%%%%%%%%%%%%%%%%%%%%%%%%%%%%%%%%%%%%%%%%%%%%%%%%
\subsection{Uniformity in the temperature}
%%%%%%%%%%%%%%%%%%%%%%%%%%%%%%%%%%%%%%%%%%%%%%%%%%%%%
In order to show the desired uniformity in the temperature, we have to consider the case where $\beta \varrho^{2/3}$ is so large that $\widetilde{c}_{\ell}(\beta,N,L)$ is no longer small, that is, $\beta \varrho^{2/3} \gtrsim N^{4/15}$. In this case we have $\varrho_0(\beta,N,L) \simeq \varrho$, and hence the contribution of the thermal cloud to the 1-pdm of the ideal gas is of lower order. In combination with \eqref{eq:1pdmcondensate10}, this will imply a similar statement for $\gamma$. In particular, it will allow us to conclude that $\Vert \gamma - \gamma_0 \Vert_1 \ll N$ uniformly in $\beta \varrho^{2/3}$.

Let $Q_{\Phi} = 1- P_{\Phi}$. From \eqref{eq:1pdmcondensate10} we know that
\begin{equation}
	N = \tr \gamma =  \langle \Phi | \gamma | \Phi \rangle + \trs Q_{\Phi} \gamma Q_{\Phi} \geq N_0 - N \left( c_{\ell}(N) + \delta(N) \right) + \trs Q_{\Phi} \gamma Q_{\Phi}. 
	\label{eq:offdiag17}
\end{equation}
An application of Lemma~\ref{lem:cvsgcexpectations} in the Appendix and one of Lemma~\ref{lem:latticesum1} tell us that
\begin{equation}
\trs \left[ Q_{\Phi} \gamma_0 \right] \lesssim \trs \left[ Q_{\Phi} \gamma_0^{\mathrm{gc}} \right] \lesssim \frac{ | \Lambda | }{\beta^{3/2}} + \frac{L^2}{\beta}.
\label{eq:asymptotics1pdm11e}
\end{equation}
Together with $N = N_0 + \tr Q_{\Phi} \gamma_0 Q_{\Phi}$ and \eqref{eq:offdiag17}, this bound implies 
\begin{align}
	\trs Q_{\Phi} \gamma Q_{\Phi} &\lesssim N \left( c_{\ell}(N) + \delta(N) \right) + \left( \frac{| \Lambda | }{\beta^{3/2}} + \frac{L^2}{\beta} \right) \quad \text{ as well as } \label{eq:asymptotics1pdm11f} \\
	\left\| Q_{\Phi} (\gamma - \gamma_0) Q_{\Phi} \right\|_1 &\lesssim N \left( c_{\ell}(N) + \delta(N) \right) + \left( \frac{| \Lambda | }{\beta^{3/2}} + \frac{L^2}{\beta} \right). \nonumber
\end{align}
With \eqref{eq:1pdmcondensate10}, \eqref{eq:asymptotics1pdm11e} and
\begin{equation}
	\langle \Phi | \gamma | \Phi \rangle \leq N = N_0 + \trs Q_{\Phi} \gamma_0 Q_{\Phi} \leq N_0 + \text{const. } \left( \frac{| \Lambda | }{\beta^{3/2}} + \frac{L^2}{\beta} \right)
	\label{eq:offdiag18}
\end{equation}
we additionally see that
\begin{equation}
	\left\| P \left( \gamma - \gamma_0 \right) P \right\|_1 = | \langle \Phi | \gamma - \gamma_0 | \Phi \rangle | \lesssim N \left( c_{\ell}(N) + \delta(N) \right) + \left( \frac{| \Lambda | }{\beta^{3/2}} + \frac{L^2}{\beta} \right).
	\label{eq:offdiag19}
\end{equation}
Using $P_{\Phi} \gamma_0 Q_{\Phi} = 0$, the off-diagonal contribution $\Vert P_{\Phi} (\gamma - \gamma_0) Q_{\Phi} \Vert_1$ can be estimated similarly to \eqref{eq:offdiag13} by
\begin{equation}
	\Vert P_{\Phi} \left( \gamma - \gamma_0 \right) Q_{\Phi} \Vert_1 = \Vert P_{\Phi} \gamma Q_{\Phi} \Vert_1 \leq N^{1/2} \left\| Q_{\Phi} \gamma Q_{\Phi} \right\|^{1/2} \lesssim N^{1/2} \left( N \left( c_{\ell}(N) + \delta(N) \right) + \frac{| \Lambda | }{\beta^{3/2}} + \frac{L^2}{\beta} \right)^{1/2}.
	\label{eq:offdiag20}
\end{equation} 
To obtain the second estimate, we additionally used \eqref{eq:asymptotics1pdm11f}. In combination Eqs.~\eqref{eq:asymptotics1pdm11f}, \eqref{eq:offdiag19} and \eqref{eq:offdiag20} imply for $\beta \varrho^{2/3} \gg 1$
\begin{equation}
	\left\| \gamma - \gamma_0 \right\|_1 \lesssim N \left[ \left( c_{\ell}(N) + \delta(N) \right)^{1/2} + \frac{ 1 }{ \left( \beta \varrho^{2/3} \right)^{3/4} } + \frac{N^{-1/6}}{\left( \beta \varrho^{2/3} \right)^{1/2}} \right].
	\label{eq:offdiag21}
\end{equation}
This bound needs to be combined with \eqref{eq:offdiag16} in order to obtain a bound that is uniform in $\beta \varrho^{2/3} \gtrsim 1$. The relevant terms depending on $\beta \varrho^{2/3}$ to consider are $N \widetilde{c}_{\ell}(\beta,N,L)^{1/8}$ in \eqref{eq:offdiag16} and $(\beta \varrho^{2/3})^{-3/4}$ in \eqref{eq:offdiag21}. We use \eqref{eq:offdiag16} as long as $\beta \varrho^{2/3} \leq N^{4/7269}$ and \eqref{eq:offdiag21} otherwise. The largest error term is $N c_{\ell}(N)^{1/4}$ in \eqref{eq:offdiag16}. We therefore have
\begin{equation}
	\left\| \gamma - \gamma_0 \right\|_1 \leq C_{\alpha} \left( \beta \varrho^{2/3} \right) N \left( N^{ -1/6884 + \alpha} + \delta(N)^{1/8} \right)
	\label{eq:offdiag22}
\end{equation}
with $\alpha>0$ and some function $s \mapsto C_{\alpha}(s)$ that is uniformly bounded on intervals $[d,\infty)$ with $d>0$. For $\alpha \to 0$, the function $C_{\alpha}$ blows up. Our bound is uniform in $0<a_v \lesssim 1$. This concludes the proof of the asymptotics of the 1-pdm of approximate minimizers of the Gibbs free energy functional, and therewith also the proof of Theorem~\ref{thm:periodic}. 
%%%%%%%%%%%%%%%%%%%%%%%%%%%%%%%%%%%%%%%%%%%%%%%%%%%%%%%%%%%
\appendix
%%%%%%%%%%%%%%%%%%%%%%%%%%%%%%%%%%%%%%%%%%%%%%%%%%%%%%%%%%%
\section{Some properties of the ideal Bose gas}
\label{sec:appendix1}
%%%%%%%%%%%%%%%%%%%%%%%%%%%%%%%%%%%%%%%%%%%%%%%%%%%%%%%%%%%
In this appendix we collect three Lemmas concerning properties of the ideal Bose gas, which have been proven in \cite[Appendix~A]{me} or follow from a statement there, and which we need in the main text. In particular, they concern the comparison of relevant quantities when computed in the canonical and in the grand canonical ensemble. Although these statements hold more generally, we state them here only for the ideal Bose gas on the torus $\Lambda$.

As in the main text we denote by $F_0(\beta,N,L,\lambda)$ the canonical free energy of the ideal gas related to the Hamiltonian \eqref{eq:1pdmcondensate1a} with $v=0$ and by $F_0^{\mathrm{gc}}(\beta,\mu,L,\lambda)$ its grand canonical analogue. We recall that $\lambda \geq 0$ denotes the shift of the lowest eigenvalue of the Laplacian on $\Lambda$. Similarly, $\langle \cdot \rangle_{N}$ and $\langle \cdot \rangle_{\mathrm{gc},\mu}$ denote the expectations and $N_{0}$ and $N_{0}^{\mathrm{gc}}$ the expected number of particles in the condensate in the two ensembles (for simplicity we have suppressed the $\lambda$-dependence here). The expected number of particles in the grand canonical ensemble is denoted by $\overline{N}(\mu)$ and $\gamma_0$/$\gamma_0^{\mathrm{gc}}$ is the 1-pdm of the canonical/grand canonical ideal gas (which depend on $\lambda$). The following three statements hold.
\begin{lemma}
	\label{lem:freeenergy}
	Assume $\mu$ is such that $\overline{N} = N \in \mathbb{N}$. Then
	\begin{equation}
	F_0(\beta,N,L,\lambda) \geq F_0^{\mathrm{gc}}(\beta,\mu,L,\lambda) \geq F_0(\beta,N,L,\lambda) - \tfrac{1}{\beta} \left( \ln(1+\overline{N}) + 1 \right).
	\label{eq:idealgas15}
	\end{equation}
\end{lemma}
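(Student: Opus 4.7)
The plan is to compare the two ensembles through the expansion of the grand canonical partition function
\[
Z^{\mathrm{gc}}(\beta,\mu,L,\lambda) = \sum_{M \geq 0} e^{\beta \mu M}\, Z_M(\beta,L,\lambda),
\]
and the associated probability distribution $P_M := e^{\beta \mu M} Z_M / Z^{\mathrm{gc}}$ on $\mathbb{N}_0$, which by the choice of $\mu$ has mean $\sum_M M P_M = \overline{N} = N$.

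\textbf{Upper bound.} I would view the canonical Gibbs state $\Gamma_N^{\mathrm{G},0}$ as a state on Fock space, supported on the $N$-particle sector, on which $\mathbb{N}$ acts as the scalar $N$. Inserting it as a trial state into the grand canonical Gibbs variational principle for the functional $\Gamma \mapsto \tr[(H_0^{\lambda} - \mu \mathbb{N})\Gamma] - \tfrac{1}{\beta} S(\Gamma)$ immediately yields
\[
F_0^{\mathrm{gc}}(\beta,\mu,L,\lambda) - \mu \overline{N} = -\tfrac{1}{\beta}\ln Z^{\mathrm{gc}} \leq F_0(\beta,N,L,\lambda) - \mu N,
\]
which gives the first inequality after using $\overline{N}=N$.

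\textbf{Lower bound.} Taking the logarithm of $P_M\, Z^{\mathrm{gc}} = e^{\beta\mu M} Z_M$, multiplying by $P_M$ and summing over $M$ produces the identity
\[
F_0^{\mathrm{gc}}(\beta,\mu,L,\lambda) = \sum_{M \geq 0} P_M\, F_0(\beta,M,L,\lambda) + \tfrac{1}{\beta}\sum_{M \geq 0} P_M \ln P_M.
\]
I would then invoke two standard facts. First, among probability distributions on $\mathbb{N}_0$ with mean $\overline{N}$, the Shannon entropy is maximized by the geometric distribution, which yields $-\sum_M P_M \ln P_M \leq (1+\overline{N})\ln(1+\overline{N}) - \overline{N} \ln \overline{N} \leq \ln(1+\overline{N}) + 1$. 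Second, the map $M \mapsto F_0(\beta,M,L,\lambda)$ is convex on $\mathbb{N}_0$, so Jensen's inequality applied to its piecewise-linear extension gives $\sum_M P_M F_0(\beta,M,L,\lambda) \geq F_0(\beta,\overline{N},L,\lambda) = F_0(\beta,N,L,\lambda)$. Substituting both estimates into the identity produces the desired lower bound.

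\textbf{Main obstacle.} The delicate point is the convexity of $F_0(\beta,\cdot,L,\lambda)$, which is equivalent to log-concavity of $M \mapsto Z_M(\beta,L,\lambda)$. I would obtain this from the fact that under the grand canonical measure, the total particle number $M = \sum_p n_p$ is distributed as a sum of independent geometric random variables, one per momentum mode, and then appeal to the discrete analogue of the Pr\'ekopa--Leindler theorem (Keilson, Hoggar): convolutions of log-concave sequences on $\mathbb{N}_0$ remain log-concave. Once this structural fact is granted, the remaining work is the elementary identity and entropy/Jensen bounds displayed above.
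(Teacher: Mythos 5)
Your argument is correct, and it is essentially self-contained, which is more than the paper itself offers: the paper disposes of this lemma with a one-line citation to Corollary~A.1 of \cite{me}, so there is no in-text proof to compare against. Both your route and the cited one rest on the same structural fact, namely log-concavity of $M \mapsto Z_M(\beta,L,\lambda)$ (equivalently, convexity of the canonical free energy in the particle number). Your packaging of the counting error via the identity
$F_0^{\mathrm{gc}} = \sum_M P_M F_0(\beta,M,L,\lambda) + \beta^{-1}\sum_M P_M \ln P_M$
together with the maximum-entropy bound for distributions on $\mathbb{N}_0$ of fixed mean is clean and gives exactly the constant $\ln(1+\overline{N})+1$ in the statement; the first inequality via the grand canonical variational principle (or, equivalently, keeping only the $M=N$ term in $Z^{\mathrm{gc}}=\sum_M e^{\beta\mu M}Z_M$) is standard. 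The Jensen step is legitimate because $\overline{N}=N$ is an integer, so the piecewise-linear convex extension of $F_0(\beta,\cdot,L,\lambda)$ agrees with $F_0$ at the point where it is evaluated.

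Two small points deserve explicit care if you were to write this out in full. First, the Keilson--Hoggar preservation of log-concavity under convolution is a statement about finitely many factors; since $M=\sum_p n_p$ involves infinitely many independent geometric modes, you need to pass to the limit of finite truncations and use that pointwise limits of positive log-concave sequences are log-concave (routine, but it should be said). Second, having established log-concavity of $P_M$ for one admissible $\mu$, you should note that $Z_M = e^{-\beta\mu M}P_M Z^{\mathrm{gc}}$ differs from $P_M$ by a log-linear factor, so $Z_M$ itself is log-concave and $F_0(\beta,\cdot,L,\lambda)$ is convex independently of $\mu$. With these remarks your proof is complete.
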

\begin{proof}
	The proof follows from \cite[Corollary~A.1]{me}.
\end{proof}
\begin{lemma}
	\label{lem:cvsgcexpectations}
	Assume $\mu$ is such that $\overline{N} = N \in \mathbb{N}$ and let $f : \mathbb{N}_0 \mapsto \mathbb{R}$ be a nonnegative and nondecreasing function. Then
	\begin{equation}
		\left\langle f\left( a_p^* a_p \right) \right\rangle_N \leq \frac{40}{1.8} \left\langle f \left( a_p^* a_p \right) \right\rangle_{\mathrm{gc},\mu}
	\end{equation}
	holds for all $p \in \frac{2 \pi}{L} \mathbb{Z}^3$.
\end{lemma}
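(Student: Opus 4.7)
The plan is to reduce the bound on expectations of $f(n_p)$ to bounds on tail probabilities of $n_p$, to express grand canonical tails as mixtures of canonical tails, and to exploit concentration of the total particle number in the grand canonical ensemble together with monotonicity in the particle number. Throughout I write $n_p = a_p^* a_p$.

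First, since $f : \mathbb{N}_0 \to \mathbb{R}$ is nonnegative and nondecreasing, I would use the layer-cake representation
\begin{equation}
    f(n_p) = f(0) + \sum_{k=1}^{\infty} \bigl(f(k) - f(k-1)\bigr)\,\mathbf{1}(n_p \geq k),
\end{equation}
with all the increments $f(k)-f(k-1) \geq 0$ and $f(0) \geq 0$. Taking expectations in both ensembles, it is therefore enough to establish a bound
\begin{equation}
    \left\langle \mathbf{1}(n_p \geq k) \right\rangle_N \;=\; P_N(n_p \geq k) \;\leq\; C\,P_{\mathrm{gc},\mu}(n_p \geq k) \;=\; C\left\langle \mathbf{1}(n_p \geq k)\right\rangle_{\mathrm{gc},\mu}
\end{equation}
uniformly in $k \geq 1$, with $C = 40/1.8$, since then both sums can be estimated term by term and the $f(0)$-contributions combine trivially thanks to $C \geq 1$.

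To prove the tail bound I would view the grand canonical state as a classical mixture of canonical ones at different particle numbers. Writing $P_{\mathrm{gc}}(M)$ for the probability of finding exactly $M$ particles, one has
\begin{equation}
    P_{\mathrm{gc},\mu}(n_p \geq k) \;=\; \sum_{M \geq k} P_{\mathrm{gc}}(M)\, P_M(n_p \geq k).
\end{equation}
The key structural input is a stochastic monotonicity statement: the map $M \mapsto P_M(n_p \geq k)$ is nondecreasing for fixed $k$ and fixed $p$. Bosonically this is very natural (adding a particle to an ideal Bose gas cannot decrease the occupation of any mode), and I would prove it by a coupling/rearrangement argument applied to the explicit canonical weights $P_M(n_p = j) = e^{-\beta j \epsilon_p} Z^{(p)}_{M-j}/Z_M$, where $Z^{(p)}_{M-j}$ is the canonical partition function of the remaining modes with $M-j$ particles; equivalently, one checks an FKG-type inequality stemming from the log-concavity of the sequence $M \mapsto Z_M^{(p)}$. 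With monotonicity in hand one restricts the sum to $M \geq N$ and obtains
\begin{equation}
    P_{\mathrm{gc},\mu}(n_p \geq k) \;\geq\; P_N(n_p \geq k)\cdot P_{\mathrm{gc},\mu}(N_{\mathrm{tot}} \geq N).
\end{equation}

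The final ingredient is an unconditional lower bound $P_{\mathrm{gc},\mu}(N_{\mathrm{tot}} \geq N) \geq 1.8/40$ for the grand canonical total particle-number distribution, when $\mu$ is calibrated so that $\overline{N}(\mu) = N$. I would obtain this by a Chebyshev-type estimate combined with an explicit bound on the variance of $N_{\mathrm{tot}}$ in terms of its mean (using $\mathrm{Var}_{\mathrm{gc},\mu}(N_{\mathrm{tot}}) = \sum_p \frac{e^{\beta(\epsilon_p-\mu)}}{(e^{\beta(\epsilon_p-\mu)}-1)^2}$, which behaves favorably) or, more efficiently, by the concrete Gaussian-type anticoncentration for the probability that a log-concave distribution exceeds its mean — this is precisely where the explicit constant $1.8/40$ enters. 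Combining these three ingredients yields the claim. The main obstacle is the stochastic monotonicity step: making the coupling/log-concavity argument quantitative enough to be compatible with the concentration estimate so that the combined constant comes out as $40/1.8$ rather than blowing up with $\beta$, $L$, or $p$, is the key technical point, and is exactly the lemma from \cite[Appendix~A]{me} that we are citing.
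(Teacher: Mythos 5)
Your outline is correct and coincides with the argument of the cited reference \cite[Proposition~A.2 and Remark~A.1]{me}, which is all the paper itself invokes here: write the grand canonical state as a mixture of canonical ones, use that $M\mapsto \langle f(n_p)\rangle_M$ (equivalently, via your layer-cake step, $M\mapsto P_M(n_p\ge k)$) is nondecreasing --- which follows from the identity $P_M(n_p\ge k)=e^{-\beta k e_p}\,Z_{M-k}/Z_M$ (with $e_p$ the one-particle energy) together with the log-concavity of the canonical partition functions $Z_M$ --- and then lower-bound $P_{\mathrm{gc},\mu}(\mathcal{N}_{\mathrm{tot}}\ge N)$ by $1.8/40$. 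The only caution is that a plain Chebyshev estimate gives upper rather than lower bounds on $P(\mathcal{N}_{\mathrm{tot}}\ge \overline{N})$, so for the last step you genuinely need your second suggestion (anticoncentration for the log-concave law of $\mathcal{N}_{\mathrm{tot}}$, a convolution of independent geometrics) or a Paley--Zygmund-type argument; the monotonicity step itself is clean and contributes no constant.
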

\begin{proof}
	The proof follows the proof of a similar statement for the densities of the system in \cite[Proposition~A.2]{me}. See also Remark~A.1 in the same reference.
\end{proof}
\begin{lemma}
	\label{lem:comparisoncondensatedensities}
	Denote $\widetilde{\gamma}_0 = \mathds{1}(-\Delta \neq 0) \gamma_0$ and the same for the grand canonical 1-pdm and choose $\mu$ such that $\overline{N} = N \in \mathbb{N}$ holds. Then
	\begin{equation}
		\left| N_0 - N_0^{\mathrm{gc}} \right| \leq \left\| \widetilde{\gamma}_0 - \widetilde{\gamma}_0^{\mathrm{gc}} \right\|_1 \leq  \left\| \gamma_0 - \gamma_0^{\mathrm{gc}} \right\|_1 \lesssim \left( \frac{N \ln(N) L^2}{\beta} \right)^{1/2} + \frac{\ln(N) L^2}{\beta}.
	\end{equation}
\end{lemma}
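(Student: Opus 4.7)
The plan is to handle the two embedding inequalities by direct algebraic manipulation and then to reduce the main quantitative estimate to a probabilistic comparison between the canonical and grand canonical ensembles of the ideal gas.

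For the first inequality I would use that the choice of $\mu$ enforces $\trs \gamma_0 = \trs \gamma_0^{\mathrm{gc}} = N$. The one-particle Hamiltonian $-\Delta + \lambda |\Phi\rangle\langle\Phi|$ is translation invariant (since $\Phi = |\Lambda|^{-1/2}$ is constant), so both Gibbs states commute with translations, and hence both $\gamma_0$ and $\gamma_0^{\mathrm{gc}}$ are diagonal in the plane-wave basis. Writing $\gamma_0 = N_0 | \Phi \rangle\langle\Phi| + \widetilde{\gamma}_0$ and similarly for $\gamma_0^{\mathrm{gc}}$, subtracting and taking the trace yields $N_0 - N_0^{\mathrm{gc}} = \trs(\widetilde{\gamma}_0^{\mathrm{gc}} - \widetilde{\gamma}_0)$, so the bound $|\trs A|\leq\|A\|_1$ gives the first inequality. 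For the second, $\widetilde{\gamma}_0 - \widetilde{\gamma}_0^{\mathrm{gc}} = (I-|\Phi\rangle\langle\Phi|)(\gamma_0 - \gamma_0^{\mathrm{gc}})(I-|\Phi\rangle\langle\Phi|)$, and compression by an orthogonal projection does not increase the trace norm.

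For the main inequality I would use the momentum-diagonal form
\begin{equation*}
\|\gamma_0 - \gamma_0^{\mathrm{gc}}\|_1 = \sum_p \bigl| \langle n_p \rangle_N - \pi_p \bigr|, \qquad \pi_p = \langle n_p \rangle_{\mathrm{gc},\mu},
\end{equation*}
and note that by particle-number conservation $\sum_p(\langle n_p \rangle_N - \pi_p) = 0$, so $\|\gamma_0 - \gamma_0^{\mathrm{gc}}\|_1 \leq 2 \sum_{p\neq 0}|\langle n_p\rangle_N - \pi_p|$. The crucial observation is that the canonical Gibbs measure for the ideal gas is the grand canonical product measure of independent geometric variables $(n_p)_p$ conditioned on $\{\sum_p n_p = N\}$. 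One can then write $\langle n_p\rangle_N - \pi_p$ as a ratio of partition functions (equivalently, a conditional versus unconditional expectation) and bound it by controlling the fluctuations of $n_{\mathrm{th}} = \sum_{p\neq 0} n_p$. In the grand canonical ensemble these fluctuations are $\mathrm{Var}_{\mathrm{gc}}(n_{\mathrm{th}}) = \sum_{p\neq 0}\pi_p(1+\pi_p)$, which by Lemma~\ref{lem:latticesum1} is $\lesssim L^4/\beta^2 + L^2 N/\beta$, since $\pi_p \lesssim L^2/\beta$ uniformly in $p \neq 0$ and $\sum_{p\neq 0}\pi_p \leq N$.

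The hard step is extracting the sharp rate $(N L^2\ln(N)/\beta)^{1/2} + L^2\ln(N)/\beta$ from these moment bounds. A naive Cauchy--Schwarz argument over all momenta fails because the number of modes is far too large. Instead I would split the thermal modes into a handful of low-momentum modes (say $|p|\lesssim 2\pi/L$), where $\pi_p$ and hence individual fluctuations are large but the modes are few, and a high-momentum bulk where each $\pi_p$ is small; the bulk contribution is estimated via a Berry--Esseen / local central limit bound comparing the conditional and unconditional distributions of $n_{\mathrm{th}}$ (yielding the $(NL^2\ln N/\beta)^{1/2}$ term), while the low-momentum block is handled mode by mode via Chernoff-type tail bounds on $n_{\mathrm{th}}$ (producing the $L^2\ln(N)/\beta$ term with the logarithm coming from a union bound on rare large deviations). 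The main obstacle is carrying out this splitting rigorously in a way that is uniform in $\beta\varrho^{2/3}\gtrsim 1$, especially in the condensed regime where $\pi_0$ is macroscopic and breaks the symmetric treatment of all momenta; this is precisely why the projection $\mathds{1}(-\Delta\neq 0)$ appears before one invokes independence of the modes.
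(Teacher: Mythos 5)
Your reductions are fine: the identity $N_0-N_0^{\mathrm{gc}}=\trs(\widetilde{\gamma}_0^{\mathrm{gc}}-\widetilde{\gamma}_0)$ from $\trs\gamma_0=\trs\gamma_0^{\mathrm{gc}}=N$, the bound $|\trs A|\le\|A\|_1$, the compression inequality for the second step, and the rewriting $\|\gamma_0-\gamma_0^{\mathrm{gc}}\|_1=\sum_p|\langle n_p\rangle_N-\pi_p|\le 2\sum_{p\neq0}|\langle n_p\rangle_N-\pi_p|$ are all correct. But the core of the lemma is the quantitative bound on $\sum_{p\neq0}|\langle n_p\rangle_N-\pi_p|$, and there you give only a programme (local CLT/Berry--Esseen for the bulk, Chernoff bounds for the few low modes) while explicitly deferring "the hard step" of extracting the rate $(NL^2\ln N/\beta)^{1/2}+L^2\ln N/\beta$. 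That step is precisely where the difficulty lies, and it is not clear your route lands there: the conditioning event $\{\sum_p n_p=N\}$ involves $n_0$, whose grand canonical distribution in the condensed regime is geometric with mean $\pi_0\sim N$, so the law of the total particle number is dominated by a single highly non-Gaussian mode and a naive local CLT for $\sum_p n_p$ fails; the projection $\mathds{1}(-\Delta\neq0)$ in the statement does not remove $n_0$ from the conditioning. Handling this uniformly in $\beta\varrho^{2/3}\gtrsim1$, and seeing why logarithms (rather than some other correction) appear, is exactly the content you would need to supply. As it stands the proposal is a plausible plan, not a proof.

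For comparison, the paper disposes of this lemma by citing \cite[Lemma~A.2]{me}, and the underlying mechanism is quite different from yours: one computes the relative entropy of the canonical with respect to the grand canonical Gibbs state of the ideal gas, which equals $\beta\bigl(F_0(\beta,N,L,\lambda)-F_0^{\mathrm{gc}}(\beta,\mu,L,\lambda)\bigr)$ and is therefore $\lesssim\ln N$ by Lemma~\ref{lem:freeenergy}; a quantitative Pinsker-type coercivity bound for the bosonic relative entropy (the ancestor of Lemma~\ref{lem:relativeentropy}) then converts this into
\begin{equation*}
\bigl\|\widetilde{\gamma}_0-\widetilde{\gamma}_0^{\mathrm{gc}}\bigr\|_1\lesssim\Bigl(\trs\bigl[\widetilde{\gamma}_0^{\mathrm{gc}}\bigl(1+\widetilde{\gamma}_0^{\mathrm{gc}}\bigr)\bigr]\ln N\Bigr)^{1/2}+\bigl(1+\|\widetilde{\gamma}_0^{\mathrm{gc}}\|\bigr)\ln N,
\end{equation*}
which with $\|\widetilde{\gamma}_0^{\mathrm{gc}}\|\lesssim L^2/\beta$ and $\trs\widetilde{\gamma}_0^{\mathrm{gc}}\le N$ gives exactly the stated rate (this is how \eqref{eq:asymptotics1pdm11c} is used in the main text); the $p=0$ mode is then recovered by your own trace identity. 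This explains both the origin of the $\ln N$ factors and the specific two-term structure of the bound. If you want to pursue the probabilistic route, you must either carry out the conditional-limit analysis with the anomalous $n_0$ fluctuations treated separately in both phases, or switch to the entropy argument above, which avoids the issue entirely.
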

\begin{proof}
	The proof follows from \cite[Lemma~A.2]{me}.
\end{proof}
%%%%%%%%%%%%%%%%%%%%%%%%%%%%%%%%%%%%%%%%%%%%%%%%%%%%%%%%%%%%%%%%%%%%%%%%%%%%%%%%%%%%%%
\textbf{Acknowledgments.} 
It is a pleasure to thank Jakob Yngvason for helpful discussions. Financial support by the European Research Council (ERC) under the European Union's Horizon 2020 research and innovation programme (grant agreement No 694227) is gratefully acknowledged. A. D. acknowledges funding from the European Union’s Horizon 2020 research and innovation programme under the Marie Sklodowska-Curie grant agreement No 836146.
%%%%%%%%%%%%%%%%%%%%%%%%%%%%%%%%%%%%%%%%%%%%%%%%%%%%%%%%%%%%%%%%%%%%%%%%%%%%%%%%%%%%%%%%%%
%%%%%%%%%%%%%%%%%%%%%%%%%%%%%%%%%%%%%%%%%%%%%%%%%%%%%%%%%%

\vspace{0.5cm}

(Andreas Deuchert) Institute of Science and Technology Austria (IST Austria)\\ Am Campus 1, 3400 Klosterneuburg, Austria \newline
\textit{present address}: Institute of Mathematics, University of Zurich, Winterthurerstrasse 190, 8057 Zurich, \newline Switzerland \\ 
E-mail address: \texttt{andreas.deuchert@math.uzh.ch}

(Robert Seiringer) Institute of Science and Technology Austria (IST Austria)\\ Am Campus 1, 3400 Klosterneuburg, Austria\\ E-mail address: \texttt{robert.seiringer@ist.ac.at}

\end{document}